\let\Algorithm\algorithm
\renewcommand\algorithm[1][]{\Algorithm[#1]\setstretch{1.0}}
\theoremstyle{definition}
\newtheorem{theorem}{Theorem}
\newtheorem{proposition}{Proposition}
\newtheorem{assumption}{Assumption}
\newtheorem{lemma}{Lemma}
\newtheorem{definition}{Definition}
\newtheorem{example}{Example}
\newtheorem{remark}{Remark}
\newcommand{\given}{\,|\,}
\newcommand{\Given}{\,\Bigm|\,}
\newcommand{\cke}{\mathrm{CKE}}
\newcommand{\cfi}{\mathrm{CFI}}
\newcommand{\cdi}{\mathrm{CDI}}
\newcommand{\cai}{\mathrm{CAI}}
\newcommand{\cae}{\mathrm{CAE}}
\newcommand{\clr}{\mathrm{CLR}}
\newcommand{\cen}{\mathrm{cen}}
\newcommand{\mult}{\mathrm{mult}}
\newcommand{\unit}{\mathrm{unit}}
\newcommand{\gini}{\mathrm{Gini}}
\newcommand{\E}{\mathbb{E}}
\newcommand{\C}{\mathsf{C}}
\newcommand{\A}{\mathsf{A}}
\newcommand{\cov}{\mathrm{cov}}
\newcommand{\var}{\mathrm{var}}
\newcommand{\Var}{\mathrm{var}}
\newcommand{\ind}{\mathds{1}}
\newcommand{\RN}[1]{%
  (\textup{\uppercase\expandafter{\romannumeral#1}})%
}
\definecolor{LKred}{HTML}{d62728}
\definecolor{LKgreen}{HTML}{2ca02c}
\title{\Large Perturbation-based Effect Measures for Compositional Data}
\author{Anton Rask Lundborg and Niklas Pfister}
\date{\normalsize \today}
\affil{
  \normalsize{\textit{Department of Mathematical Sciences, University of Copenhagen, Denmark}} 
}
\begin{document}

\maketitle

\begin{abstract}
  \noindent
  Existing effect measures for compositional features are inadequate for many
  modern applications, for example, in microbiome research, since
  they display traits such as high-dimensionality and sparsity that can be
  poorly modelled with traditional parametric approaches.
  Further, assessing -- in an unbiased way -- how summary
  statistics of a composition (e.g., racial diversity) affect a response
  variable is not straightforward. We propose a framework based on
  hypothetical data perturbations which defines interpretable statistical
  functionals on the compositions themselves, which we call average perturbation
  effects. These effects naturally account for confounding that biases
  frequently used marginal dependence analyses. We show how average perturbation
  effects can be estimated efficiently by deriving a perturbation-dependent
  reparametrization and applying semiparametric estimation techniques. We
  analyze the proposed estimators empirically on simulated and semi-synthetic
  data and demonstrate advantages over existing techniques on data from New York
  schools and microbiome data.

  \bigskip
  \noindent \textit{Keywords:} feature importance, causality,
   nonparametrics, semiparametrics
\end{abstract}

\section{Introduction}
\label{sec:intro}
Compositional data, that is, measurements consisting of parts of a whole, appear
in many scientific disciplines, for example, rock compositions in geology, green
house gases in the atmosphere, species compositions in ecology and racial or
gender distributions in social sciences. The defining property of such data is
that they only capture information relative to the whole. Depending on the
application, the absolute information either might not be of interest (e.g.,
when analyzing racial or gender distributions) or might not be easily accessible
with existing measuring techniques (e.g., in ecology). Mathematically, a
$d$-coordinate composition can -- without loss of information -- be normalized
such that it is represented as a point on the $(d-1)$-dimensional simplex
\begin{equation*}
  \Delta^{d-1}:=\left\{z=(z^1, \dots, z^d) \in [0,1]^d\mid
    \textstyle\sum_{j=1}^dz^j=1\right\}.
\end{equation*}

Modelling the relationship between a real-valued response $Y \in \mathbb{R}$ and
a composition $Z = (Z^1, \dots, Z^d) \in \Delta^{d-1}$ has a long history and
has, for example, been considered in the literature on `experimental design with
mixtures' \citep{claringbold1955use, cornell2002experiments} and 'log-ratio
based compositional data analysis' \citep{aitchison1982statistical}. We
summarize the historical development of these fields in
Section~\ref{sec:related_literature} (Sections beginning with `S' refer to the
supplement throughout the paper). A key aspect of this research is that it
attempts to achieve two separate goals simultaneously: (i) ensuring that the
model parameters are interpretable and (ii) providing parametric models that fit
to realistic data-generating processes with compositional covariates. More
recently, in the context of high-dimensional and sparse data, alternative
approaches \citep{knight2018best,cammarota2020gut,huang2022supervised} to
solving the second goal have been proposed that apply flexible machine learning
methods to model the relationship between $Y$ and $Z$ nonparametrically.
However, such approaches are at odds with the first goal of providing
interpretable effect measures and cannot be combined with the above-mentioned
approaches which heavily rely on parametric models to define such effects.

Our main contribution is to provide a unified framework for defining
interpretable effect measures for compositional features which we call
\emph{average perturbation effects} that can be combined with flexible
nonparametric estimation procedures.  Special cases of these average
perturbation effects have interpretations that are similar to some model-based
individual feature effect measures, e.g., Cox-directions \citep{cox1971note} and
log-contrast coefficients \citep{aitchison1984log}. Average perturbation
effects, however, are defined without reference to a specific regression model
and therefore do not rely on a correctly specified parametric model.
Importantly, this allows us to construct more flexible target quantities, two of
which we believe are particularly relevant for applications. First, effects
based on \emph{binary perturbations} are useful when summarizing the effect of a
discrete change, e.g.\ the effect of setting $Z^j$ to $0$ while rescaling
$Z^{-j} := (Z^1, \dots, Z^{j-1}, Z^{j+1}, \dots, Z^d)$ to the simplex. In
Section~\ref{sec:microbiome} we show empirically that the compositional features
with significant effects as measured by these binary perturbations coincide with
those deemed important by the standard log-contrast regression procedures used
in the microbiome literature. Second, effects based on \emph{directional
perturbations} summarize continuous changes in specific directions. This can
involve changing individual components of $Z$, but also allows for more complex
changes that are not tied to a specific coordinate of $Z$. For example, we can
define the effect of `diversifying' $Z$, i.e.\ moving closer to the center of
$\Delta^{d-1}$. Such diversifying effects have been considered previously
\citep{antonio2004effects} but are usually defined as the effect of marginally
changing a specific measure of diversity. However, as the effect of diversifying
can be confounded by other features of $Z$ such a marginal analysis can be
misleading and due to the simplex constraint, there is no immediate way to
correct for this confounding when estimating these effects. In contrast, our
perturbation-based framework can control for the remaining variation in $Z$
leading to unconfounded effect estimates with valid asymptotic inference.

Our proposal is intimately related to the growing literature on targeted
learning and semiparametric inference
\citep{chernozhukov2018double,berk2021assumption,vansteelandt2022assumption}. An
important aspect of our work is, however, to clarify the need for tailored
target estimands in the presence of compositional structure in the covariates.
Applying a nonparametric effect measure that is not aware of the sum-to-one
constraint of the compositional vector can be just as misleading as naively
estimating a non-compositional effect based on a parametric model (see
Section~\ref{sec:microbiome}). Beyond defining
average perturbation effects, we also provide several algorithm-agnostic
procedures to estimate the average perturbation effects. These procedures are
essentially well-known semiparametric learning algorithms applied after
appropriately transforming the compositional covariates. While these algorithms
require us to estimate nuisance functions relating to the distribution of $(Y,
Z)$, which are particularly delicate for compositional covariates, they only
require these estimates to be accurate and not interpretable. This allows us to
apply arbitrary regression methods as long as they are sufficiently predictive.

Average perturbation effects are defined by comparing the change in the expected
response of $Y$ under a hypothetical perturbation of the composition $Z$. To
motivate this, consider unconstrained covariates $X \in \mathbb{R}^d$ instead of
the composition $Z \in \Delta^{d-1}$. Suppose that there exists $\beta =
(\beta^1, \dots, \beta^d) \in \mathbb{R}^d$ such that for all $x = (x^1, \dots,
x^d) \in \mathbb{R}^d$, $f(x) := \E[Y \given X=x] = \sum_{j=1}^d \beta^j x^j$,
that is, the conditional expectation of $Y$ given $X=x$ is a linear function
of $x$. Then, one interpretation of the coefficient $\beta^j$ is that it
corresponds to the expected rate of change of the response in the $j$th
coordinate, or more formally,
$
  \partial_{x_j} f(x^1, \dots, x^d) = \beta^j.
$
We can rephrase this interpretation by considering a perturbation
$\psi:\mathbb{R}^d\times [0,1)\rightarrow\mathbb{R}^d$ that increases
the $j$th coordinate, that is, $\psi(x, \gamma)=(x^1, \dots, x^j + \gamma,
\dots, x^d)$. Then, for any $x \in \mathbb{R}^d$, we have
\[
  \partial_{x_j} f(x) = \lim_{\gamma \to 0}\frac{f(\psi(x, \gamma)) - f(\psi(x, 0))}{\gamma} = \partial_{\gamma} f(\psi(x, \gamma)) \bigm|_{\gamma = 0}.
\]
Therefore, the coefficient $\beta^j$ also corresponds to the initial rate of
change (initial referring to $\gamma=0$) of $f$ along the perturbation $\psi$.
Similar observations are possible regarding other interpretable effect measures
for unconstrained covariates and under more general nonlinear models.

In the case of compositional $Z$, we will need to use perturbations that are
tailored to the simplex structure of $Z$ instead of $\psi$ given above.
In Section~\ref{sec:toy_examples} we provide two helpful examples to illustrate
how conventional effect measures applied to compositional data can be misleading
and how our approach mitigates this.

A perturbation, formally defined in Section~\ref{sec:perturbations}, is a
function $\psi: \mathcal{D}\subseteq\Delta^{d-1}\times [0, \infty) \rightarrow
\Delta^{d-1}$ which for each point $z\in\Delta^{d-1}$ describes a trajectory
$\gamma\mapsto \psi(z, \gamma)$ along which the point $z$ should be changed. For
example, we could consider a binary perturbation
$\psi:\Delta^{d-1}\times\{0,1\}\rightarrow\Delta^{d-1}$ satisfying for all $z
\in\Delta^{d-1}$ that $\psi(z, 0)=z$, $\psi(z, 1)^j = 0$ and $\psi(z, 1)^{-j} =
\C(z^{-j})$, where $\C(\cdot)$ scales the inputs to have sum one. By defining
$f: z \mapsto \E[Y\given Z=z]$ as the regression function, we can then consider
contrasts of the form $f(\psi(z, 1))-f(\psi(z, 0))$, which capture the effect of
setting $z^1=0$ while the other coordinates are proportionally rescaled to
account for the change in $z^1$. The fact that $Z$ is compositional necessitates
that we fully specify what happens to all of $Z$ rather than only modifying
$Z^j$ as one would do in the usual unconstrained approach. By marginalizing over
$Z$ the perturbation-based approach leads to the average effect
$
  \E[f(\psi(Z, 1)) - f(\psi(Z, 0))].
$
In order to make the effect measure agnostic to the proportion of $Z$s that
already have $Z^j = 0$ in the population, we instead propose to
consider the effect measure
\begin{equation*}
  \lambda_{\psi}:= \frac{\E[f(\psi(Z, 1)) - f(\psi(Z, 0))]}{\mathbb{P}(Z \neq \psi(Z, 1))}.
\end{equation*}
The quantity $\lambda_\psi$ measures the expected change in $Y$ when setting $Z^j = 0$ for an
individual that does not already have $Z^j = 0$. An increase in the diversity of $Z$
can be expressed as a perturbation $\psi$ that pushes every point in
$\Delta^{d-1}$ to a more diverse point. Formally, using the center of the
simplex (i.e., $z_{\cen}=(1/d,\ldots,1/d)\in\Delta^{d-1}$) as the most diverse
point, we define for all $z\in\Delta^{d-1}$ and all $\gamma\in [0, \|z_{\cen} -
z\|_1]$ the perturbation
\begin{equation*} 
  \psi(z, \gamma)= z + \gamma \frac{z_{\cen}- z}{\|z_{\cen}-z\|_1}.
\end{equation*} 
We can then interpret $\psi(z, \epsilon)$ for a small $\epsilon>0$ as slightly
increasing the diversity of the point $z$. With this notion of an increase in
diversity,  we propose to measure the effect of increasing diversity by
\begin{equation*}
   \tau_\psi := \E\left[ \partial_{\gamma} f(\psi(Z, \gamma)) \big|_{\gamma = 0} \right],
\end{equation*}
which quantifies how on average the conditional expectation of $Y$ given $Z$ is
affected by small increases in diversity. A key strength of using perturbations
to define target parameters is that it makes them easy to communicate, interpret
and discuss with practitioners. From our experience, practitioners often have a
good understanding of what perturbation they are interested in and formalizing
them mathematically as above is straight-forward. Even though the target
parameters $\lambda_{\psi}$ and $\tau_{\psi}$ have a clear causal motivation, it
is important to stress that they are purely (observational) statistical
quantities. To connect them to causal effects requires standard causal
assumptions and potentially adjustments to the estimators, which we discuss in
Section~\ref{sec:causal_models}.

While the target parameters $\lambda_{\psi}$ and $\tau_{\psi}$ may seem simple
at first sight, they can be challenging to estimate directly as they depend on
estimating the regression function $z\mapsto \E[Y\given Z=z]$ which, in part due
to the simplex structure, is generally a complex functional. We therefore
propose an approach that reparametrizes the compositional vector
$Z\in\Delta^{d-1}$ via a bijection $\phi$ to a vector $(L,
W)\in\mathbb{R}\times\mathcal{W}$ in such a way that the perturbation only
changes $L$ in the new parametrization. Hence, if we are able to construct such
a reparametrization, then we have successfully reduced our problem to a setting
in which using any existing method to
estimate the effect of $L$ on $Y$ that adjusts for $W$ results in an unbiased
estimate of the desired effect. Given a perturbation $\psi$,
we provide a general recipe to construct the reparametrization $\phi$. We
furthermore propose several explicit perturbations (including their
corresponding reparametrizations) and suggest efficient estimators of the
perturbation-based target parameters using semiparametric theory based on the
$(L, W)$-reparametrization.

The paper is structured as follows. Our perturbation-based framework is formally
introduced and developed in Section~\ref{sec:perturbations}. We describe how to
construct perturbations and provide a table of pre-defined simplex effects in
Section~\ref{sec:simplex_perturbations}. In Section~\ref{sec:estimation} we
describe the semiparametric techniques used to estimate our proposed effects. In
Section~\ref{sec:numerical_experiments} we provide a semi-synthetic and two real
world applications of our proposal. The supplementary material contains
technical details on theory and algorithms along with additional examples and
experiments.

\subsection{Notation}\label{sec:notation}

We use superscripts to index coordinates of a vector throughout, that is, for $x
\in \mathbb{R}^d$ we write $x^j$ for the $j$th coordinate of $x$ and denote by
$e_j$ the $j$th canonical unit vector in $\mathbb{R}^d$. For all $n \in
\mathbb{N}$, we define $[n] := \{1, \dots, n\}$. For a real-valued function
$f:\mathcal{X}\rightarrow\mathbb{R}$ with $\mathcal{X}\subseteq\mathbb{R}^d$, we
denote by $\partial_{x^j}$ the partial derivative of $f$ with respect to the
$x^j$ coordinate and the gradient (vector of partial derivatives) of $f$ is
denoted by $\nabla f$. When differentiating a function of a single variable at a
boundary point of the domain, we only consider the one-sided limit. For $p \geq
1$ and $x \in \mathbb{R}^d$, we define $\|x\|_p := \left( \sum_{j=1}^d
\left|x^j\right|^p \right)^{1/p}$. Lastly, we define $\C(z):=z/\sum_{j=1}^d
z^j$, the \emph{(simplex) closure operator}. 

\section{Target parameters via perturbations}\label{sec:perturbations}

For a response $Y \in \mathbb{R}$ and a compositional predictor
$Z \in \mathcal{Z} \subseteq \Delta^{d-1}$, we seek to summarize the
effect of $Z$ on the expectation of $Y$ under a pre-specified change
of $Z$. We specify this change via perturbations.
\begin{definition}[Perturbations]
  \label{def:perturbation} 
  Let $\mathcal{D} \subseteq \mathcal{Z} \times [0, \infty)$ be such that for
  all $z \in \mathcal{Z}$ it holds that $\{z\} \times \{0\} \in \mathcal{D}$. We
  call a function $\psi = (\psi^1, \dots, \psi^d) : \mathcal{D} \to
  \Delta^{d-1}$ a \emph{perturbation} if it satisfies for all $z\in\mathcal{Z}$
  that $\psi(z, 0) = z$. For all $z \in \mathcal{Z}$ we define the
  \emph{$\gamma$-domain} and \emph{endpoint} of $\psi$ (if it exists) by 
  \[
    \mathcal{I}_{\psi}(z) := \{\gamma \in[0,\infty) \mid (z, \gamma) \in \mathcal{D}\} \quad\text{and}\quad E_\psi(z) := \lim_{\gamma \to \sup \mathcal{I}_\psi(z)} \psi(z, \gamma),
  \]
  respectively. We furthermore say that
  \begin{enumerate}[(a)]
  \item $\psi$ is a \emph{binary perturbation} if for all
    $z \in \mathcal{Z}$ it holds that
    $\mathcal{I}_{\psi}(z) = \{0, 1\}$, i.e.\
      $\mathcal{D}=\mathcal{Z}\times\{0, 1\}$, and
  \item $\psi$ is a \emph{directional perturbation} if for all $z \in
  \mathcal{Z}$ there exists $\epsilon > 0$ with $[0, \epsilon) \subseteq
  \mathcal{I}_{\psi}(z)$ and $\omega_\psi(z):= \partial_{\gamma} \psi(z, \gamma)
  \big|_{\gamma = 0} $ exists. We further define the \emph{direction} and
  \emph{speed} of the perturbation for all $z \in \mathcal{Z}$  by
  \[
    v_\psi(z) := \ind_{\{\omega_\psi(z) \neq 0\}}\frac{\omega_\psi(z)}{\|\omega_\psi(z)\|_1} \quad\text{and}\quad s_\psi(z) := \|\omega_\psi(z)\|_1,
  \]
  respectively. 
  \end{enumerate}
\end{definition}
Binary perturbations describe discrete changes that move each point to a specific
endpoint, that is, they capture the difference between $\gamma = 0$
(doing nothing) and $\gamma = 1$ (applying the perturbation). An
example of this is setting the first component of $Z$ to $0$.
Directional perturbations describe `local' changes to
$Z$, that is, differences in $\gamma = 0$ (doing nothing) and
$\gamma=\epsilon$ for some small $\epsilon > 0$ (moving slightly along
the perturbation path). We later focus on the derivative of the
perturbation at $\gamma = 0$ as this captures the rate of change of
directional perturbations at the initial point. As we only intend to
use this initial derivative, it suffices to consider
perturbations of the form
\begin{equation}
  \label{eq:lin_pert}
  \psi(z, \gamma) = z + \gamma s_\psi(z)v_\psi(z).
\end{equation}
We construct interpretable perturbations in Section~\ref{sec:deriv-iso}.
Choosing to have the
direction and speed parametrized by the $1$-norm is an arbitrary choice that
turns out to be convenient on the simplex. The concept of a perturbation
can be defined in greater generality than we do here but we stick to the simplex
for simplicity.

Using the two types of perturbations, we can define informative
summary statistics that capture the effect of the perturbations on the
response. We do this by using contrasts of the perturbation, similar
to how causal effects are defined via contrasts of interventions in
the causal inference literature. Formally, we define the following target
parameters associated with each type of perturbation.
\begin{definition}[Average perturbation effects]
  \label{def:avg_per_eff}
  Let $\mathcal{Z} \subseteq \Delta^{d-1}$, let $(Y, Z)\in\mathbb{R}\times
  \mathcal{Z}$ be random variables and define $f: z \mapsto \E[Y\given Z=z]$.
  Then, for a binary perturbation $\psi$, if $\mathbb{P}(Z\neq\psi(Z, 1))>0$ and
  $z \mapsto \psi(z, 1)$ preserves null sets of $Z$, we define the \emph{average
  (binary) perturbation effect} by
  \begin{equation*}
    \lambda_{\psi} := \frac{\E\left[f(\psi(Z, 1))\right]
      - \E\left[Y \right]}{\mathbb{P}(Z \neq \psi(Z, 1))}.
  \end{equation*}
  Furthermore, for a directional perturbation $\psi$, if for all $z \in
  \mathcal{Z}$ it holds that $f$ is differentiable on the simplex (see
  Section~\ref{sec:diff} for a precise definition)
  at $z$, we define the \emph{average (directional) perturbation effect} by
  \begin{equation*}
    \tau_\psi := \E\left[ \partial_{\gamma} f(
      \psi(Z, \gamma))\big|_{\gamma = 0} \right].
  \end{equation*}
\end{definition}
The conditions we impose on the perturbation and distribution of $Z$ are only to
ensure that the parameters are well-defined (see
Section~\ref{sec:well-defined_perturbations}). We showcase different potential
applications for both types of perturbation effect in the examples in
Section~\ref{sec:toy_examples} and the numerical experiments in
Section~\ref{sec:numerical_experiments}. For binary perturbations, the
denominator is included to scale the effect by the probability that the
perturbation does not change $Z$. This is not required but in many cases
provides a more meaningful parameter if one is interested in how the
perturbation affects points on the simplex that are not yet at their endpoint.
For instance, the denominator makes $\lambda_\psi$ invariant to modifications of
the distribution of $Z$ that only changes $\mathbb{P}(Z \neq \psi(Z, 1))$, i.e.\
the probability that $Z$ is not already perturbed. While both target parameters
have obvious similarities to causal effects, they are a priori only
(observational) statistical quantities. We discuss the assumptions required for
causal conclusions in Section~\ref{sec:causal_models}.

Direct estimation of both target parameters relies on estimating the regression
function $f: z\mapsto \E[Y\mid Z=z]$ and taking averages (see
Section~\ref{sec:estimation}). An alternative estimation procedure for the
effect of a binary perturbation $\lambda_\psi$ can be derived from noting
that, if $L= \ind_{\{Z = \psi(Z, 1)\}}$ and $W = \psi(Z, 1)$, then
\[
  \E[Y \given L = 1, W] = \frac{\E[Y L \given W]}{\E[L \given W]} =
  \frac{\E[f(Z) L \given W]}{\E[L \given W]} = f(W).
\]
Therefore, we can write
\begin{equation}
  \label{eq:simple_dape}
  \lambda_\psi = \frac{\E[\E[Y \given L=1, W] - Y]}{\mathbb{P}(L=0)},
\end{equation}
which is an expression more amenable to analysis by
semiparametric theory and thus allows us to
construct efficient estimators. If we are willing to
assume that $f(\psi(Z, 1)) - f(Z)$ is constant when $L=0$ and equal
to $\theta$, then
\[
  \lambda_\psi = \frac{\E[\{f(W) - f(Z)\}(1-L)]}{\mathbb{P}(L=0)} = \theta
\]
and furthermore
\[
  \E[Y\given Z] =f(Z) =  f(W)L + (f(Z) - f(W) + f(W))(1-L) = \theta L + \underbrace{f(W) - \theta}_{h(W)},
\]
so $\theta$ is the coefficient of $L$ in a partially linear model of $Y$ on $L$
and $W$.

It would be convenient to also phrase the average directional perturbation
effect $\tau_\psi$ as a quantity that is similarly amenable to semiparametric
analysis and permits us to impose a partially linear model. To motivate the
upcoming condition, we consider what happens when $Z$ takes values in
$\mathbb{R}^d$. In that case, we can write (with $\omega_\psi$ defined in
Definition~\ref{def:perturbation})
\[
  \partial_{\gamma} f(\psi(z, \gamma))\big|_{\gamma = 0} = \nabla f(z)^\top \omega_\psi(z).
\]
We now want to isolate the change of $\psi$ in the direction $v_{\psi}$ into a 
real-valued variable $L$ while keeping the remainder of the `information' in
$Z$ in an additional variable $W$. For a reparametrization $\phi$ of $z$ into
$(\ell, w) \in \mathbb{R} \times \mathcal{W}$, we define the reparametrized
$f$ as $g: (\ell, w) \mapsto f(\phi^{-1}(\ell, w))$. Then, in order for the
reparametrization $\phi$ to  separate the 'information' appropriately, we
require
\[
  \partial_\ell g(\ell, w) = \partial_\ell f(\phi^{-1}(\ell, w)) = \nabla f(\phi^{-1}(\ell, w))^\top \partial_\ell \phi^{-1}(\ell, w) 
\]
to equal $\partial_{\gamma} f(\psi(z, \gamma))\big|_{\gamma = 0}$, that is,
\[
  \omega_\psi(\phi^{-1}(\ell, w)) = \partial_\ell \phi^{-1}(\ell, w).
\]
This derivation is only morally correct when $Z \in
\Delta^{d-1}$ ($\nabla f$ is not defined in the usual sense for
simplex-valued functions) and to make it formally correct one
needs to apply techniques from differential geometry. The proof of the
following result can be found in Section~\ref{sec:deriv-iso_proof}.

\begin{proposition}[Isolating derivatives]
  \label{prop:derivative-isolating}
  Consider the setting of Definition~\ref{def:avg_per_eff}. Let $\mathcal{W}$ be
  a set and $(\mathcal{L}_w)_{w \in \mathcal{W}}$ a family of subsets of
  $\mathbb{R}$. Suppose that $\phi = (\phi^L, \phi^W): \mathcal{Z} \to \{(\ell,
  w) \in \mathbb{R} \times \mathcal{W} \mid w \in \mathcal{W}, \ell \in
  \mathcal{L}_w\}$ is a bijection and $\psi$ a directional perturbation such
  that for all $z \in \mathcal{Z}$ it holds that $\omega_\psi(z) \neq 0$.
  Suppose that for all $w \in \mathcal{W}$ and all $\ell \in \mathcal{L}_w$, we
  have
  \begin{equation}
    \label{eq:derivative-isolating}
    \omega_\psi(\phi^{-1}(\ell, w)) = \partial_{\ell} \phi^{-1}(\ell, w).
  \end{equation}
  Then, defining $g: (\ell, w) \mapsto f(\phi^{-1}(\ell, w))$ and $(L, W) =
  \phi(Z)$, the average perturbation effect satisfies
  \begin{equation}
    \label{eq:simple_adpe}
    \tau_\psi = \E\left[\partial_\ell g(\ell, W) \bigm|_{\ell = L}\right].
  \end{equation}
  We say that $\phi$ is \emph{derivative-isolating} if it satisfies
  \eqref{eq:derivative-isolating}.
\end{proposition}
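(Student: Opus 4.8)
The plan is to reduce the claimed identity to a pointwise statement and then integrate. Concretely, I would first argue that for every $z \in \mathcal{Z}$,
\begin{equation*}
  \partial_\gamma f(\psi(z, \gamma)) \bigm|_{\gamma = 0} = \partial_\ell g(\ell, w) \bigm|_{(\ell, w) = \phi(z)},
\end{equation*}
and then take expectations over $(L, W) = \phi(Z)$. Because $\phi$ is a bijection with $\phi^{-1}(L, W) = Z$, substituting $z = Z$ turns the left-hand side into the integrand defining $\tau_\psi$ and the right-hand side into $\partial_\ell g(\ell, W)\bigm|_{\ell = L}$, so \eqref{eq:simple_adpe} follows immediately once the pointwise identity is established.

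Both sides of the pointwise identity are instances of differentiating $f$ along a curve that stays inside $\Delta^{d-1}$, so the heart of the argument is to recognize them as such and to read off their initial velocities. For the left-hand side the relevant curve is $\gamma \mapsto \psi(z, \gamma)$; since $\psi$ is a directional perturbation its initial velocity is $\omega_\psi(z) = \partial_\gamma \psi(z, \gamma)\bigm|_{\gamma = 0}$, which lies in the tangent space $\{v \in \mathbb{R}^d : \sum_{j=1}^d v^j = 0\}$ of the simplex because $\gamma \mapsto \psi(z, \gamma)$ is simplex-valued. For the right-hand side, writing $(\ell, w) = \phi(z)$, the relevant curve is $\ell' \mapsto \phi^{-1}(\ell', w)$, which also remains on the simplex and whose velocity at $\ell' = \ell$ is $\partial_\ell \phi^{-1}(\ell, w)$, again a tangent vector. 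Using the notion of differentiability on the simplex from Section~\ref{sec:diff}, the derivative of $f$ along any curve through $z$ depends only on $z$ and the curve's velocity; denoting this common directional derivative by $D_v f(z)$ for a tangent vector $v$, the chain rule gives
\begin{equation*}
  \partial_\gamma f(\psi(z, \gamma)) \bigm|_{\gamma = 0} = D_{\omega_\psi(z)} f(z) \qquad\text{and}\qquad \partial_\ell g(\ell, w) = D_{\partial_\ell \phi^{-1}(\ell, w)} f(\phi^{-1}(\ell, w)).
\end{equation*}

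Setting $z = \phi^{-1}(\ell, w)$ and invoking the derivative-isolating hypothesis \eqref{eq:derivative-isolating}, the two velocity vectors coincide, $\omega_\psi(z) = \partial_\ell \phi^{-1}(\ell, w)$, so the two directional derivatives are equal and the pointwise identity drops out. The genuinely delicate step is the middle one: since $\nabla f$ is not defined in the ordinary sense for a function on the simplex, I must make precise, via the definition in Section~\ref{sec:diff}, that $D_v f(z)$ is well-defined, i.e.\ that the derivative of $f$ along a simplex curve at $z$ depends only on the initial velocity $v$ and not on the curve. I expect to establish this by extending $f$ to a differentiable $\tilde f$ on an open neighborhood of $z$ in $\mathbb{R}^d$ and noting that, for any two such extensions, their difference vanishes on $\Delta^{d-1}$ and hence has vanishing directional derivative along tangent directions; consequently $\nabla \tilde f(z)^\top v$ is independent of the extension for tangent $v$, and the ordinary Euclidean chain rule applied to $\tilde f$ transfers to $f$ along both curves. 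The care needed is precisely in checking that both velocities are genuinely tangent and that this extension argument is extension-independent; everything else is bookkeeping, concluded by the exact change of variables under the bijection $\phi$ when taking expectations.
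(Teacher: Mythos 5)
Your proposal is correct and follows essentially the same route as the paper's proof: reduce to the pointwise identity $\partial_\gamma f(\psi(z,\gamma))\bigm|_{\gamma=0} = \partial_\ell g(\ell,w)\bigm|_{(\ell,w)=\phi(z)}$, compute both sides via the chain rule as a directional derivative of $f$ in the initial velocity of a simplex-valued curve, and match the two velocities using \eqref{eq:derivative-isolating}; the paper formalizes the directional derivative through the canonical charts of Section~\ref{sec:diff} together with \citet[Proposition 3.24]{lee2012introduction}, whereas you use an ambient extension of $f$ to a neighborhood in $\mathbb{R}^d$ plus an extension-independence argument, which amounts to the same thing. The only wording to tighten is that at boundary points of $\Delta^{d-1}$ the difference of two extensions is only guaranteed to have vanishing (one-sided) derivative along directions realized as initial velocities of simplex-valued curves through $z$ --- not along all of $\{v : \sum_{j} v^j = 0\}$ --- but this suffices here since $\omega_\psi(z)$ and $\partial_\ell \phi^{-1}(\ell,w)$ are such velocities by construction.
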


From this result it follows that if we are able to find a derivative-isolating
reparametrization the average directional
perturbation effect can be expressed as an average partial derivative of a
conditional expectation \eqref{eq:simple_adpe} -- a quantity that has been
previously studied using semiparametric theory \citep{newey1993efficiency}. It
can either be estimated nonparametrically or by assuming a partially linear
model, that is, assuming that there exists a parameter $\theta \in \mathbb{R}$
and a function $h$ such that
$
  \E[Y \given L, W] = \theta L + h(W).
$
Details on the different estimation approaches are discussed in
Section~\ref{sec:estimation}.

\begin{remark}[zero speeds]
  \label{rmk:zeros}
  While we do not allow $\omega_\psi(z) = 0$ in the definition of
  derivative-isolating reparametrizations, it is clear from the definition of
  $\tau_\psi$ that such points do not contribute to the perturbation effect. We
  can therefore allow for such points if we appropriately adapt the estimates.
  In Section~\ref{sec:zeros}, we describe how to
  modify the estimation procedures introduced in Section~\ref{sec:estimation} to
  allow for valid estimation even when $\omega_\psi(Z) = 0$ with positive
  probability.
\end{remark}

\section{Perturbations on the simplex} \label{sec:simplex_perturbations} 
We now turn to the construction of perturbations on the simplex. For binary
perturbations, we only need to specify the endpoint of the perturbation, while
for directional perturbations it suffices to choose a direction and a speed to
construct a perturbation of the form \eqref{eq:lin_pert}. For a fixed endpoint
there is a canonical (geometry-dependent) choice of direction that is given by
the straight line from a point to its endpoint. The choice of speed is, however,
more subtle and it should be selected in a way that provides a clear
interpretation. We propose two general approaches for selecting speeds: (a) by
directly specifying an interpretable speed and (b) by starting from an
interpretable summary statistic, which decreases when moving away from the
endpoint, and interpreting the resulting effect as the effect on the response of
slightly increasing the value of the statistic. For both approaches, we provide
-- under additional assumptions -- explicit derivative-isolating
reparametrizations, which can then be used for efficient estimation. 

\subsection{Endpoints in the simplex}\label{sec:endpoints}

As formally defined in Definition~\ref{def:perturbation}, an endpoint
function $E:\Delta^{d-1}\rightarrow\Delta^{d-1}$, if it exists,
determines where the perturbation ends when $\gamma$ is increased as
much as possible.  The simplest type of endpoint function on the
simplex is a constant function, that is, there exists a fixed point
$z_* \in \Delta^{d-1}$ such that for all $z\in\Delta^{d-1}$ it holds
that $E(z)=z_*$. For example, we can choose the fixed endpoint
$z_*=e_j$, which corresponds to a vertex on the simplex and hence a
perturbation with this endpoint moves proportions from the other
coordinates into the $j$th one. A further example, is to use the center
of the simplex, i.e.\ $z_*=z_{\cen}$. As alluded to in the
introduction, perturbations with $z_{\cen}$ as the endpoint can be
seen as perturbing observations in the direction of equal distribution
and hence capture a notion of diversity. For later reference, we call
all perturbations $\psi$ with endpoint function
$E_{\psi} : z \mapsto z_{\cen}$ \emph{centering} perturbations.

A second type of endpoint function, that can be non-constant, is given by
mapping each $z \in \Delta^{d-1}$ to amalgamations on the simplex. An
amalgamation is a well-studied operation \citep[e.g.,][]{aitchisonbook} which
consists of adding together the values of groups of components of $z$
resulting in a lower dimensional composition. Formally, for disjoint $A, B
\subseteq [d]$, we define the
\emph{$(A \to B)$-amalgamation operator}
$\A_{A \to B}: \Delta^{d-1} \to \Delta^{d-1}$ for all
$z\in\Delta^{d-1}$ by
\[
  \A_{A \to B}(z)^A := 0, \quad \A_{A \to B}(z)^B := \C(z^B)(\|z^A\|_1 + \|z^B\|_1), \quad \A_{A \to B}(z)^{[d] \setminus (A \cup B)} := z^{[d] \setminus (A \cup B)}.
\]
The $(A\to B)$-amalgamation has the effect of setting components in $A$ to $0$,
scaling up the components in $B$ such that the subcomposition in $B$ remains
fixed (i.e., $\C(\A_{A \to B}(z)^B) = \C(z^B)$) and leaving the remaining
components fixed. We call all perturbations $\psi$ with endpoint function
$E_{\psi}: z\mapsto \A_{A \to B}(z)$ \emph{amalgamating} perturbations. We
discuss general amalgamating perturbations in
Section~\ref{sec:amalgamations}. Here, we focus on an
important class of amalgamating perturbations: those where $A =
[d] \setminus \{j\}$ and $B = \{j\}$ so that $\A_{A \to B}(z) = e_j$, i.e.\
perturbations that push towards a vertex of the simplex.

\subsection{Directions, speeds and derivative-isolating reparametrizations}
\label{sec:deriv-iso}

When constructing directional perturbations it is sufficient to
consider perturbations $\psi$ that are parametrized by their direction
$v_\psi$ and speed $s_\psi$ as
$
  \psi: (z, \gamma) \mapsto z + \gamma s_\psi(z)v_\psi(z).
$
An endpoint function $E:\mathcal{Z}\rightarrow\Delta^{d-1}$ gives rise
to a canonical direction defined for all
$z\in\{z \in \Delta^{d-1} \mid E(z) \neq z\}$
by
\begin{equation}
  \label{eq:endpoint_dir}
  v_E(z) := \frac{E(z)-z}{\|E(z)-z\|_1},
\end{equation}
that is, for each $z$ we define the unit vector $v_E(z)$ in the direction of the
straight line from $z$ to the endpoint. 
To avoid zero speeds, we restrict the domain of the
perturbations to $\{z\in \Delta^{d-1} \mid E(z) \neq z\}$ but as discussed in
Remark~\ref{rmk:zeros} the target parameters and estimates can easily be
extended to settings where zero speeds are allowed.
In Section~\ref{sec:other_geometries} we describe how working in other
geometries can lead to different notions of direction. Taking the direction in
\eqref{eq:endpoint_dir} as given, we now want to select a speed $s_{\psi}$ that
leads to an interpretable average directional perturbation effect and then
construct a derivative-isolating reparametrization for the resulting
perturbation in order to efficiently estimate the effect. 

As a fundamental building block for the general construction, we
consider a perturbation $\psi$ with unit-speed $s_{\psi}=1$ and
direction $v_{\psi}=v_E$, that is, for all
$z \in \Delta^{d-1}$ with $z \neq E(z)$ and all
$\gamma \in [0, \|E(z)- z\|_1]$ we define
\[
  \psi(z, \gamma) := z + \gamma \frac{E(z)-z}{\|E(z)-z\|_1}. 
\]
Intuitively, a derivative-isolating reparametrization for $\psi$ should isolate
changes along the direction $v_E$ from changes due to varying directions. We
therefore consider the reparametrization defined for all $z\in\mathcal{Z}$ by
$\phi_{\unit}(z):=(-\| E(z) - z\|_1, (E(z), v_E(z)))$. The inverse of this
reparametrization is $\phi_{\unit}^{-1}(\ell, (w_E, w_v)) = \ell w_v + w_E$,
so
\begin{equation*}
  \partial_{\ell} \phi_{\unit}^{-1}(\ell, (w_E, w_v)) =
  w_v=\partial_{\gamma}\psi(\phi_{\unit}^{-1}(\ell, (w_v, w_E)), \gamma)\big\vert_{\gamma=0}=\omega_{\psi}(\phi_{\unit}^{-1}(\ell, w))
\end{equation*}
and hence $\phi_{\unit}$ is derivative-isolating for $\psi$. Since a unit-speed
perturbation may not always be easy to interpret, we extend this basic
construction in Section~\ref{sec:derivative-isolating} to start from
either a given interpretable speed or a summary statistic.

\subsection{Effects of changes in individual components and diversity}
\label{sec:simplex_effects}
In this section, we give examples of how to construct directional perturbations
which summarize perturbation effects of practical interest.

\begin{example}[Feature influence]
  \label{ex:cfi}
  Suppose we are interested in the effect of increasing the value of a
  particular component $z^j$ of $z \in \Delta^{d-1}$. We can investigate the
  effect of such a change by means of a $([d] \setminus \{j\} \to
  \{j\})$-amalgamating perturbation, that is, a perturbation which satisfies for
  all $z \in \Delta^{d-1} \setminus \{e_j\}$ that $v_\psi(z) = \frac{e_j -
  z}{\|e_j-z\|_1}$.  While for some purposes it may be perfectly reasonable to
  use a constant speed $s_\psi\equiv 1$, in other settings this speed can be
  unintuitive. For example, if we imagine the observed $Z$ as being generated
  from an unobserved $X \in \mathbb{R}^{d}_+ \setminus \{0\}$ on an absolute
  scale by $Z:= \C(X) = \frac{1}{\|X\|_1} X$, it might be more natural to
  consider which speeds are induced on the simplex from hypothetical changes to
  $X^j$. A first attempt might be to consider the speed resulting from adding a
  constant $c$ to $X^j$. However, since
  \[
    \| \partial_c \C(X + c e_j) \bigm|_{c = 0} \|_1 = 2\tfrac{1}{\|X\|_1}
    \left(1 - \tfrac{X^j}{\|X\|_1}\right),
  \] 
  this speed is not scale-invariant for $X$, that is, the speed when considering
  the counts $X$ and $cX$ for $c > 0$ may differ. This means that the speed on
  the simplex would depend on the unobserved quantity $\|X\|_1$, so the
  perturbation is ill-defined. Instead, we can consider the speed resulting from
  multiplying $X^j$ by $1+c$ (so that $c=0$ corresponds to no change) which
  yields a well-defined perturbation on the simplex since this operation is
  scale-invariant. More formally, letting $\odot$ denote the Hadamard
  (point-wise) product, the resulting speed is
 \[
   \| \partial_c \C(X \odot (1 + c e_j)) \bigm|_{c = 0} \|_1 =2\tfrac{X^j}{\|X\|_1}
     \left(1 - \tfrac{X^j}{\|X\|_1}\right),
 \]
 which is scale-invariant and therefore meaningful to consider on the simplex.
 On the simplex, this corresponds to the speed $s_\psi(z) = 2 z^j (1 - z^j)$
 which we call \emph{multiplicative speed}. To find a corresponding
 derivative-isolating reparametrization we apply
 Proposition~\ref{prop:find_deriv-iso}~(a) (see
 Example~\ref{ex:feature_influence_details} for details) and obtain that
\[
  \phi = (\phi^L, \phi^W): z \mapsto \left(\log\left(\frac{z^j}{1-z^j}\right), v_\psi(z) \right)
\]
is a derivative-isolating reparametrization for the $([d] \setminus \{j\} \to
  \{j\})$-amalgamating perturbation $\psi$ with multiplicative speed.
\end{example}

Perturbation effects that capture changes in individual components, 
are, for example, useful as a variable selection tool in compositional data. In
Section~\ref{sec:microbiome}, we illustrate this for a human gut microbiome
study. For each study participant the body mass index (BMI) $Y$ and their gut
microbiome composition $Z$ are measured. We then want to determine for which
microbes a high abundance is associated with a high BMI. The simplex constraint
implies that for all coordinates $j$ we have that $Y$ is independent of $Z^j$
given $Z^{-j}$ hence any feature importance or influence measure based on
measuring this conditional dependence is meaningless. 

We define the \emph{compositional feature influence} for the $j$th feature or
simply $\cfi^j$ as $\tau_\psi$ for any directional perturbation with endpoint
$e_j$. We denote by
$\cfi^j_{\unit}$ the perturbation that moves at a constant (unit) speed towards
the vertex and $\cfi^j_{\mult}$ the perturbation that moves at multiplicative
speed. $\cfi_{\mult}^j$ has previously been proposed by
\citet{huang2022supervised}, however without an efficient estimation procedure. 
A crucial difference between unit
speed and multiplicative speed perturbations is how they account for
observations with $Z^j=0$. As the multiplicative speed is $0$ in this case such
points do not contribute to $\cfi^j_{\mult}$, however, they do contribute to
$\cfi^j_{\unit}$.

In datasets with many zeros, which is common for example in microbiome
data, it can also be interesting to focus only on the effect of zeros
in individual components. This can be achieved by considering a binary
perturbation $\psi$ given by $\psi(z, 1)^j = 0$ and
$\psi(z, 1)^{-j} = \C(z^{-j})$. We call the resulting average binary
perturbation effect $\lambda_\psi$ the \emph{compositional knock-out
  effect} for the $j$th feature, or $\cke^j$ for short. The $\cke^j$
allows for a straightforward analysis of zeros in compositional data,
which has been a long-standing problem with many existing approaches.

\begin{example}[Diversity with Gini-speed]
  \label{ex:gini}
  Suppose we are interested in the effect of diversifying $z \in \Delta^{d-1}$. We
  can investigate such an effect by using a centering directional perturbation,
  that is, a perturbation which satisfies for all $z \in \Delta^{d-1} \setminus
  \{z_{\cen}\}$ that $v_\psi(z) = \frac{z_{\cen} - z}{\|z_{\cen}-z\|_1}$. In
  contrast to Example~\ref{ex:cfi} it is not as obvious how to choose an
  interpretable speed. For econometricians a natural choice of similarity measure
  might be the Gini coefficient. The Gini coefficient $G: \Delta^{d-1} \to [0, 1]$
  is defined for all $z \in \Delta^{d-1}$ by
  \begin{equation}
    \label{eq:gini}
    G(z) := \frac{1}{2d} \sum_{j=1}^d \sum_{k=1}^d |z^j - z^k|,
  \end{equation}
  whence we define a diversity measure by $D(z) := 1 - G(z)$. Using
  Proposition~\ref{prop:find_deriv-iso} (b) (see Example~\ref{ex:gini_details} for
  details), we obtain that
  \[
    \phi = (\phi^L, \phi^W) : z \mapsto \left(1-G(z), \frac{z_{\cen}-z}{\|z_{\cen}- z\|_1} \right)
  \]
  is a derivative-isolating reparametrization for perturbations with direction
  $v_\psi$ and speed given by 
  \begin{equation}
    \label{eq:gini_speed}
    s_\psi(z) := \frac{2d}{ \sum_{j=1}^d \sum_{k=1}^d |v_\psi(z)^j - v_\psi(z)^k|}.
  \end{equation}
  Similar arguments could be applied to other popular measures of diversity such
  as the Shannon entropy or the Gini--Simpson index.
\end{example}

The effect of increasing diversity is often relevant in social sciences and
econometrics, e.g., when considering racial or income distributions. As we
outlined in the introduction, it can be misleading to only model summary
statistics of diversity and ignore the remaining compositional structure as this
structure can confound the relationship between the diversity measure and the
response. Instead, we propose to consider the effect of directional
perturbations that push towards $z_{\cen}$ (centering perturbations as defined
in Section~\ref{sec:endpoints}) and call $\tau_\psi$ for such perturbations a
\emph{compositional diversity influence} ($\cdi$). We denote the effect by
$\cdi_{\unit}$ if the perturbation has unit speed and by $\cdi_{\gini}$ if the
perturbation was constructed as in Example~\ref{ex:gini} such that $L$ is equal
to one minus the Gini coefficient. 
In Section~\ref{sec:diversity_exps} we give a semi-synthetic and real data
example of the application of $\cdi_{\gini}$.


In Section~\ref{sec:amalgamations} we discuss average perturbation
effects that capture the effect of applying amalgamations to $Z$ on $Y$ (see
Section~\ref{sec:endpoints} for a definition of a general amalgamation). We
summarize our predefined perturbation effects in Table~\ref{tab:targets}.

\begin{table}[h]
  \centering
  \renewcommand{\arraystretch}{1.3}
  \begin{tabular}{llccc}
    \toprule
    Effect of changes in & Target & Speed $s(z)$ & Endpoint $E(z)$ & $\phi^L(z)$  \\ \toprule
    \multirow{3}*{individual components} & $\cfi^j_{\unit}$ & $1$  & $e_j$  &  $-2(1-z^j)$\\
                          & $\cfi^j_{\mult}$ & $2 z^j (1 - z^j)$ & $e_j$ & $\log(\frac{z^j}{1-z^j})$ \\
                          & $\cke^j$ & binary & $e_j$ & $\mathds{1}_{\{z^j=0\}}$ \\
    \midrule
    \multirow{2}*{diversity} & $\cdi_{\unit}$ & $1$ & $z_{\cen}$  & $-\|z_\cen-z\|_1$ \\
                         & $\cdi_{\gini}$ & see \eqref{eq:gini_speed}  & $z_{\cen}$ & $1-G(z)$, see \eqref{eq:gini}\\ 
    \midrule
    \multirow{3}*{amalgamations} & $\cai^{A \to B}_{\unit}$ & $1$ & $\A_{A \to B}(z)$ & $-2\|z^A\|_1$ \\
                          & $\cai^{A \to B}_{\mult}$ & $\frac{2 \|z^A\|_1 \|z^B\|_1}{(\|z^A\|_1 + \|z^B\|_1)^2}$ & $\A_{A \to B}(z)$ & $\log\big(\tfrac{\|z^B\|_1}{\|z^A\|_1}\big)$ \\
                          & $\cae^{A \to B}$ & binary & $\A_{A \to B}(z)$ & $\mathds{1}_{\{z^A=0\}}$ \\
    \bottomrule
  \end{tabular}
  \caption{List of perturbation effects considered in this work and
    what effects they capture. For each perturbation effect the
    corresponding speed, endpoint and $L$-term of the
    reparametrization used for efficient estimation is included. All
    reparametrizations use the canonical direction $v_E$ based on the
    endpoint, see \eqref{eq:endpoint_dir}. The $W$-term $\phi^W(z)$ of
    the reparametrization is $E(z)$ for the binary effects and
    $(v_E(z), E(z))$ for the directional effects.}\label{tab:targets}
\end{table}

\section{Estimating perturbation effects}\label{sec:estimation}

In this section, we describe how to estimate and construct confidence intervals
for the target parameters $\tau_\psi$ and $\lambda_\psi$. We will not explicitly
restrict ourselves to any one perturbation $\psi$ as the approaches described
are agnostic to this choice. However, we will assume that any directional
perturbation comes with a derivative-isolating reparametrization $\phi =
(\phi^L, \phi^W)$ and for any binary perturbation we define $\phi^L : z \mapsto
\ind_{\{z=\psi(z, 1)\}}$ and $\phi^W : z \mapsto \psi(z, 1)$. This allows us to
reparametrize our estimation problem from a compositional data problem with $n$
i.i.d.\ observations of $(Y, Z) \in \mathbb{R} \times \Delta^{d-1}$ into a
generic problem with $n$ i.i.d.\ observations of $(Y, L, W)$ with $W \in
\mathcal{W}$ and $L \in \mathbb{R}$, where our interest lies in the relationship
between $Y$ and $L$ when controlling for $W$. We present an overview of the
different estimators in this section and provide further details of the
algorithms and theory in Section~\ref{sec:theory}.

\subsection{Efficient estimation using semiparametric theory}
We now give a brief introduction to semiparametric estimation of statistical
functionals and, in particular, we describe how naive plug-in estimation of
target parameters based on nonparametric methods rarely leads to efficient
confidence intervals and how to remedy this. This perspective
is not new and goes back to at least \citet{pfanzagl1982contributions} with
several more recent monographs and reviews
\citep{bickel1993efficient,tsiatis2006semiparametric,chernozhukov2018double,kennedy2023semiparametric}.
The recent surge in the popularity of these methods is fueled by the advent of
powerful machine learning methods that use flexible black box models to learn
the nuisance functions involved in the estimation of functionals of interest.
For illustration, we focus on the estimation of $\tau_\psi$, that is, the
estimation of
$
  \tau_\psi = \E[\partial_\ell f(\ell, W) \bigm|_{\ell = L}],
$
where $f: (\ell, w) \mapsto \E[Y \given L=\ell, W=w]$. A naive
approach, which is known as \emph{plug-in estimation}, is to
estimate $\tau_{\psi}$ by first learning a differentiable estimate
$\widehat{f}$ of $f$ using all $n$ observations and then computing
$
  \widetilde{\tau}_\psi := \frac{1}{n} \sum_{i=1}^n \partial_\ell \widehat{f}(L_i, W_i).
$
While such an estimate can be consistent, it is rarely asymptotically
Gaussian, which invalidates standard asymptotic inference.
Using the corresponding plug-in estimate of the asymptotic
variance, given by
\begin{equation}
  \label{eq:naive_variance}
  \widetilde{\sigma}^2 := \frac{1}{n} \sum_{i=1}^n (\partial_\ell \widehat{f}(L_i, W_i))^2 - \widetilde{\tau}_\psi^2,
\end{equation}
hence does not provide valid inference (see
Figure~\ref{fig:semiparametric_illustration} (left)) as the confidence intervals
can be biased or too narrow.
\begin{figure}[!ht]
  \centering
  \includegraphics*[width=\textwidth]{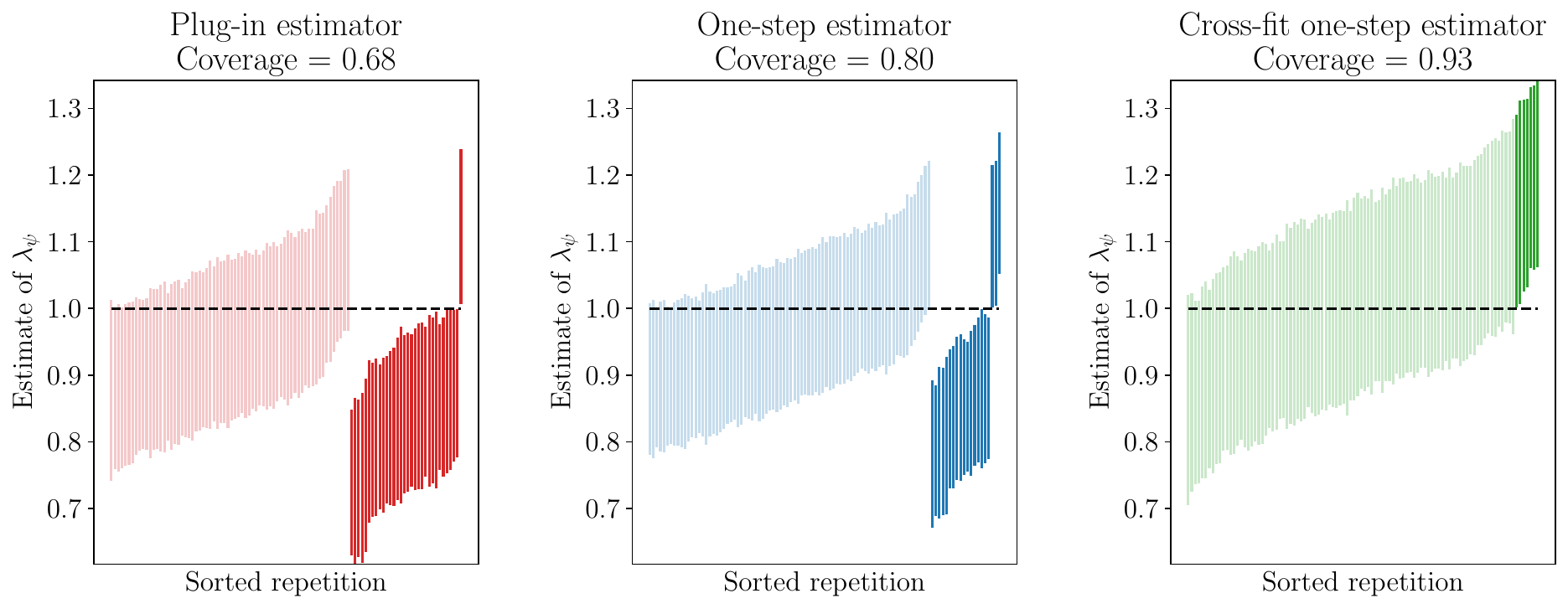}
  \caption{Plot of $95\%$ confidence intervals for a plug-in, one-step and
    cross-fit one-step estimator of $\lambda_\psi$ across 100 repetitions in the
    partially linear, binary simulation setting described in
    Section~\ref{sec:est_exp} with $d=15$ and
    $n=1000$. The black dashed line indicates the true value of the parameter
    and we highlight the confidence intervals that do not contain the true
    parameter. The confidence intervals are ordered first by whether the true
    value is in the interval, then by the value of the estimate.}
  \label{fig:semiparametric_illustration}
\end{figure}
There are two main issues with this estimate; (a) repeated usage of the same
data for both the estimation of $f$ and the computation of the target
(double-dipping) and (b) estimation bias inherited from the bias in
$\widehat{f}$. Issue (a) can be resolved by splitting the dataset into two parts
of roughly equal size and using the first half to compute $\widehat{f}$ and the
second part to compute $\widetilde{\tau}_\psi$. To increase finite-sample
performance of this splitting procedure, one can swap the roles of the two
halves and average the resulting estimates. This procedure is known as
\emph{cross-fitting}. Sometimes it is preferable to split in more than two folds
to increase the sample size in the estimation of $\widehat{f}$ and to
derandomize the resulting estimate. Issue (b) is more complicated to resolve and
requires the machinery of semiparametric theory. One well-established approach
is to start from the plug-in estimator and derive a bias-correction, known as a
\emph{one-step correction}, that can be added to the plug-in estimator and
weakens the conditions under which the estimator is asymptotically Gaussian. The
one-step correction can be found by computing the influence function of the
target parameter, which can be thought of as analogous to a derivative when the
target parameter is viewed as a function defined on a space of distributions.
For more details on this general procedure, see e.g.\
\citet{kennedy2023semiparametric}. It can be shown that these bias-corrected
estimators are asymptotically efficient, meaning that, at least asymptotically,
they have the smallest possible variance of any estimator. Influence functions
have already been computed for the reparametrized target parameters $\tau_\psi$
and $\lambda_\psi$ that we estimate here and we will describe the details of
these estimates in the following sections.
Figure~\ref{fig:semiparametric_illustration} contains an illustration of the
coverage of confidence intervals resulting from target parameter and variance
estimates from the different methods mentioned above.

\subsection{Partially linear estimation of perturbation effects}
\label{sec:estimation_plm} 
In this section, we present our proposals for estimating perturbation effects.
It is possible to estimate average perturbation effects under mild smoothness
assumptions entirely nonparametrically. We describe these estimators in
Section~\ref{sec:np_estimation}. However, they rely on accurate estimates of
quantities that may be difficult to estimate, such as derivatives of densities
and regression functions (see Section~\ref{sec:estimator_robustness} for an
illustration of this sensitivity). One way to avoid these complicated
estimations is by additionally assuming a partially linear model.  For both
binary and directional perturbations (with a derivative-isolating
reparametrization), we can assume a partially linear model on $L$ and $W$, that
is, the existence of $h$ and $\theta$ such that
$
  \E[Y \given L, W] = \theta L + h(W).
$
With this assumption, $\theta$ is exactly equal to the perturbation effect.
Efficient estimation of $\theta$ in a partially linear model is a well-studied
problem \citep[e.g.,][]{robinson1988root,chernozhukov2018double} and one
estimator (sometimes referred to as the \emph{partialling out} estimator) is
given by
\begin{equation}
  \label{eq:PLM_effect}
  \widehat{\theta} := \frac{\sum_{i=1}^n \{Y_i - \widehat{g}(W_i)\}\{L_i - \widehat{m}(W_i)\}}{\sum_{i=1}^n \{L_i - \widehat{m}(W_i)\}^2},
\end{equation}
where $\widehat{g}$ is an estimator of $g:w\mapsto \E[Y \given W=w]$ and
$\widehat{m}$ is an estimator of $m: w \mapsto \E[L \given W=w]$. In contrast to
the estimators from the previous sections, this estimator only relies on
estimating two conditional mean functions. This comes at the expense of assuming
that there is no interaction effect between $L$ and $W$ on $Y$ and that the
effect of $L$ is linear on $Y$. Even when these assumptions are not satisfied,
the estimator may still provide a reasonable estimate of the effect of $L$ on
$W$ and we found in our experiments that the estimator is more robust to errors
in the estimation of the nuisance functions than the estimators above. In fact,
one can show that the quantity $\E[(\E[Y \given L, W] - \theta L - h(W))^2]$ is
minimized by the population version of the partially linear model estimate, so
one can think of the partially linear model as a `best approximation' to a
generic nonparametric model (see Proposition~\ref{prop:plm_approximation} in
Section~\ref{sec:estimator_robustness_plm} for a proof). Further details on both
the partialling out and nonparametric estimators (including full algorithms) can
be found in Section~\ref{sec:theory}.

\section{Numerical experiments}\label{sec:numerical_experiments} 
We now present numerical experiments illustrating the usage and performance of
our methods. In addition to the experiments included here,
Section~\ref{sec:est_exp} contains an
investigation of the performance of the semiparametric estimators discussed in
Section~\ref{sec:estimation} when the conditioning variable $W$ is
compositional. The code for all of our experiments is available at
\url{https://github.com/ARLundborg/PerturbationCoDa}. For all regressions we use
the \texttt{scikit-learn} package \citep{scikit-learn} except for the
$\ell^1$-penalized log-contrast regression in Section~\ref{sec:microbiome} where
we used the \texttt{c-lasso} package \citep{simpson2021classo}.

\subsection{Estimating effects of diversity on socio-economic outcomes}
 \label{sec:diversity_exps}
\subsubsection{Effects of diversity on income in semi-synthetic US census
data}
\label{sec:diversity_income}
In this section we provide an example based on
semi-synthetic data that illustrates how failing to adjust
for compositional confounding can lead to incorrect conclusions about effects of
compositional variables. We consider the `Adult' dataset available from the UCI
machine learning repository \citep{adult_dataset}. This dataset is based on an
extract from the 1994 US Census database and consists of data on 48,842
individuals. The response variable in the data, \texttt{compensation}, is a
binary variable where $0$ and $1$ encode whether an individual makes below or
above 50,000 USD per year, respectively. Additionally, we consider the
demographic features \texttt{age} (numeric, between 17 and 90),
\texttt{education} (categorical, with 16 categories), \texttt{sex} (binary, $0$
encoding male and $1$ female) and \texttt{race} (categorical, with categories
'White', 'Black' and 'Other'). 

We now use this raw individual-level data to construct community-level datasets
by artificially grouping individuals together.  The grouping is deliberately
constructed in a way such that diversity affects income and this relationship is
confounded by the remaining variation of the racial composition. This
semi-synthetic example is only intended as an illustration. A genuine analysis
of this data would require knowledge of the true community structure, which is
not available in the data extract we are using here. We emphasize that even if
the individual-level data is available, one may still be interested in effects
of the racial composition of the community an individual lives in on that
individual's income. For such an analysis, one can add the racial composition of
an individual's community as a variable and then perform the same analysis as
here but now on the individual-level data while adjusting for individual-level
confounding factors.  Importantly, even for such an individual-level analysis,
the simplex confounding can bias the results in the same way as we illustrate
here. We provide an example of this bias occurring on the individual-level effect
in Section~\ref{sec:diversity_income_individual}.

We group all individual-level observations into groups of around $50$
individuals. We then create a new response variable $Y\in[0,1]$ by averaging the
individual-level \texttt{compensation} in each group. Similarly, we create a
predictor variable $X\in\mathbb{R}\times[0,1]\times\{1, \dots, 16\}$ by
averaging \texttt{age} and \texttt{sex} and selecting the majority class of the
categorical variable \texttt{education} (breaking ties randomly). Finally, we
create a compositional predictor $Z\in\Delta^2$ by computing the proportions of
`White', `Black' and `Other' individuals in each group. 

To ensure, a (partially) known effect of increased racial diversity on
compensation, we construct three categories of groups (0, 1 and 2). We construct
these categories based on either education or compensation as follows: Groups in
category 1 are sampled to be similarly diverse and with lower educational
(compensation) levels than the total population and category 2 groups are
sampled to be more diverse and with higher educational (compensation) levels
than the total population. The groups in category 0 consist of groups
constructed randomly from the remaining individuals.

We then sample groups from each category, using a weighted sampling approach
described in detail in Section~\ref{sec:diversity_groups}. This results in $978$
observations with $811$ in category $0$, $92$ in category $1$ and $75$ in
category $2$ when grouping on education and $782$ in category $0$, $129$ in
category $1$ and $67$ in category $2$ when grouping on compensation. We expect
that -- by construction -- there is a positive effect of increased diversity on
compensation because education is positively associated with compensation. We
visualize the distribution of the data and this confounding in
Section~\ref{sec:additional_adult}.

Based on the two constructed community-level datasets, we now compare a naive
approach that neglects the simplex structure and is applied in the literature
(e.g.
\citet{richard2000racial,richard2007impact,lu2015effect,paarlberg2018heterogeneity})
with the perturbation-based approach. The naive approach summarizes racial
diversity in a single measure, here we use one minus the Gini coefficient
$L=1-\frac{1}{2d}\sum_{j=1}^d \sum_{k=1}^d |Z^j - Z^k|$ as in
Example~\ref{ex:gini}, for which large values correspond to high diversity, and
estimate associations between this measure and the average compensation $Y$. We
call the method that regresses $Y$ on $L$ linearly \texttt{naive\_diversity}.
This estimator can easily be adapted to also adjust for the additional
covariates $X$, the racial composition $Z$ or both under a partially linear
model assumption. The point estimates together with asymptotic confidence
intervals (based on an efficient partialling out estimator like the estimator in
Section~\ref{sec:estimation_plm}) are
provided in Table~\ref{tab:adult} where random forests are used for nuisance
function estimation when including additional covariates. The naive estimators
that condition either on the empty set or on $X$ both result in significant
negative effects which by construction of the datasets is incorrect.
Furthermore, additionally conditioning on $Z$, leads to an ill-conditioned
estimator as $Z$ is now a perfect predictor of the diversity measure. This is
also visible in the large resulting confidence intervals. While we here
constructed the groups in a way to highlight how compositional confounding can
be misleading, we believe that this type of confounding is likely also present
in real-world settings as groupings are often confounded e.g., educational
levels may well be correlated with race and average compensation in actual
communities.

These pitfalls can be avoided with a perturbation-based analysis. In
this setting, we propose to apply $\cdi_{\gini}$ with and without
conditioning on the additional covariates $X$. The results are also
shown in Table~\ref{tab:adult} again using random forests for
  nuisance function estimation. As desired one gets
significant positive effects with this method as it is adapted to the
simplex. Interestingly, for $\cdi_{\gini}$ that adjusts for $X$ the
effect disappears for the grouping based on education. This is
expected because the association between diversity and compensation
was constructed based on education which is also part of $X$.

\begin{table}
  \centering
  \begin{tabular}{l|cc|cc}
    \toprule
      \multirow{2}*{Method}  & \multicolumn{2}{c|}{Grouping on education} & \multicolumn{2}{c}{Grouping on compensation} \\ 
                             & Estimate & $95\%$ CI & Estimate & $95\%$ CI \\
    \midrule
    $\texttt{naive\_diversity}$ & $\textcolor{LKred}{\mathbf{-0.082}}$ & $(-0.144, -0.019)$ & $\textcolor{LKred}{\mathbf{-0.120}}$ & $(-0.194, -0.046)$\\
    $\texttt{naive\_diversity} \given Z$ & $-0.232$ & $(-1.246, 0.781)$ & $-0.662$ & $(-1.892, 0.568)$\\
    $\texttt{naive\_diversity} \given X$ & $\textcolor{LKred}{\mathbf{-0.085}}$ & $(-0.142, -0.028)$ & $\textcolor{LKred}{\mathbf{-0.105}}$ & $(-0.178, -0.033)$\\
    $\texttt{naive\_diversity} \given X, Z$ & $-0.091$ & $(-0.870, 0.688)$ & $0.407$ & $(-0.394, 1.207)$ \\
    $\cdi_{\mathrm{Gini}}$ & $\textcolor{LKgreen}{\mathbf{0.233}}$ &$(0.060, 0.406)$ & $\textcolor{LKgreen}{\mathbf{0.614}}$ & $(0.429, 0.799)$\\
    $\cdi_{\mathrm{Gini}} \given X$ & $0.074$ & $(-0.046, 0.194)$ & $\textcolor{LKgreen}{\mathbf{0.611}}$ & $(0.455, 0.768)$\\
    \bottomrule
  \end{tabular}
  \caption{Table of estimates in the `Adult' data
    experiment. Bold red numbers indicate a significant negative
      effect of diversity at a $5\%$ level, while bold green indicates
      a significant positive effect. The
    \texttt{naive\_diversity} method indicates that either there is a
    significant negative or a highly uncertain effect of diversity on
    compensation despite the fact that the data is constructed with a
    positive effect of diversity. The CDI correctly identifies this
    effect.}\label{tab:adult}
\end{table}

\subsubsection{Effects of diversity on grades in New York schools}
\label{sec:american_schools}
In this section we analyze data from a 2018 Kaggle competition containing data
on schools in New York \citep{passnycc}. The data contains average grades in
English Language Arts (ELA) and in math for 1217 schools along with additional
background information on the school such as a summary of the economic need of
the students, percentage of students with English as second language, racial
composition of the school and various indicators of school performance (for a
full list of control variables and our pre-processing of these, see
Section~\ref{sec:american_schools_preprocessing}). We let $X$ denote all
background variables except racial composition which we denote by $Z$. Race is
given as whole percentages of `Asian', `Black', `Hispanic' and `White' students
which we rescale so that $Z \in \Delta^{3}$. We are interested in the effect of
diversity, again measured by one minus the Gini coefficient, on the average
grade of the schools (either ELA or math). We repeat the analysis in
Section~\ref{sec:diversity_income} on this real data and the results can be seen in
Table~\ref{tab:american_schools}.

The \texttt{naive\_diversity} method indicates that there is a
  significant effect of diversity on grades even when controlling for
  additional covariates $X$. In contrast, the CDI estimates smaller
  positive effects for ELA and effects that cannot be distinguished
  from $0$ for math. This discrepancy between the two methods
  indicates that there is a significant amount of compositional
  confounding in this data, which invalidates the marginal analysis.

\begin{table}
  \centering
  \begin{tabular}{l|cc|cc}
    \toprule
      \multirow{2}*{Method}  & \multicolumn{2}{c|}{Effect on ELA grades} & \multicolumn{2}{c}{Effect on math grades} \\ 
                             & Estimate & $95\%$ CI & Estimate & $95\%$ CI \\
    \midrule
    $\texttt{naive\_diversity}$ & $\textcolor{LKgreen}{\mathbf{1.079}}$ & $(0.934, 1.223)$ & $\textcolor{LKgreen}{\mathbf{1.231}}$ & $(1.040, 1.422)$\\
    $\texttt{naive\_diversity} \given X$ & $\textcolor{LKgreen}{\mathbf{0.300}}$ & $(0.192, 0.408)$ & $\textcolor{LKgreen}{\mathbf{0.354}}$ & $(0.212, 0.497)$\\
    $\cdi_{\mathrm{Gini}}$ & $\textcolor{LKgreen}{\mathbf{0.348}}$ &$(0.108, 0.589)$ & $0.101$ & $(-0.171, 0.373)$\\
    $\cdi_{\mathrm{Gini}} \given X$ & $\textcolor{LKgreen}{\mathbf{0.161}}$ & $(0.002, 0.320)$ & $0.042$ & $(-0.171, 0.255)$\\
    \bottomrule
  \end{tabular}
  \caption{Table of estimates in the New York schools data experiment. Bold
  green numbers indicate a significant positive effect of diversity at a $5\%$
  level. The \texttt{naive\_diversity} method indicates that there is a
  significant effect of diversity on grades even when controlling for additional
  covariates $X$. The CDI (with or without controlling for $X$) generally
  estimates smaller positive effects or even effects that cannot be
  distinguished from $0$, indicating a significant amount of compositional
  confounding in this data.}\label{tab:american_schools}
\end{table}

\subsection{Variable influence measures when predicting BMI from gut
  microbiome}\label{sec:microbiome}

In this section, we analyze the `American Gut'
\citep{mcdonald2018american} dataset which contains a collection of
microbiome measurements and metadata from over 10,000 human
participants. We use the pre-processed version used in
\citet{bien2021tree}
(available
at
\url{https://github.com/jacobbien/trac-reproducible/tree/main/AmericanGut}). Our
focus is on the relationship between body mass index (BMI) and gut
microbiome composition at the species-level and our primary goal is to
determine which species are important for predicting BMI. We consider
the subset of individuals from the United States with BMI between 15
and 40, height between 145 and 220 cm and age above 16 years. To
reduce the dimensionality of the microbiome composition (i.e., the
number of species), we perform an initial variable screening and only
include species that appear in at least $10\%$ of the observations and
where the mean abundance is greater than $10^{-4}$ (this is after
scaling the absolute counts by the total number of reads for each
observation). The resulting dataset consists of 4,581 observations of
$\mathrm{BMI}$ measurements $Y\in\mathbb{R}$ and the relative
abundances of $561$ microbial species $Z\in\Delta^{560}$.

In Section~\ref{sec:microbiome_consistency} we show that non-compositional
variable importance measures such as permutation-based feature importance or
nonparametric $R^2$ exhibit undesirable properties when applied to this data. In
particular, there is little correlation between important features identified by
applying different nonparametric regression methods.

As a consequence of these findings, we focus on analyzing different types of
average perturbation effects and comparing them with commonly used compositional
feature importance measures, namely those based on log-contrast models. The
linear log-contrast model assumes that there exists $\beta \in \mathbb{R}^d$
such that
\[
  \E[Y \given Z] = \sum_{j=1}^d \beta^j \log(Z^j) \quad \text{with} \quad \sum_{j=1}^d \beta^j = 0.
\]
We fit this model and an $\ell^1$-penalized version of the same model and seek
to use the coefficients as feature influences. Since the log-contrast model is
only defined on the open simplex, that is, it does not permit zeros in the data,
one usually adds a `pseudocount' to each observation before rescaling to the
simplex. We follow a standard recommendation and use the minimum non-zero
observation of $Z$ divided by $2$ (this is not required for the
perturbation-based measures). To illustrate the difference between the
approaches, we present the $p$-values from $\cfi_{\mult}$, $\cfi_{\unit}$,
$\cke$ and the log-contrast method for several species in the left of
Figure~\ref{fig:microbiome_confidence}. From these $p$-values, we can see that
the log-contrast and $\cke$ are highly correlated and appear to be selecting
similar species. This is despite the fact that the log-contrast model is
supposed to model the continuous effect of each species. The similarities to the
$\cke$ hence indicates (likely due to the pseudocounts) that the log-contrast
coefficients are instead capturing the effects of presence and absence of
species.  To further investigate this, for different values of pseudocounts we
plot log-contrast estimates (scaled by the standard deviation of $\log(Z^j +
\text{pseudocount})$) in the right of Figure~\ref{fig:microbiome_confidence}.
This further highlights the dependence of the log-contrast coefficients on the
used pseudocount. The fact that the perturbation-based effects explicitly
account for zeros and allow us to distinguish between continuous effects (e.g.,
$\cfi_{\unit}$ or $\cfi_{\mult}$) and presence/absence effects $\cke$ are an
important feature of the perturbation-based approach.  In
Section~\ref{sec:additional_microbiome} we provide additional plots which
compare the significant hits of $\cfi_{\mult}$, log-contrast estimates and a
naive marginal analysis and illustrate the differences between $\cfi_{\unit}$
and $\cke$ when compared to estimates using $\cfi_{\mult}$. 

\begin{figure}[!ht]
  \centering
  \includegraphics*[width=\textwidth]{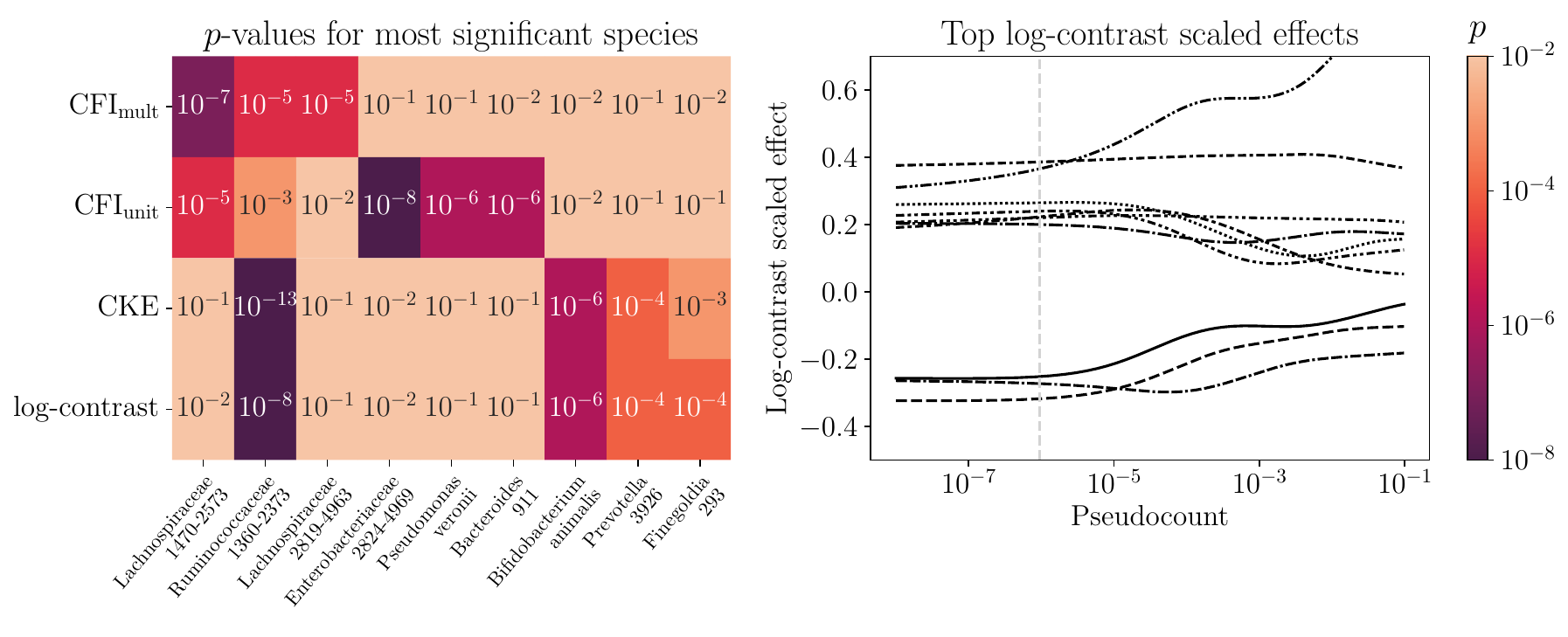}
  \caption{The left
    plot shows the $p$-values from different methods for the three most
    significant species from each method (text color only changes to
    enhance readability). We see that the log-contrast model is mostly
    selecting the same species as the $\cke$ which explicitly captures effects
    of zeros. The bottom right plot shows the log-contrast estimates scaled by
    the standard deviation of $\log(Z^j + \text{pseudocount})$ for the species
    listed on the right plot as a function of the added pseudocount with the
    grey dashed line indicating the pseudocount used in the remaining
    experiments. We see that the relative influences of the different species is
    highly contingent on the pseudocount.}
  \label{fig:microbiome_confidence} 
\end{figure}

\section{Discussion}
We introduced perturbations on the simplex as a formalization of hypothetical
changes to a compositional vector and used them to construct interpretable
compositional effect measures, which we call average perturbation effects. In
our view, perturbations provide a natural way for practitioners to formalize
research questions about non-standard changes to a system. We illustrated this,
for example, by using perturbations to formalize the otherwise vague concept of
an effect of diversity as well as to construct nonparametric feature influence
measures for compositional coordinates.  Unfortunately, many applied researchers
working with compositional data, generally due to a lack of alternatives, still
rely on marginal effects by analyzing dependence between a summary statistic and
a response (e.g., by regressing a response on a diversity measure). As we
demonstrated, such approaches can be misleading since they do not account for
the confounding induced by the remaining variation in the compositional vector.
We hope that this work can draw attention to such pitfalls and improve the
analysis of such data by providing a set of easily accessible alternative
methods.

We additionally proposed efficient estimators for the average perturbation
effects based on semiparametric theory and provided theoretical guarantees that
enable us to perform valid inference. Most statistical methods for compositional
data analysis to date are based on the parametric log-ratio approach introduced
in the seminal paper \citet{aitchison1982statistical} or other parametric and
transformation-based approaches. Our nonparametric approach goes beyond these
established methods and allows practitioners to apply theoretically justified
procedures to modern compositional data applications. Importantly, our target
parameters and estimators circumvent the zero problem central to any log-ratio
based analysis and naturally apply to high-dimensional data. This makes our
proposed estimators applicable to a range of modern applications that previously
had to rely on ad hoc methods or simply ignore the compositional constraints. An
additional advantage of a perturbation-based analysis is that it easily
facilitates a causal rather than observational interpretation, allowing
practitioners to make standard causal inference assumptions to derive rigorous
causal conclusions.

In this work we have focused on data constraints induced by the simplex. There
are however many other types of constraints that can arise, e.g., directional
data, general manifold-valued data or otherwise constrained data. The
idea of defining effect measures via perturbations easily extends to those
settings. Similarly, perturbations can also be used to define meaningful effects
of summary statistics on unconstrained data; something which has received
surprisingly little attention in the literature but relates to problems in
causal inference (e.g., how to define causal variables). Furthermore, there
may be settings where the perturbation of interest depends on additional
covariates which then changes the resulting target of inference.
Nevertheless, for all of the above extensions, more work is
required to develop reparametrizations that can be used to derive efficient
estimators akin to ours using semiparametric theory. We hope our work will spark
new developments in this area and help provide practitioners with the
statistical tools they need.

\section*{Acknowledgements}
We thank Alexander Mangulad Christgau and Lucas Kook for helpful discussions. We
also thank Viktor Skjold Lindegaard for spotting some minor errors in an earlier
version of the manuscript. ARL and NP were supported by a research grant
(0069071) from Novo Nordisk Fonden.

\bibliographystyle{abbrvnat}
\bibliography{bibliography}

\appendix
\renewcommand{\appendixpagename}{Supplementary material}

\renewcommand{\theequation}{S\arabic{equation}}
\renewcommand{\thesection}{S\arabic{section}}
\renewcommand{\thetheorem}{S\arabic{theorem}}
\renewcommand{\thefigure}{S\arabic{figure}}
\renewcommand{\theassumption}{S\arabic{assumption}}
\renewcommand{\thealgorithm}{S\arabic{algorithm}}
\renewcommand{\thedefinition}{S\arabic{definition}}
\renewcommand{\theproposition}{S\arabic{proposition}}
\renewcommand{\thelemma}{S\arabic{lemma}}
\renewcommand{\theexample}{S\arabic{example}}
\renewcommand{\thetable}{S\arabic{table}}
\setcounter{section}{0}
\setcounter{equation}{0}
\setcounter{theorem}{0}
\setcounter{figure}{0}
\setcounter{algorithm}{0}
\setcounter{lemma}{0}
\setcounter{assumption}{0}
\setcounter{theorem}{0}
\setcounter{definition}{0}
\setcounter{proposition}{0}
\setcounter{example}{0}
\setcounter{table}{0}

\newpage 
\appendixpage

All references to sections, equations and result in the supplementary material are prefixed with an `S'. All other references point to the main text. The supplementary material is structured as follows:
\begin{itemize}
  \item Section~\ref{sec:diff}: Differentiability on the simplex
  \item Section~\ref{sec:perturbations_supplement}: Perturbations: background, theory and examples
  \item Section~\ref{sec:causal_models}: Perturbations as causal quantities
  \item Section~\ref{sec:extra_est_and_alg}: Inference and algorithms for perturbation effects
  \item Section~\ref{sec:extra_num}: Details and additional results from main numerical experiments
\end{itemize}

\section{Differentiability on the simplex}
\label{sec:diff}
In this section we provide the background needed to define
differentiability of a function defined on the simplex, which we use in
Section~\ref{sec:perturbations}, by means of the theory of smooth
manifolds \citep[see][for a comprehensive treatment of this
theory]{lee2012introduction}. The simplex $\Delta^{d-1}$ can be viewed as a
smooth $(d-1)$-dimensional manifold with corners \citep[][p.\
415]{lee2012introduction}, that is, the simplex locally behaves like a Euclidean
half-space $\mathcal{H}^{d-1}:=\{x \in \mathbb{R}^{d-1} \mid \forall j \in
[d-1]: x^j \geq 0\}$.

To apply the tools of differential geometry, we need to equip the
simplex with a smooth atlas, that is, a set of mappings from subsets
of the simplex to subsets of the half-space satisfying certain
smoothness conditions. In particular, the mappings need to be
homeomorphisms from subsets of $\Delta^{d-1}$ to relatively open
subsets of $\mathcal{H}^{d-1}$ with the additional property that if
$\varphi_1 : \mathcal{Z}_1 \to \mathcal{U}_1$ and
$\varphi_2: \mathcal{Z}_2 \to \mathcal{U}_2$ are both elements of the
atlas, the composition $\varphi_2 \circ \varphi_1^{-1} : \varphi_1(\mathcal{Z}_1 \cap \mathcal{Z}_2) \to \varphi_2(\mathcal{Z}_1 \cap \mathcal{Z}_2)$ is a
diffeomorphism. A
mapping is differentiable at a boundary point if there exists an
extension of the mapping defined in a neighborhood of the boundary
point that is differentiable at this point.

\begin{definition}[the simplex as a smooth manifold]
  \label{def:simplex_manifold}
  Define
  $U := \{x \in \mathcal{H}^{d-1} \mid \sum_{j=1}^{d-1} x^j < 1\}$,
  which is relatively open in $\mathcal{H}^{d-1}$. For each
  $j \in [d]$, define
  $\mathcal{E}_j := \{e_1, \dots, e_{j-1}, e_{j+1}, \dots, e_d\}$,
  $\mathcal{Z}_j := \Delta^{d-1} \setminus \mathcal{E}_j$ and
  $\varphi_j : \mathcal{Z}_j \to U$ by
  $\varphi_j(z) := (z^1, \dots, z^{j-1}, z^{j+1}, \dots, z^d)$. We
  call the set
  $\mathcal{A}:= \{(\mathcal{Z}_j, \varphi_j) \mid j\in[d]\}$ the
  \emph{canonical smooth atlas on $\Delta^{d-1}$} and each element of
  $\mathcal{A}$ is a \emph{canonical chart on $\Delta^{d-1}$}.
\end{definition}

For a canonical chart $(\mathcal{Z}_j, \varphi_j)$ the mapping
  $\varphi_j$ is indeed a homomorphism since both $\varphi_j$ and
  $\varphi_{j}^{-1}$ are continuous.

\begin{definition}[differentiability on the simplex]
  Let $\mathcal{Z} \subseteq \Delta^{d-1}$ and
  $f: \mathcal{Z} \to \mathbb{R}$. We say that $f$ is \emph{differentiable at $z \in \mathcal{Z}$} if there exists a canonical chart $(\mathcal{Z}_j, \varphi_j)$ with $z \in \mathcal{Z}_j$, an open set $O \subseteq \mathbb{R}^{d-1}$ with $\varphi_j(z) \in O$ and a function $g: O \to \mathbb{R}$ such that
  \begin{enumerate}[(a)]
    \item $g$ is differentiable at $\varphi_j(z)$,
    \item for all $u \in U \cap O$, it holds that $g(u) = f(\varphi_j^{-1}(u))$.
  \end{enumerate}
\end{definition}
If $\varphi_j(z)$ is not a boundary point of $U$ in
$\mathbb{R}^{d-1}$, then we can take $g=f \circ \varphi_j^{-1}$. It is
easy to check that if $f: \Delta^{d-1} \to \mathbb{R}$ is
differentiable at $z$ then for any $j \in [d]$ with
$z \in \mathcal{Z}_j$ the function $f \circ \varphi_j^{-1}$ must be
differentiable at $\varphi_j(z)$. We will use these tools to 
prove results about perturbations below.

\section{Perturbations: background, theory and examples}
\label{sec:perturbations_supplement}
In this section we provide additional details on perturbations and perturbation
effects. Section~\ref{sec:related_literature} provides background on the history
of effect measures and compositional data. Section~\ref{sec:toy_examples}
contains two examples to motivate why applying standard methods to compositional
predictors can be misleading. Section~\ref{sec:well-defined_perturbations}
discusses well-definedness of the perturbation effects defined in
Definition~\ref{def:avg_per_eff}. Section~\ref{sec:deriv-iso_proof} contains a
proof of Proposition~\ref{prop:derivative-isolating}.
Section~\ref{sec:derivative-isolating} contains a general result on the
construction of a derivative-isolating reparametrization and examples of the use
of the result. Section~\ref{sec:amalgamations} includes a generalization of the
`individual component' effect measures that we presented in
Section~\ref{sec:perturbations}. Section~\ref{sec:other_geometries} investigates
how perturbations can be derived starting from the Aitchison geometry and how
such perturbations compare to the ones we construct in
Section~\ref{sec:perturbations}. Section~\ref{sec:misc_results} contains two
results relating our average perturbation effects to well-known classical effect
measure.

\subsection{Historical review of effect measures for compositional data}
\label{sec:related_literature} 
In this section we provide historical context for our proposal by
summarizing relevant developments from several different statistical fields
related to our proposal. Our proposed perturbation-based framework is
perhaps most closely related to the literature on causal inference,
semiparametric inference and targeted learning
\citep{bickel1993efficient,tsiatis2006semiparametric,vanderlaan2011targeted}.
\citet{spirtes2004causal} discuss some of the problems of modelling causal
effects when certain variables are deterministic functions of other variables.
This discussion was later continued by \citet{woodward2015interventionism} and
new causal models have been developed to address this problem
\citep{beckers2023causal}. We contribute to these developments by providing a
novel framework for causal effect estimation when some covariates are
compositional (see Section~\ref{sec:causal_models}).

An important aspect of causal effect estimation more generally is the need for
an interpretable summary measure of the effect. Historically, such interpretable
measures originated from parametric models that were easy to interpret. Much
work has since gone into the interpretation of and uncertainty quantification
for such parameters when the model is misspecified. Perhaps one of the most
impactful such analyses was given by \citet{white1980heteroskedasticity} who
provided an estimation procedure for the coefficients in a linear model that did
not require the model to be correctly specified. To what extent parameters based
on linear models are useful and interpretable when the model is misspecified
remains an active research topic particularly in econometrics
\citep{aronow2016does,graham2022semiparametrically,sloczynski2022interpreting}.
More recently, interpretable parameters based on partially linear models have
been proposed and analyzed extensively
\citep{chernozhukov2018double,vansteelandt2022assumption}. Despite these and
similar parameters depending on potentially complicated nuisance functions, it
is often possible to estimate them efficiently under relatively mild conditions,
as we also do here, by utilizing semiparametric theory
\citep{kennedy2023semiparametric}.

Model-based effect estimation with compositional variables has a long history
and was perhaps first thoroughly investigated under the name `experimental
design with mixtures' \citep{claringbold1955use, cornell2002experiments}. While
parts of the literature focused on the problem of selecting design points $z_1,
\dots, z_n \in \Delta^{d-1}$ in optimal ways, a key aspect was also the problem
of constructing reasonable regression models in this setting.
\citet{scheffe1958experiments} pointed out that polynomial models are
overidentified with compositional data due to the sum-to-one constraint and
provided a canonical choice of polynomials. \citet{becker1968models} critiqued
the use of polynomials for compositional features and instead provided
parametric models with homogenous functions of degree one as building blocks.
\citet{cox1971note} argued that the polynomials of Scheffé would often not
result in interpretable coefficients but that one should instead define the
coefficients differently, e.g.\ so that the $j$th coefficient in a linear model
corresponds to increasing the value of $Z^j$ while scaling the remaining
components towards a fixed and problem-dependent `standard mixture' $z_* \in
\Delta^{d-1}$. It is possible to define a model-agnostic version of Cox's linear
effects using our perturbation-based framework (see Proposition~\ref{prop:cox}
in Section~\ref{sec:misc_results} of the supplementary material).
\citet{draper1977mixtures} developed predictive polynomial models based on the
reciprocal of the components of $Z$ to cope with cases where the expected
response diverges at the boundary. \citet{aitchison1984log} introduced the
log-contrast regression model where the linear version corresponds to a linear
regression on the log-transformed $Z$ with a sum-to-zero constraint on the
coefficients. Log-contrast models have seen extensive use in practice and have
since been generalized to high-dimensional $Z$
\citep{lin2014variable,simpson2021classo}. One of our proposed feature influence
measures, the compositional feature influence with multiplicative speed (see
Example~\ref{ex:cfi}), corresponds exactly to the log-contrast coefficients when
the model is correctly specified (see Proposition~\ref{prop:log-contrast} in
Section~\ref{sec:misc_results} of the supplementary material or
\citet{huang2022supervised} where this was first shown). Defining practical
predictive models for compositional data remains an active topic of research
\citep{xiong2015virus,fiksel2022transformation,
li2023its,bhaduri2025compositional}.

Naturally, our proposed average perturbation effects are also connected to the
broader literature on compositional data analysis.  It is well-known that many
conventional statistical procedures become invalid when applied to compositional
data. For example, already \citet{pearson1897mathematical} drew attention to the
fact that correlations between individual compositional components can be
spurious due to the sum-to-one constraint. The
most prevalent class of general methods is based on the seminal paper by
\citet{aitchison1982statistical} which introduces the log-ratio approach. In its
most basic form it consists of transforming the simplex data via the additive,
centered or isometric log-ratio transforms
\citep{aitchison1982statistical,aitchisonbook,egozcue2003isometric}
and then performing the statistical analysis in the transformed space. Our
approach is not based on log-ratios, but it is possible to construct
perturbations from this perspective (see Section~\ref{sec:other_geometries}).
The idea of transforming compositional data via the log-ratio transform to a
simpler sample space has been applied successfully in many applications, however
this in principle requires $Z$ to take values in the open simplex. This is known
as the zero problem and has sparked a large body of research developing various
types of zero-imputation schemes \citep[e.g.,][]{martin2000zero,
martin2003dealing, martin2015bayesian, palarea2008modified, lubbe2021comparison}
that can be used before applying the log-ratio approach.
\citet{greenacre2023aitchison} provide an overview of current state-of-the-art
compositional data analysis and highlights some of the shortcomings in existing
transformation-based approaches which are further highlighted in
\citet{scealy2014colours}. While many of these approaches are sensible in
specific applications, we believe that it is generally preferable to modify the
statistical procedure to fit the data rather than vice versa.

There are other approaches to the analysis of data on the simplex, both in the
form of alternative transformations, for example power-transformations
\citep{greenacre2009} or spherical transformations \citep{scealy2011,
li2022reproducing}, or by appealing to more general notions of object-oriented
data analysis \citep{marron2021object}. There has also been much recent work
on manifold-valued data where compositional data is a special case
\citep{dai2018principal,dai2021modeling,petersen2022modeling,kurisu2024geodesic}.

\subsection{Illustrative examples}
\label{sec:toy_examples}
In this section we provide two helpful examples meant to motivate the need to
account for the compositional nature of covariates. The examples are meant to 
supplement the development in Section~\ref{sec:intro}.

\begin{example}[Effects of presence or absence of microbes]
  \label{ex:absence_microbes}
  Consider the problem of determining whether the
  presence or absence of certain microbes in the human gut is affecting a
  disease. Microbial abundances are commonly measured using genetic sequencing
  (e.g., targeted amplicon or shotgun metagenomic sequencing) which in most
  cases only measures the composition and not the absolute abundances. Assume we
  have measurements from a cohort of patients, where for each patient the
  relative abundance of all microbes in their gut $Z\in\Delta^{d-1}$ and a
  disease indicator $Y\in\{0,1\}$ ($1$ encoding disease) has been
  recorded. While estimating causal effects in such a setting is often too
  ambitious, one could hope to screen for relevant microbes that could then be
  considered for follow-up randomized studies.  One possible approach
  \citep[e.g.,][]{hu2021rarefaction} would be to consider a fixed microbe
  $j$, define the treatment variable $T:= \ind_{\{Z^j = 0\}}$
  corresponding to the presence or absence of microbe $j$ and then estimate a
  marginal treatment effect given by

    \[
      \mathrm{MTE} := \E[Y \given T = 1] - \E[Y \given T = 0] =
      \mathbb{P}(Y = 1 \given T = 1) - \mathbb{P}(Y = 1 \given T = 0),
    \]
  where we implicitly assume $\mathbb{P}(T=0)>0$ and $\mathbb{P}(T=1)>0$. It is
  well-known that this type of marginal dependence analysis can be misleading if
  dependencies exist between the individual components.\footnote{This is true
  when $Z\in \mathbb{R}^d$ is unconstrained, but the
  problem is even more severe when $Z$ is compositional as the sum-to-one
  constraint enforces additional dependency between the components of $Z$.} To
  illustrate this, assume the disease indicator $Y$ given the composition
  $Z$ has conditional distribution for all $z \in \Delta^{d-1}$ given by
  a Bernoulli distribution with 
    \[
      \mathbb{P}(Y = 1 \given Z = z) = \tfrac{3}{4}+\tfrac{1}{8}\ind_{\{z^1=0\}}-\tfrac{1}{2}\ind_{\{z^2 = 0\}}
    \]
  that is, the absence of $Z^1$ increases while the absence of $Z^2$
  reduces the probability of having the disease. In a
  numerical simulation of this generative model with dependence between $Z^1$
  and $Z^2$, the effect of $Z^1$ is instead estimated to decrease the
  probability of disease as shown in Table~\ref{tab:toy_illustration}
  (details are provided in Section~\ref{sec:intro_sim} of the supplementary
  material). If $Z \in \mathbb{R}^d$ was
  unconstrained, we could adjust for the dependence in the remaining
  coordinates by estimating a conditional treatment effect such as 
  \[
    h: z^{-j} \mapsto \E[Y \given T = 1, Z^{-j} = z^{-j}] -  \E[Y \given T = 0, Z^{-j} = z^{-j}]
  \]
  and marginalizing it to produce a single effect measure
  $\E[h(Z^{-j})]$.\footnote{In the literature on causal inference (and under
  appropriate causal assumptions), $h$ corresponds to the conditional average
  treatment effect (CATE).} Estimating this quantity is a well-studied problem,
  e.g.\ using the AIPW estimator \citep{robins1994estimation} or using a
  partially linear model \citep{chernozhukov2018double}. Unfortunately, neither
  of the estimators is well defined when $Z \in \Delta^{d-1}$ as it is not
  possible to change $T$ and $Z^{-j}$ independently of each other and, in fact,
  for all $z^{-j}$, we have $h(z^{-j}) = 0$ as $Y$ is conditionally independent
  of $T$ given $Z^{-j}$ due to the sum-to-one constraint which makes $T$ fully
  determined by $Z^{-j}$. In the Table~\ref{tab:toy_illustration}, we provide
  the results from applying our proposed marginalized conditional treatment
  effect the \emph{compositional knockout effect} to $Z^1$ $(\cke^1)$ which we
  introduce in Section~\ref{sec:simplex_effects}.
\end{example}

\begin{table}[t]
  \begin{minipage}[t]{.5\linewidth}
    \vspace{0pt}
    \centering
    \begin{tabular}{@{}lcc@{}}
      \multicolumn{3}{c}{Example~\ref{ex:absence_microbes}}\\
      \toprule
      method        & estimate & 95\%-CI \\ \midrule
      $\mathrm{MTE}$ & $\textcolor{LKred}{\mathbf{-0.12}}$  & $(-0.17, -0.06)$ \\
      $\cke^1$ using \texttt{rf} & $\textcolor{LKgreen}{\mathbf{0.15}}$ & $(0.10, 0.21)$\\
      \bottomrule
    \end{tabular}
  \end{minipage}%
  \begin{minipage}[t]{.5\linewidth}
    \vspace{0pt}
    \centering
    \begin{tabular}{@{}lcc@{}}
      \multicolumn{3}{c}{Example~\ref{ex:increase_diversity}}\\
      \toprule
      method        & estimate & 95\%-CI\\ \midrule
      OLS of $Y$ on $-G(Z)$    & $\textcolor{LKred}{\mathbf{-1.57}}$         & $(-2.18, -0.97)$              \\
      $\cdi_{\gini}$ using \texttt{rf} & $\textcolor{LKgreen}{\mathbf{0.63}}$ & $(0.14, 1.12)$ \\\bottomrule
    \end{tabular}
  \end{minipage}
  \caption{Numerical illustration of Examples~\ref{ex:absence_microbes} and
    \ref{ex:increase_diversity}. The first row in each table corresponds to the
    marginal treatment effect (MTE or OLS of $Y$ on $-G(Z)$) estimate based on a
    correctly specified linear model. The second row corresponds to estimates of
    two of the target parameters given in Table~\ref{tab:targets}
    with random forest (\texttt{rf}) as nuisance function estimators. Bold red
    numbers indicate a significant negative effect at a $5\%$ level, while bold
    green indicates a significant positive effect. Left: The estimated MTE is
    negative, even though the true effect is positive. The compositional
    knock-out effect ($\cke^1$), which adjust for the simplex structure, results
    in the correct sign. Right: The estimated marginal effect of an increase in
    diversity (measured by the negative Gini coefficient, i.e., $-G(Z)$) is
    negative even though the true effect is equal to $1$ by construction. The
    compositional diversity effect $\cdi_{\gini}$ correctly recovers the effect.
    Details on both experiments are provided in Section~\ref{sec:intro_sim}.}
  \label{tab:toy_illustration}
\end{table}

\begin{example}[Effects of increased diversity]
  \label{ex:increase_diversity}
  Consider the problem of determining the effect of an increase in diversity of
  $Z$ on a response $Y$. Such a question is of immediate interest in many
  applied fields, e.g.\ when investigating whether more racially diverse
  university programs improve educational outcomes \citep{antonio2004effects} or
  whether increased income equality boosts economic output
  \citep{shin2012income}. The prevalent approach is to summarize diversity via a
  one-dimensional summary statistic $G(Z)$ (e.g., the Gini coefficient,
  generalized entropy or Gini-Simpson index) and analyze the correlation between
  $G(Z)$ and $Y$. As in Example~\ref{ex:absence_microbes} this type of marginal
  dependence analysis is potentially misleading in practice. In
  Table~\ref{tab:toy_illustration} we illustrate this on a specific numeric
  example, where the true causal effect is positive, but the marginal method
  leads to a negative effect of diversity (details are provided in
  Section~\ref{sec:intro_sim} of the supplementary material). To avoid this type
  of confounding bias, one would again like to adjust for the remaining
  dependence within $Z$. However, since $G(Z)$ depends on all of $Z$ existing
  adjustment approaches are not applicable. The \emph{compositional diversity
  effect} (with Gini-speed) which we introduce in
  Section~\ref{sec:simplex_effects} appropriately adjusts for the confounding as
  shown in Table~\ref{tab:toy_illustration}. A more in-depth example in which
  the marginal approach is misleading is given in
  Section~\ref{sec:diversity_income} where we estimate the effect of racial
  diversity on income on a semi-synthetic
  dataset based on US census data. A real application is given in
  Section~\ref{sec:american_schools} where the effect of racial diversity on
  school grades is analyzed using data from schools in New York. Unlike in
  Example~\ref{ex:absence_microbes}, the problem of not being able to adjust
  with existing techniques is not solely due to the simplex-constraint and
  similarly appears for summary statistics of unconstrained data. Our
  perturbation-based framework can also be extended to those settings, but as
  the use of this type of analysis is particularly prevalent in compositional
  data applications we restrict our attention to this setting.
\end{example}

\subsection{Well-definedness of perturbation effects}
\label{sec:well-defined_perturbations}
In this section, we argue that the perturbation effects we define in
Definition~\ref{def:avg_per_eff} are well-defined. This is immediate for the
directional effect and similarly it is clear why we need $\mathbb{P}(Z \neq
\psi(Z, 1)) > 0$ for the binary effect. To see why we need that $z \mapsto
\psi(z, 1)$ preserves null sets, we can argue as follows.

If $z \mapsto \psi(z, 1)$ does not preserve null sets, there exists a set $A
\subseteq \mathcal{Z}$ such that $\mathbb{P}(Z \in A) = 0$ but
$\mathbb{P}(\psi(Z, 1) \in A) > 0$. Let $f_1$ be any version of $z \mapsto \E[Y
\given Z =z]$ and let $f_2: z \mapsto f_1(z) + \ind_{A}(z)$. Since $A$ is a null
set for the distribution of $Z$, $f_2(Z)$ is a valid version of the conditional
expectation of $Y$ given $Z$. However, 
\[
  \E[f_1(\psi(Z, 1))] \neq \E[f_1(\psi(Z, 1))] + \mathbb{P}(\psi(Z, 1) \in A) = \E[f_2(\psi(Z, 1))],
\]
so it is necessary for $z \mapsto \psi(z, 1)$ to preserve null sets for the
binary perturbation effect to be well-defined.

On the contrary, if $z \mapsto \psi(z, 1)$ preserves null sets, then for any two
version $f_1$ and $f_2$ of $z \mapsto \E[Y \given Z = z]$ we have that
$\mathbb{P}(f_1(Z) = f_2(Z)) = 1$ and thus also $\mathbb{P}(f_1(\psi(Z, 1)) =
f_2(\psi(Z, 1))) = 1$. This immediately yields that $\E[f_1(\psi(Z, 1))] =
\E[f_2(\psi(Z, 1))]$, so the binary perturbation effect is well-defined.

\subsection{Proof of Proposition~\ref{prop:derivative-isolating}}
\label{sec:deriv-iso_proof}
In this section we prove Proposition~\ref{prop:derivative-isolating} using the
tools from Section~\ref{sec:diff}.
\begin{proof}
It suffices to show that for all $z \in \mathcal{Z}$
\begin{equation}
  \label{eq:deriv-iso-z}
  \partial_\gamma f(\psi(z, \gamma)) \bigm|_{\gamma = 0} = \partial_\ell g(\ell, \phi^W(z)) \bigm|_{\ell = \phi^L(z)},
\end{equation}
or equivalently for all $w \in \mathcal{W}$ and $\ell \in \mathcal{L}_w$
\begin{equation}
  \label{eq:deriv-iso-lw}
  \partial_\gamma f(\psi(\phi^{-1}(\ell, w), \gamma)) \bigm|_{\gamma = 0} = \partial_{\ell'} g(\ell', w) \bigm|_{\ell' = \ell}.
\end{equation}
Fix $z \in \mathcal{Z}$, set $(\ell, w) := \phi(z)$ and define $\psi_z : \gamma
\mapsto \psi(z, \gamma)$. Choose a canonical chart from the canonical smooth
atlas $(\mathcal{Z}_j, \varphi_j) \in \mathcal{A}$ with $z \in \mathcal{Z}_j$
and note that
\[
  f \circ \psi_z = \underbrace{f \circ \varphi_j^{-1}}_{\widetilde{f}} \circ \underbrace{\varphi_j \circ \psi_z}_{\widetilde{\psi}_z},
\]
where $\widetilde{f}: U \to \mathbb{R}$ and $\widetilde{\psi}_z :
\mathcal{I}_\psi(z) \to U$ with $U$ defined in
Definition~\ref{def:simplex_manifold}. We can now apply the chain rule to
compute the left-hand side of \eqref{eq:deriv-iso-z}: 
\[
  \partial_\gamma f(\psi(z, \gamma)) \bigm|_{\gamma = 0} = \nabla \widetilde{f}(z)^\top \partial_{\gamma} \widetilde{\psi}_z(\gamma) \bigm|_{\gamma = 0}.
\]
Similarly, 
\[
  g = f \circ \phi^{-1} =  \underbrace{f \circ \varphi_j^{-1}}_{\widetilde{f}} \circ \underbrace{\varphi_j \circ \phi^{-1}}_{\widetilde{\phi}^{-1}},
\]
where $\widetilde{\phi}^{-1}: \mathrm{Im}(\phi) \to U$. We can again apply the
chain-rule to compute the right-hand side of \eqref{eq:deriv-iso-lw}
\[
  \partial_{\ell'} g(\ell', w) \bigm|_{\ell' = \ell} = \nabla \widetilde{f}(\widetilde{\phi}^{-1}(\ell, w))^\top \partial_{\ell'} \widetilde{\phi}^{-1}(\ell', w) \bigm|_{\ell' = \ell}.
\]
Combining the above, we conclude that \eqref{eq:deriv-iso-lw} is satisfied if
\[
  \partial_{\gamma} \widetilde{\psi}_z(\gamma) \bigm|_{\gamma = 0} = \partial_{\ell'} \widetilde{\phi}^{-1}(\ell', w) \bigm|_{\ell' = \ell}.
\]
This is true by \citet[][Proposition 3.24]{lee2012introduction} under
the assumption that $\phi$ is derivative-isolating, hence proving the
result. We can therefore write
\[
  \tau_\psi = \E[\partial_{\gamma} f(\psi(Z, \gamma)) \bigm|_{\gamma = 0}] = \E[\partial_{\ell} g(\ell, W) \bigm|_{\ell = L}]
\]
as desired.

\end{proof}

\subsection{Finding derivative-isolating reparametrizations}
\label{sec:derivative-isolating}
In this section we provide a general method for finding derivative-isolating 
reparametrizations based on the development in Section~\ref{sec:deriv-iso}. 
Our main result is as follows.

\begin{proposition}[Finding derivative-isolating reparametrizations]
\label{prop:find_deriv-iso}
For $\mathcal{Z} \subseteq \Delta^{d-1}$, let
$E : \mathcal{Z} \to \Delta^{d-1}$ be an endpoint function satisfying
for all $z\in\mathcal{Z}$ that $E(z)\neq z$. Denote by
$\mathcal{E}:= E(\mathcal{Z})$ the set of endpoints and define for all
$z^* \in \mathcal{E}$ and all $u \in \mathbb{R}^d$ the sets
$\mathcal{I}_{(z^*, u)} := \{\delta \in (0, \infty) \mid z^* + \delta
u \in \Delta^{d-1}\}$ and
$\mathcal{W} := \{(w_E, w_v) \in \mathcal{E} \times \mathbb{R}^{d}
\mid \|w_v\|_1 = 1,\, \mathcal{I}_{w} \neq \emptyset\}$. Finally,
define the unit speed reparametrization
$\phi_{\unit} = (\phi_{\unit}^L,
\phi_{\unit}^W):\mathcal{Z}\rightarrow \mathbb{R} \times \mathcal{W}$
for all $z\in\mathcal{Z}$ by $\phi_{\unit}^L(z) := -\|E(z)- z\|_1$ and
$\phi_{\unit}^W(z):=(E(z), v_E(z))$ with $v_E$ defined in \eqref{eq:endpoint_dir}. Then the following approaches lead
to perturbations with corresponding derivative-isolating
reparametrizations.

\begin{enumerate}[(a)]
\item \textit{(Speed)} Let $s:\mathcal{Z}\rightarrow (0, \infty)$ be a given
  speed function. For all $w \in \mathcal{W}$ define
  $s_w:\mathcal{I}_w\rightarrow (0,\infty)$ for all $\delta\in \mathcal{I}_w$ by
  $s_w(\delta):=s(\phi_{\unit}^{-1}(-\delta, w))$ and let $t_w : \mathcal{I}_w
  \to \mathbb{R}$ denote a solution to
  \begin{equation}
    \label{eq:diff_eq}
    -s_{w}(\delta)^{-1} = \partial_{\delta}t_{w}(\delta).
  \end{equation}
  Then, the reparametrization
  $\phi:\mathcal{Z}\rightarrow\mathbb{R}\times\mathcal{W}$ defined for all
  $z\in\mathcal{Z}$ by $\phi(z):=(t_{\phi_{\unit}^W(z)}(\|E(z)-z\|_1),
  \phi_{\unit}^W(z))$ is derivative-isolating for the perturbation $\psi:
  z\mapsto z + \gamma s(z)v_E(z)$ and any other perturbation with the same speed
  and direction.

\item \textit{(Summary statistic)} Let
  $t:\mathcal{Z}\rightarrow\mathbb{R}$ be a given summary
  statistic. For all $w \in \mathcal{W}$ define
  $t_w: \mathcal{I}_w\rightarrow \mathbb{R}$ for all
    $\delta\in \mathcal{I}_w$ by
    $t_w(\delta):=t(\phi_{\unit}^{-1}(-\delta, w))$ and
    assume that $t_w$ is differentiable and strictly
  decreasing. Then, the reparametrization
  $\phi:\mathcal{Z}\rightarrow\mathbb{R}\times\mathcal{W}$ defined for
  all $z\in\mathcal{Z}$ by $\phi(z):=(t(z), \phi_{\unit}^W(z))$ is
  derivative-isolating for the perturbation
  $\psi:z\mapsto z + \gamma s(z)v_E(z)$, where the speed
  $s:\mathcal{Z}\rightarrow\mathbb{R}$ is defined for all
  $z\in\mathcal{Z}$ as 
  \[
    s(z):= -\left(\partial_{\delta}t_{\phi^W(z)}(\delta)\big\vert_{\delta=\|E(z)-z\|_1}\right)^{-1},
  \]
  and any other perturbation with the same speed and direction.
\end{enumerate}
\end{proposition}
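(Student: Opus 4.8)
The plan is to verify the derivative-isolating condition \eqref{eq:derivative-isolating} of Proposition~\ref{prop:derivative-isolating} directly, exploiting the fact that both proposed reparametrizations differ from the unit-speed reparametrization $\phi_{\unit}$ only through a fiber-wise reparametrization of the $L$-coordinate, leaving the $W$-component $\phi_{\unit}^W$ untouched. Recall from Section~\ref{sec:deriv-iso} that $\phi_{\unit}^{-1}(\ell, (w_E, w_v)) = \ell w_v + w_E$, so that $\partial_\ell \phi_{\unit}^{-1}(\ell, (w_E, w_v)) = w_v$ \emph{independently of $\ell$}; this already equals $\omega_\psi$ for the unit-speed perturbation. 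The entire task therefore reduces to tracking how a reparametrization of the $L$-coordinate rescales this constant derivative, and the constancy in $\ell$ is what makes the chain rule clean.

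For part (a), I would first record that, since $\phi$ and $\phi_{\unit}$ share their $W$-component and $\phi_{\unit}^L(z) = -\|E(z)-z\|_1 =: -\delta$ while $\phi^L(z) = t_w(\delta)$, one has $\phi = R \circ \phi_{\unit}$ for the map $R(\ell', w) = (t_w(-\ell'), w)$. The ODE \eqref{eq:diff_eq} gives $\partial_\delta t_w = -s_w^{-1} < 0$, so each $t_w$ is strictly decreasing, hence invertible, whence $R$ and therefore $\phi$ are bijections onto their image. Writing $\delta(\ell) := t_w^{-1}(\ell)$ gives $\phi^{-1}(\ell, w) = \phi_{\unit}^{-1}(-\delta(\ell), w)$, and the chain rule yields $\partial_\ell \phi^{-1}(\ell, w) = -\delta'(\ell)\, w_v$. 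The ODE supplies $\delta'(\ell) = (\partial_\delta t_w)^{-1} = -s_w(\delta)$, so $\partial_\ell \phi^{-1}(\ell, w) = s_w(\delta)\, w_v$. Finally, identifying $s_w(\delta) = s(\phi^{-1}(\ell, w))$ and $w_v = v_E(\phi^{-1}(\ell, w))$ recovers exactly $\omega_\psi(\phi^{-1}(\ell, w))$ for the perturbation with speed $s$, which is \eqref{eq:derivative-isolating}.

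Part (b) should then follow as a corollary rather than a fresh argument. Here the summary statistic $t$ prescribes the $L$-coordinate via $\phi^L = t$, so the fiber-wise function $t_w(\delta) = t(\phi_{\unit}^{-1}(-\delta, w))$ is \emph{given} rather than solved for. The defining relation $s(z) = -(\partial_\delta t_{\phi^W(z)}(\delta)|_{\delta = \|E(z)-z\|_1})^{-1}$ is merely a rearrangement of \eqref{eq:diff_eq}: setting $s_w(\delta) = -(\partial_\delta t_w(\delta))^{-1}$ gives $-s_w(\delta)^{-1} = \partial_\delta t_w(\delta)$, so the prescribed $t_w$ solves \eqref{eq:diff_eq} for this speed. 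The hypothesis that $t_w$ is differentiable and strictly decreasing guarantees that $s$ is a well-defined positive speed and that $t_w$ is invertible, so invoking part (a) with this induced speed finishes the argument.

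The main obstacle is almost entirely bookkeeping of sign conventions and of the change of variables between $\delta = \|E(z)-z\|_1$ (arc length along the ray toward the endpoint) and the two $L$-coordinates $-\delta$ and $t_w(\delta)$. There are two sign flips — one from $\phi_{\unit}^L = -\delta$ and one from the negative right-hand side of \eqref{eq:diff_eq} — and the argument only succeeds if they cancel so that the speed emerges positive. A secondary point that deserves explicit attention is confirming that $\phi$ is a genuine bijection (not just that the pointwise derivative identity holds), which, as noted above, reduces precisely to the strict monotonicity of each $t_w$; this is where the sign of $\partial_\delta t_w$ from the ODE in (a), respectively the strict-decrease assumption in (b), is genuinely used.
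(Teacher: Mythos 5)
Your proposal is correct and follows essentially the same route as the paper's proof: both compute $\phi^{-1}(\ell,(w_E,w_v)) = -t_w^{-1}(\ell)\,w_v + w_E$ explicitly, establish bijectivity from the strict monotonicity of $t_w$ forced by \eqref{eq:diff_eq}, apply the inverse-function (chain) rule to get $\partial_\ell \phi^{-1}(\ell,w) = s(\phi^{-1}(\ell,w))\,w_v$, and treat part (b) as a reversal of the computation in part (a). The only difference is presentational (your factorization $\phi = R\circ\phi_{\unit}$), not mathematical.
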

\begin{proof}
\emph{(a)}
For each $w \in \mathcal{W}$, the speed $s_w$ is a positive function, so by \eqref{eq:diff_eq} $t_w$ must be strictly decreasing and hence invertible on $\mathcal{L}_w := t_w(\mathcal{I}_w)$. This lets us compute the inverse of $\phi$ for each $w = (w_E, w_v) \in \mathcal{W}$ and $\ell \in \mathcal{L}_{w}$ as
\begin{equation}
  \label{eq:phi_inv}
  \phi^{-1}(\ell, w) = -t_{w}^{-1}(\ell) w_v + w_E
\end{equation}
which shows that $\phi$ is indeed a bijection on the image of $\phi$. For all $w \in \mathcal{W}$ and $\ell \in \mathcal{L}_{w}$ we have by \eqref{eq:diff_eq} and the inverse function rule for derivatives that
\[
  s(\phi^{-1}(\ell, w)) = s_{w}(t_{w}^{-1}(\ell)) = - \frac{1}{(\partial_{\delta} t_{w})(t_{w}^{-1}(\ell))} =  - \partial_{\ell}t_{w}^{-1}(\ell).
\]
We conclude that \eqref{eq:derivative-isolating} is satisfied and hence $\phi$ is derivative-isolating for $\psi$ since
\[
  \partial_{\ell} \phi^{-1}(\ell, (w_E, w_v)) = - \partial_{\ell} t_{w}^{-1}(\ell) w_v.
\]

\noindent \emph{(b)} By the same argument as in \emph{(a)}, we have that $\phi$ is invertible with inverse as computed in \eqref{eq:phi_inv}. This reparametrization is derivative-isolating for perturbations with direction $v_E$ given by \eqref{eq:endpoint_dir} and speed given by (reversing the computation in \emph{(a)})
\[
  s: z \mapsto -\partial_{\ell} t_{(E(z), v_E(z))}^{-1}(t(z)) = -(\partial_{\delta} t_{(E(z), v_E(z))}(\|E(z)-z\|_1))^{-1}.
\]
\end{proof}
Part (a) starts from a given speed and finds a corresponding
derivative-isolating reparametrization by solving the differential equation in
\eqref{eq:diff_eq}. Part (b) starts from a one-dimensional summary statistic,
which is assumed to be smoothly decreasing in the distance from $z$ to the
endpoint.  Using the statistic as one component in a derivative-isolating
reparametrization, the result provides the speed of the perturbation that is
being parametrized. In the following example, we give the missing details from
Examples~\ref{ex:cfi} and \ref{ex:gini}.

\begin{example}
\label{ex:feature_influence_details}
Recall the setting of Example~\ref{ex:cfi}. To find a derivative-isolating
reparametrization which corresponds to the speed $s_\psi(z) = 2 z^j (1-z^j)$,
we apply Proposition~\ref{prop:find_deriv-iso}~(a). Since for all $w \in
\mathcal{W}$ it holds that $w_E^j =1$ (component $w_E$ corresponds to the
endpoint $E(z)$) and $w_v^j =\frac{1}{2}$ (component $w_v$ corresponds to the
direction $v_E(z)$ and we used that
$v_E(z)^j=\frac{1-z^j}{\|e_j-z^j\|_1}=\frac{1-z^j}{2(1-z^j)}=\frac{1}{2}$), we
have that $\phi_{\unit}^{-1}(\ell, w)^j = \frac{\ell}{2} + 1$. Therefore, 
\[
  s_w: \delta \mapsto s(\phi_{\unit}^{-1}(-\delta, w)) = \frac{(2-\delta)\delta}{2}.
\]
Hence, solving \eqref{eq:diff_eq} amounts to solving
\[
  -\frac{2}{(2-\delta)\delta} = \partial_{\delta} t_w(\delta)
\]
which has solutions of the form
$t_w(\delta) = \log\left(\frac{2-\delta}{\delta}\right) + C$ for
$C \in \mathbb{R}$. For simplicity, we choose $C=0$ and note that as
$\|e_j - z\|_1 = 2 (1-z^j)$, we have 
\[
  \phi^W(z)=(e_1, v_{E}(z)) \quad\text{and}\quad \phi^L(z) = t_{\phi^W(z)}(\|e_j - z\|_1)=\log\left(\frac{z^j}{1-z^j}\right)
\]
We can therefore conclude (dropping the constant
endpoint from the reparametrization) that
\[
  \phi = (\phi^L, \phi^W): z \mapsto \left(\log\left(\frac{z^j}{1-z^j}\right), v_\psi(z) \right)
\]
is a derivative-isolating reparametrization for the $([d] \setminus \{j\} \to
  \{j\})$-amalgamating perturbation $\psi$ with multiplicative speed.
\end{example}

\begin{example}
  \label{ex:gini_details}
Recall the setting of Example~\ref{ex:gini}. Using that $z = z_{\cen} -
v_{\psi}(z)\|z_{\cen}-z\|_1$, we can express
\[
  G(z) = \frac{\|z_{\cen}-z\|_1}{2d} \sum_{j=1}^d \sum_{k=1}^d |v_{\psi}(z)^j - v_{\psi}(z)^k|.
\]
Hence, defining for all $w\in\mathcal{W}$ the strictly decreasing and
differentiable function $t_w: \mathcal{I}_w\rightarrow\mathbb{R}$
($\mathcal{I}_w$ is defined in Proposition~\ref{prop:find_deriv-iso}) for all
$\delta\in\mathcal{I}_w$ by
\[
  t_w(\delta) :=  1-\frac{\delta}{2d} \sum_{j=1}^d \sum_{k=1}^d |w^j - w^k|, 
\] 
we have that $t_{v_\psi(z)}(\|z_{\cen}-z\|_1) = 1-G(z)$. Applying
Proposition~\ref{prop:find_deriv-iso}~(b), we obtain that
\[
  \phi = (\phi^L, \phi^W) : z \mapsto \left(1-G(z), \frac{z_{\cen}-z}{\|z_{\cen}- z\|_1} \right)
\]
is a derivative-isolating reparametrization for perturbations with direction
$v_\psi$ and speed given by 
\begin{equation*}
  s_\psi(z) := \frac{2d}{ \sum_{j=1}^d \sum_{k=1}^d |v_\psi(z)^j - v_\psi(z)^k|}.
\end{equation*}
\end{example}

\subsection{Amalgamation effects}
\label{sec:amalgamations}
In this section we consider general amalgamating perturbations as defined in
Section~\ref{sec:endpoints} rather than only those which push towards a vertex
of the simplex, i.e., those we describe in
Section~\ref{sec:simplex_effects}. We first generalize
Example~\ref{ex:cfi}.

\begin{example}[Amalgamation influence]
\label{ex:cai}
For disjoint $A, B \subseteq [d]$, consider an $(A\to B)$-amalgamating
perturbation. Suppose as in Example~\ref{ex:cfi} that $Z := \C(X)$ for some $X
\in \mathbb{R}^d_+ \setminus \{0\}$ on an absolute scale. We can omit components
of $Z$ and $X$ that are not in $A \cup B$ from the analysis as these do not
change under the perturbation, so we suppose now that $A \cup B = [d]$. The
multiplicative operation described in Example~\ref{ex:cfi} can be generalized by
letting $u_B = \sum_{j \in B} e_j$ and considering
\[
  c \mapsto \C(X \odot (1 + c u_B)),
\]
which similarly corresponds to a well-defined perturbation on the simplex.
Computing the speed, we obtain that
\[
  \| \partial_c \C(X \odot (1 + c u_B)) \bigm|_{c = 0} \|_1 = \frac{2 \|X^A\|_1 \|X^B\|_1}{(\|X^A\|_1 + \|X^B\|_1)^2}.
\]
We can use this speed even when
$A \cup B \neq [d]$ resulting in the general multiplicative speed
\begin{equation}
  \label{eq:multiplicative_speed}
  s_\psi(z) = \frac{2 \|z^A\|_1 \|z^B\|_1}{(\|z^A\|_1 + \|z^B\|_1)^2}.
\end{equation}
Using Proposition~\ref{prop:find_deriv-iso}~(a) as in Example~\ref{ex:cfi}, we
further obtain that
\[
  \phi = (\phi^L, \phi^W): z \mapsto \left(\log\left(\frac{\|z^B\|_1}{\|z^A\|_1}\right), (\A_{A \to B}(z), v_\psi(z)) \right)
\]
is a derivative-isolating reparametrization for $\psi$ with multiplicative speed.
\end{example}

With this generalization, we can consider perturbation effects for general
amalgmating perturbations. As an example, consider a study where data is
collected on how individuals spend their time each day (e.g., hours spent
sleeping, eating, working, doing sports, etc.) as well as a response variable
(e.g., general happiness). Assume now that we want to use this data to
investigate the relationship between time spent on sports and happiness. As
in Section~\ref{sec:simplex_effects}, we are again
interested in the effect of an individual component, so we might consider
estimating a $\cfi^j$. However, this would capture the effect of increasing time
spent on sports while all other activities are down-scaled proportionally. For
example, time spent sleeping and eating would both be decreased if time doing
sports is increased which is not realistic. To address this, we can instead
consider a general $(A\to B)$-amalgamating perturbation (as defined in
Section~\ref{sec:endpoints}), where $B=\{\text{sport}\}$ and $A$ corresponds to
activities which are reduced to make increased time for sports. Perturbing in
this fashion keeps the components in $[d]\setminus(A\cup B)$ fixed (e.g., eating
and drinking) while down-scaling the activities in $A$. We call directional
perturbation effects of this form \emph{compositional amalgamation influences},
or $\cai^{A \to B}$ for short. As with the $\cfi^j$, we use subscripts to denote
the speed of the perturbation, so we have $\cai_{\unit}^{A \to B}$ and $\cai^{A
\to B}_{\mult}$ (see Example~\ref{ex:cai}).

Analogously, we could be interested in the effect of removing particular
activities (e.g.\ gambling) entirely and spending the time on desirable
activities (e.g.\ sports) while fixing time spent on a third group of activities
(e.g.\ eating, sleeping). We can capture this effect using a binary amalgamating
perturbation $\psi$. We call the resulting average binary perturbation effect
for $\psi$ an $(A\to B)$-\emph{compositional amalgamation effect}, or $\cae^{A
\to B}$ for short.

\subsection{Perturbations arising from alternative simplex geometries}
\label{sec:other_geometries}
In this section we describe an alternative way to construct perturbations
than the one considered in Sections~\ref{sec:endpoints} and
\ref{sec:deriv-iso}. When constructing the directional perturbations in
Section~\ref{sec:deriv-iso}, we considered (Euclidean) straight lines to the
endpoint to construct directions, that is, $v_E: z \mapsto
(E(z)-z)/\|E(z)-z\|_1$.  While we believe that this is natural in many
applications, one could also imagine starting from alternative geometries on the
simplex and construct other types of directions. We illustrate how this would
work for the Aitchison geometry and show that it provides an interesting
connection to the log-ratio based literature. The Aitchison geometry is a
popular alternative geometry that goes back to the work of
\citet{aitchison1982statistical} who proposed to analyze the interior of the
simplex by transforming to a subset of Euclidean space and applying conventional
methods to the transformed data. A popular variant of this transformation is the
\emph{centered log-ratio transform} (CLR) defined for all
$z\in(\Delta^{d-1})^{\circ}$ (the interior of the simplex) by
\[
\clr(z)^j := \log(z^j) - \frac{1}{d} \sum_{j=1}^d \log(z^j), \quad \forall j \in [d].
\]
This transformation maps $(\Delta^{d-1})^{\circ}$ to the subspace of $\mathbb{R}^d$
given by $\{ x \in \mathbb{R}^d \mid \sum_{j=1}^d x^j = 0\}$. The inverse CLR is defined for all $x\in\operatorname{Im}(\clr)$
by
\[
  \clr^{-1}(x)^j := \frac{\exp(x^j)}{\sum_{j=1}^d \exp(x^j)}, \quad
  \forall j \in [d],
\]
or more simply as $\clr^{-1}(x)=\C(\exp(x))$ where $\exp$ is applied
componentwise. The inverse of the $\clr$ transform can in fact be defined on all
of $\mathbb{R}^d$ rather than just the range of the $\clr$-function as is clear
from the definition. The $\clr$-transformation induces what is called the
Aitchison geometry \citep{egozcue2006simplicial} on the simplex. The Aitchison
geometry induces a different notion of direction than the Euclidean notion
considered above and, in particular, the direction of the line connecting $z$ to
an endpoint may differ from the direction in \eqref{eq:endpoint_dir} (see, for
example, Figure~\ref{fig:perturbation_illustration}). However, as $\clr(z)$ is a
bijection on the open simplex, we can just as well consider perturbations on
$\mathbb{R}^d$ directly and use arguments similar to those in
Proposition~\ref{prop:find_deriv-iso} but now for the perturbation on the
log-ratio scale. We illustrate this procedure in the following two examples and
discuss connections to the centering and amalgamating perturbations defined
in Examples~\ref{ex:cfi} and \ref{ex:gini}.

\begin{example}[Diversity in the Aitchison geometry]
  Suppose we are interested in the effect of diversifying $z \in
  (\Delta^{d-1})^{\circ}$. As in Example~\ref{ex:gini}, we can think of such
  effects as moving towards the center $z_{\cen}$, but now using the Aitchison
  geometry this corresponds to moving $\clr(z)$ towards $\clr(z_{\cen})=0$ along
  a straight line in CLR-space. Using the Aitchison geometry, this 
  defines a perturbation via 
  \[
    \psi_{\mathrm{Ait}}: (z, \gamma) \mapsto \clr^{-1}((1-\gamma)\clr(z)).
  \]
  We can compute $\partial_{\gamma} \psi_{\mathrm{Ait}}(z, \gamma)
  \bigm|_{\gamma=0}$ to find a perturbation $\psi$ of the form
  \eqref{eq:lin_pert} for which $\tau_{\psi} = \tau_{\psi_{\mathrm{Ait}}}$.
  Applying the chain rule, we obtain
  \[
    \partial_{\gamma} \psi_{\mathrm{Ait}}(z, \gamma)^j \bigm|_{\gamma=0}  = \sum_{k \in [d]: k \neq j} z^j z^k \log\left( \frac{z^k}{z^j}\right),
  \]
  which can be decomposed into a speed and direction as in Definition~\ref{def:perturbation} to obtain $\psi$. The direction of this
  derivative can be vastly different from the centering direction used in
  Example~\ref{ex:gini}. To obtain a derivative-isolating reparametrization, since
  $\psi_{\mathrm{Ait}}$ shrinks towards $0$ in CLR-space, it is straightforward
  to show that $\phi^L: z \mapsto -\|\clr(z)\|_{2}$ and $\phi^W: z \mapsto
  \clr(z)/\|\clr(z)\|_2$ are derivative-isolating reparametrizations.
\end{example}

\begin{example}[Feature influence in the Aitchison geometry]
  Suppose, as in Example~\ref{ex:cfi}, we are interested in the effect of
  increasing the value of a particular component $z^j$ of $z \in
  (\Delta^{d-1})^{\circ}$. There is a natural notion of increasing $z^j$
  inherited from CLR-space, namely adding $\gamma$ to $\clr(z)^j$ and
  transforming back to the simplex. The resulting perturbation is given by 
  \[
    \psi:(z, \gamma) \mapsto \clr^{-1}(\clr(z) + \gamma e_j).
  \]
  The derivative of the resulting perturbation is for all $k \in [d]$ given by
  \[
    \partial_{\gamma} \psi(z, \gamma)^k \bigm|_{\gamma = 0} = \ind_{\{k = j\}}z^j - z^jz^k.
  \]
  It is straight-forward to check that the speed and direction of this
  perturbation match exactly the multiplicative perturbation derived
  in Example~\ref{ex:cfi}. One can therefore view an
  additive log-ratio perturbation as a multiplicative speed
  perturbation.
\end{example}
We summarize the types of perturbations considered in this work in
Figure~\ref{fig:perturbation_illustration}.

\begin{figure}[ht!]
  \centering
  \includegraphics*[width=\textwidth]{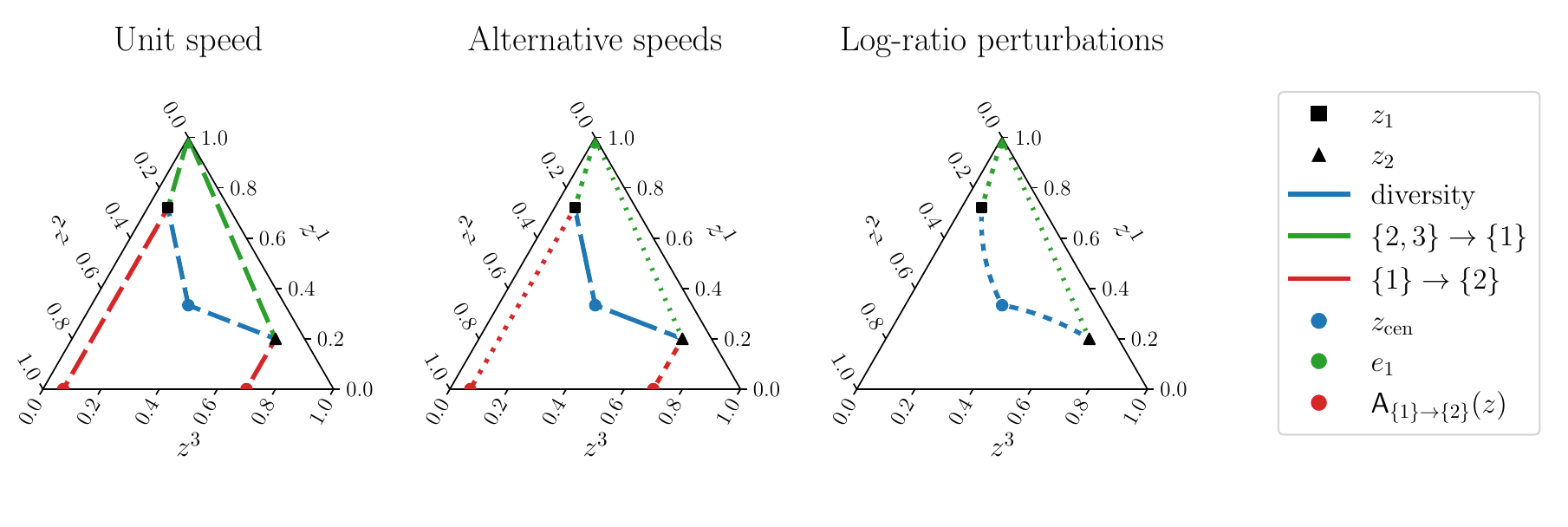}
  \caption{Plots of (left) unit speed perturbations, (middle) alternative
  speed parametrizations and (right) log-ratio perturbations for two different
  initial points $z_1$ and $z_2$. The types of perturbation are distinguished by
  color. The alternative speed for the diversity perturbation in blue is Gini-speed
  (see \eqref{eq:gini_speed}) and for the amalgamating perturbations in red and
  green the speed is multiplicative speed (see \eqref{eq:multiplicative_speed}).
  The dashes represent the speed of the perturbation and are equally spaced in the
  $\gamma$-domain, so that longer lines correspond to faster speeds. The log-ratio
  diversity perturbation is not a straight line and does not have the same
  direction as the centering perturbations.} \label{fig:perturbation_illustration}
\end{figure}

\subsection{Perturbation effects and classical effect measures}
\label{sec:misc_results}
In this section we present two results which connect our perturbation effects to
classical effect measures in the literature. The first result provides a
relationship between the $\cfi^j$ and coefficients in a well-specified
log-contrast regression model. The result has previously been proven in
\citet{huang2022supervised}.
\begin{proposition}
  \label{prop:log-contrast}
Let $(Y, Z)$ denote random variables taking values in $\mathbb{R} \times \Delta^{d-1}$ and let $f: \Delta^{d-1} \to \mathbb{R}$ be given by $f: z \mapsto \E[Y \given Z=z]$. Suppose that there exists $\beta = (\beta^1, \dots, \beta^j) \in \mathbb{R}^d$ with $\sum_{j=1}^d \beta^j = 0$ and $\mu \in \mathbb{R}$ such that for all $z = (z^1, \dots, z^d) \in \Delta^{d-1}$ we have $f(z) = \sum_{j=1}^d \beta^j \log(z^j) + \mu$. Then, for all $j \in [d]$, we have
  \[
    \cfi^j_{\mult} = \beta^j .
  \]
\begin{proof}
  The perturbation corresponding to $\cfi^j_{\mult}$ is given by
  \[
    \psi(z, \gamma) = z + \gamma \frac{2 z^j (1-z^j)}{\|e_j - z\|_1} (e_j - z) = z + \gamma z^j (e_j - z). 
  \]
  We thus have
  \[
    f(\psi(z, \gamma)) = \sum_{k=1}^d \beta^k \log(z^k + \gamma z^j (\ind_{\{k=j\}} - z^k)) + \mu
  \]
  and therefore
  \[
    \partial_{\gamma} f(\psi(z, \gamma)) \bigm|_{\gamma = 0} = \sum_{k=1}^d \frac{\beta^k z^j (\ind_{\{k=j\}} - z^k)}{z^k} = \beta^j - z^j \sum_{k=1}^d \beta^k = \beta^j.
  \]
  We conclude that 
  \[
    \cfi^j_{\mult} = \tau_{\psi} = \E\left[\partial_{\gamma} f(\psi(z, \gamma)) \bigm|_{\gamma = 0} \right] = \beta^j.
  \]
\end{proof}
\end{proposition}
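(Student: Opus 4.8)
The plan is to show that the integrand defining $\tau_\psi$ is in fact constant in $z$ and equal to $\beta^j$, so that taking the expectation becomes immediate. Concretely, I would establish that for every $z$ in the open simplex,
\[
  \partial_\gamma f(\psi(z,\gamma))\big|_{\gamma=0} = \beta^j,
\]
and then conclude $\cfi^j_{\mult} = \E[\beta^j] = \beta^j$ directly from Definition~\ref{def:avg_per_eff}.

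First I would write the perturbation explicitly. The effect $\cfi^j_{\mult}$ uses the endpoint $E(z)=e_j$, the canonical direction $v_E(z)=(e_j-z)/\|e_j-z\|_1$, and the multiplicative speed $s_\psi(z)=2z^j(1-z^j)$. The key simplification is that, since $z\in\Delta^{d-1}$ has nonnegative coordinates summing to one, $\|e_j-z\|_1 = (1-z^j)+\sum_{k\neq j}z^k = 2(1-z^j)$. Hence the speed exactly cancels the normalizing factor of the direction, giving the clean form
\[
  \psi(z,\gamma) = z + \gamma\, z^j (e_j - z),
\]
so that coordinatewise $\psi(z,\gamma)^k = z^k + \gamma z^j(\ind_{\{k=j\}} - z^k)$.

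Next I would substitute this into the assumed log-contrast form $f(z)=\sum_{k}\beta^k\log(z^k)+\mu$ and differentiate through the sum by the chain rule. Evaluating at $\gamma=0$, each log-term contributes $\beta^k z^j(\ind_{\{k=j\}}-z^k)/z^k$, so the derivative equals $z^j\sum_k \beta^k(\ind_{\{k=j\}}/z^k - 1) = \beta^j - z^j\sum_k\beta^k$. The one genuine ingredient of the argument is the sum-to-zero constraint $\sum_k\beta^k=0$, which annihilates the residual term $z^j\sum_k\beta^k$ and leaves exactly $\beta^j$, independently of $z$.

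I do not expect any real obstacle: the computation is short and the only subtlety worth flagging is that the log-contrast form (and hence the differentiation) requires $z$ to lie in the open simplex so that each $\log(z^k)$ is differentiable; on the interior this is automatic, and the perturbed point $\psi(z,\gamma)$ remains in the interior for small $\gamma$. With the pointwise identity established, the final expectation step is trivial.
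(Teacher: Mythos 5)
Your proposal is correct and follows essentially the same route as the paper's proof: simplify $\psi$ using $\|e_j-z\|_1=2(1-z^j)$, substitute into the log-contrast form, differentiate at $\gamma=0$, and invoke $\sum_k\beta^k=0$ to obtain the constant integrand $\beta^j$. The only addition is your (reasonable) remark about restricting to the open simplex for differentiability of the logarithms, which the paper leaves implicit.
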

The second result shows that the effects proposed in
\citet{cox1971note} are special cases of average perturbation effects in a
parametric model.
\begin{proposition}
  \label{prop:cox}
  Let $(Y, Z)$ denote random variables taking values in
  $\mathbb{R} \times \Delta^{d-1}$ and let $z_* \in \Delta^{d-1}$. Let
  $f: \Delta^{d-1} \to \mathbb{R}$ be given by
  $f: z \mapsto \E[Y \given Z=z]$ and suppose that there exists
  $\beta = (\beta^1, \dots, \beta^j) \in \mathbb{R}^d$ with
  $\sum_{j=1}^d \beta^j z_*^j = 0$ and $\mu \in \mathbb{R}$ such that
  for all $z = (z^1, \dots, z^d) \in \Delta^{d-1}$ we have
  $f(z) = \sum_{j=1}^d \beta^j z^j + \mu$. For
  $k\in [d]$, define the perturbation
  $\psi_{k}$ such that for all
    components $j\in [d]$ it holds
  \begin{equation*}
    \psi_k^j(z, \gamma)=
    \begin{cases}
      z^j + \gamma\quad &\text{if } j=k\\
      z^j - \gamma \frac{z^j_*}{1-z^k_*}\quad &\text{else}.
    \end{cases}
  \end{equation*}
    Then
  \[
    \tau_{\psi_k} = \frac{\beta^k}{1-z^k_*}.
  \]
\begin{proof}
  We have
  \[
    f(\psi_k(z, \gamma)) = \beta^k(z^k+\gamma) + \sum\limits_{\substack{j=1\\j \neq k}}^d  \beta^j \left( z^j - \gamma \frac{z^j_*}{1-z^k_*} \right) + \mu
  \]
  and therefore
  \[
    \partial_{\gamma} f(\psi_k(z, \gamma)) \bigm|_{\gamma = 0} = \beta^k - \sum\limits_{\substack{j=1\\j \neq k}}^d  \beta^j  \frac{z^j_*}{1-z^k_*} = \beta^k \left( 1 + \frac{z^k_*}{1-z^k_*} \right) - \sum\limits_{j=1}^d  \beta^j  \frac{z^j_*}{1-z^k_*} = \frac{\beta^k}{1-z^k_*}.
  \]
  We conclude that 
  \[
    \tau_{\psi_k} = \E\left[\partial_{\gamma} f(\psi_k(z, \gamma)) \bigm|_{\gamma = 0} \right] = \frac{\beta^k}{1-z^k_*}.
  \]
  \end{proof}
\end{proposition}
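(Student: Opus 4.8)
The plan is to compute the initial derivative $\partial_\gamma f(\psi_k(z,\gamma))\big|_{\gamma=0}$ pointwise, observe that it does not depend on $z$, and conclude that the outer expectation in the definition of $\tau_{\psi_k}$ is trivial. First, though, I would confirm that $\psi_k$ is a genuine directional perturbation, i.e.\ that it maps into $\Delta^{d-1}$ for $\gamma$ in a right-neighborhood of $0$. By construction $\psi_k(z,0)=z$, and the increments sum to $\gamma - \gamma\sum_{j\neq k}\frac{z_*^j}{1-z_*^k} = \gamma - \gamma\frac{1-z_*^k}{1-z_*^k}=0$ (using $\sum_{j\neq k}z_*^j = 1-z_*^k$), so the perturbed vector still sums to one. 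For $z$ in the interior and $\gamma$ small enough all coordinates remain nonnegative, so $\omega_{\psi_k}(z)$ exists and $\tau_{\psi_k}$ is well-defined via Definition~\ref{def:avg_per_eff}.

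Next I would substitute the linear form $f(z)=\sum_{j=1}^d\beta^j z^j+\mu$ into $f(\psi_k(z,\gamma))$, obtaining $\beta^k(z^k+\gamma)+\sum_{j\neq k}\beta^j\big(z^j-\gamma\frac{z_*^j}{1-z_*^k}\big)+\mu$, and then differentiate in $\gamma$ term by term. Because every $z$-dependent term is affine in $\gamma$ with a $z$-independent slope, this yields $\partial_\gamma f(\psi_k(z,\gamma))\big|_{\gamma=0}=\beta^k-\sum_{j\neq k}\beta^j\frac{z_*^j}{1-z_*^k}$, which is manifestly constant in $z$.

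The only place the hypotheses genuinely enter is the final simplification, and this is the step to handle with care: I would invoke the constraint $\sum_{j=1}^d\beta^j z_*^j=0$ to rewrite $\sum_{j\neq k}\beta^j z_*^j=-\beta^k z_*^k$, so the derivative becomes $\beta^k+\frac{\beta^k z_*^k}{1-z_*^k}=\frac{\beta^k}{1-z_*^k}$. Since this value is independent of $z$, it passes through the expectation unchanged, giving $\tau_{\psi_k}=\E\big[\partial_\gamma f(\psi_k(Z,\gamma))\big|_{\gamma=0}\big]=\frac{\beta^k}{1-z_*^k}$, as claimed. I expect no real analytic obstacle here — the result is a short direct computation — so the main things to get right are the bookkeeping with the sum-to-zero constraint and the admissibility check that keeps $\psi_k$ on the simplex.
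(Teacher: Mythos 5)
Your proposal is correct and follows essentially the same route as the paper: substitute the affine form of $f$ into $f(\psi_k(z,\gamma))$, differentiate in $\gamma$, and use the constraint $\sum_{j=1}^d\beta^j z_*^j=0$ to simplify to $\beta^k/(1-z_*^k)$, with the expectation being trivial since the derivative is constant in $z$. The only difference is cosmetic (you apply the constraint via $\sum_{j\neq k}\beta^jz_*^j=-\beta^kz_*^k$ rather than the paper's add-and-subtract step) plus a brief well-definedness check that the paper omits.
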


\section{Perturbations as causal quantities}\label{sec:causal_models}
Perturbations are a priori non-causal quantities and should not be confused with
(causal) interventions. To make this distinction clear, we discuss
here how our perturbation framework can be used for causal inference.
Causal models \citep{pearl2009causality, rubin2005causal} provide a mathematical
framework for modeling changes to the data-generating mechanism -- often called
\emph{interventions}. In contrast to an (observational) statistical model, which
assumes a single observational distribution that generated the data, a causal
model additionally postulates the existence of a collection of interventional
distributions that describe the data-generating mechanism under hypothetical
interventions. The power of causal models is that they allow specification of
rigorous mathematical conditions under which (parts of) the interventional
distributions can be identified (or computed) from the observational
distribution. Unlike interventions, perturbations describe hypothetical changes
to the predictor space (in this work $\Delta^{d-1}$, but the same ideas hold
more generally) and not to the data-generating mechanism. To make this more
precise and show how to use perturbations for causal inference, we use the
structural causal model (SCM) framework \citep{pearl2009causality,
bongers2021foundations}. Similar considerations, however, also apply with the
potential outcome framework. 

As a first example, assume that the SCM given in
Figure~\ref{fig:causal_graph} (left) generates the data. Given a
perturbation $\psi$ that we are interested in, a derivative-isolating reparametrization (see Section~\ref{sec:deriv-iso}) $\phi:\Delta^{d-1}\rightarrow
\mathbb{R}\times\mathcal{W}$ maps $Z$ to $(L, W)$ such that if we
intervene on $Z$ by setting it to $\psi(Z, \gamma)$ this corresponds to an
intervention in an SCM over the variables $(Y, L, W)$ on $L$ alone (see
Figure~\ref{fig:causal_graph} (right)).

\begin{figure}[ht!]
\centering
\begin{minipage}{1.0\linewidth}
  \hspace{1em}\textbf{(a) SCM on $Z$-scale} \hfill \textbf{(b) SCM
    on $(L,W)$-scale}\hspace{6em}
\end{minipage}

\begin{minipage}{0.3\textwidth}
  \centering
  \begin{align*}
    Y&\leftarrow f(Z, \eta)
  \end{align*}
  \begin{tikzpicture}[scale=1.3]
    \tikzset{round/.style={circle, draw, fill=white!80!black, minimum
        width=2em}}
    \node[round] at (1,0.75) (Y){$Y$};
    \node[round] at (-1,0.75) (Z){$Z$};
    \node[red] at (-1.25,1.75)(P){$\psi$};
    \draw [-{Latex}{Latex}, red, thick] (-1.25,1.6)--(-1.1,1.1); 
    \path[-Latex,draw,thick] (Z) edge node[fill=white] {$f$} (Y);
  \end{tikzpicture}
\end{minipage}%
\begin{minipage}{0.3\textwidth}
  \centering
  \begin{tikzpicture}
    \tikzset{round/.style={circle, draw, fill=white!80!black, minimum
        width=2em}}
    \path[-Latex, thick] (0.5, 4) edge[bend left] (5,4);
    \node[align=center] at (2.75, 5.25) {bijection $\phi$};
    \node[align=center] at (2.75, 4) {$\phi(Z)=(L, W)$};
    \path[-Latex, thick] (5, 3) edge[bend left] (0.5,3);
    \node[align=center] at (2.75, 3) {$Z=\phi^{-1}(L, W)$};
  \end{tikzpicture}
\end{minipage}%
\begin{minipage}{0.25\textwidth}
  \centering
  \begin{tikzpicture}[scale=1.3]
    \tikzset{round/.style={circle, draw, fill=white!80!black, minimum
        width=2em}}
    \node[round] at (6,0) (Y2){$Y$};
    \node[round] at (4,0) (L){$L$};
    \node[round] at (5,1.5) (W){$W$};
    \node[red] at (3.7,0.95)(P){$\psi$};
    \draw [-{Latex}{Latex}, red, thick] (3.7,0.8)--(3.85,0.3);
    \path[-Latex,draw,thick] (L) edge node[fill=white] {$h$} (Y2);
    \path[-Latex,draw,thick] (W) edge node[fill=white] {$m$} (L);
    \path[-Latex,draw,thick] (W) edge node[fill=white] {$h$} (Y2);
  \end{tikzpicture}
\end{minipage}%
\begin{minipage}{0.15\textwidth}
  \begin{align*}
    L&\leftarrow m(W, \xi)\\
    Y&\leftarrow h(L, W,\varepsilon)
  \end{align*}
\end{minipage}

\caption{(a) Data-generating SCM on the
  $Z$-scale. $Z\in\Delta^{d-1}$ is a compositional vector, $Y$ is a
  response variable and $\eta$ is exogenous noise. (b)
  Reparametrized SCM on the $(L,W)$-scale. $(L, W)=\phi(Z)$ are the
  reparametrized predictors, $Y$ is the same response as in (a) and
  $\xi$ and $\varepsilon$ are exogenous noise variables. The
  reparametrization needs to disentangle the perturbation effect on
  $Z$ in a way such that intervening on $Z$ by setting it to
  $\psi(Z, \gamma)$ corresponds to intervening on only $L$ (red
  double head arrows).  }\label{fig:causal_graph}
\end{figure}

As a second example, consider the SCM over the variables
$(H, Z, Y)\in\mathbb{R}^q\times\Delta^{d-1}\times\mathbb{R}$ given by
\begin{equation*}
  \mathcal{M}:\,
  \begin{cases}
    &H \leftarrow\varepsilon_H\\
    &Z \leftarrow h(H, \varepsilon_Z)\\
    &Y \leftarrow g(H, Z) +  \varepsilon_Y,
  \end{cases}
\end{equation*}
where $h:\mathbb{R}^q\times\mathbb{R}\rightarrow\Delta^{d-1}$ and
$g:\mathbb{R}^q\times\Delta^{d-1}\rightarrow\mathbb{R}$
are measurable functions and
$(\varepsilon_H, \varepsilon_Y, \varepsilon_Z)$ are jointly
independent noise terms with expectation zero.  The SCM $\mathcal{M}$
generates a distribution over $(H, Y, Z)$ which is called the
observational distribution. Intervening corresponds to changing
individual structural assignments while keeping everything else fixed
such that the modified SCM again induces a distribution -- called the
interventional distribution. The example SCM $\mathcal{M}$ is
a commonly assumed causal structure in which the predictor $Z$
causally precedes the response but is confounded by $H$.

Assume now we observe data generated by $\mathcal{M}$ and are interested in
understanding how $Z$ causally affects $Y$. One target parameter might be to
consider the expected value of $Y$ under an intervention that sets $Z$ to a
fixed value, using do-notation \citep{pearl2009causality}, this can be expressed
as $\E[Y\mid\operatorname{do}(Z=z)]=\E[g(z, H)]$. For larger $d$ the full causal
relation might be too ambitious of a goal, and it may be more tractable to look
at the effect of individual components. Since the vector $Z$ is compositional,
one needs to be careful when expressing summary interventions. Perturbations
provide a solution to this as they allow us to express sensible ways of changing
the predictors. More specifically, let $\psi$ be a perturbation as defined in
Definition~\ref{def:perturbation}. Then, for all $\gamma$ sufficiently small,
consider the intervention $\operatorname{do}(Z=\psi(Z, \gamma))$, which in
$\mathcal{M}$ replaces the $Z$-structural assignment by $Z\leftarrow \psi(g(H,
\varepsilon_Z), \gamma)$. In a similar way as to how we previously used the
conditional expectation of $Y$ under the perturbation, we can now use the
expected value of $Y$ under the perturbation-induced intervention, i.e.,
$\E[Y\mid\operatorname{do}(Z=\psi(Z, \gamma))]$. Here, we only consider
interventions on the entire compositional vector $Z$, which means that it is not
possible to model how causal effects propagate within the composition itself.
The corresponding causal versions of the target parameters in
Definition~\ref{def:avg_per_eff} are then the \emph{causal average binary
perturbation effect}
\begin{equation*}
  \lambda_{\psi}^{\operatorname{causal}} := \frac{\E\big[\E\left[Y \given \operatorname{do}(Z = \psi(Z, 1)) \right]
  - \E\left[Y \right] \big]}{\mathbb{P}(Z \neq \psi(Z, 1))}
\end{equation*}
and the \emph{causal average directional perturbation effect}
\begin{equation*}
  \tau_\psi^{\operatorname{causal}} := \E\left[ \partial_{\gamma} \E\left[Y \given \operatorname{do}(Z=
      \psi(Z, \gamma))\right] \big|_{\gamma = 0} \right].
\end{equation*}
Unlike their non-causal versions, the causal average perturbation
effects may not be identifiable from the observational
distribution. Therefore, to estimate them, we need to make additional
causal assumptions. For example, we may assume the variables
$(H, Y, Z)$ were generated by an SCM with the same causal structure as
$\mathcal{M}$. In that case, if $H$ is observed, the expectation of
$Y$ under perturbation induced interventions on $Z$ can be expressed
as
\begin{equation}
  \label{eq:confounded_do}
  \E[Y\mid\operatorname{do}(Z=\psi(Z, \gamma))]=\E\big[\E[Y\mid Z=\psi(Z, \gamma), H]\big],
\end{equation}
and hence is identifiable. The formula in \eqref{eq:confounded_do} is a special
case of the adjustment formula \citep[e.g.,][]{pearl2009causality}, but other
ways of identifying this effect also exist (e.g., propensity score matching). It
is easy to adapt the estimators proposed in this work to such settings by
conditioning on additional confounding covariates $H$. In particular, starting
from the reparametrization into $(L, W)$ one only needs to add additional
observed confounders alongside $W$. This is for example done in the numerical
experiments in Section~\ref{sec:diversity_exps}.

There are several possible assumptions which permit us to conclude that the
causal and non-causal perturbation effect coincide. Obviously, one sufficient
condition is to assume that
\begin{equation}
  \label{eq:do_is_cond}
  \E[Y\mid\operatorname{do}(Z=\psi(Z, \gamma))]=\E[Y\mid Z=\psi(Z, \gamma)].
\end{equation}
Such a condition holds by definition whenever $Z$ is an exogenous
variable (i.e., does not depend on other variables in its structural
assignment) but can also hold more generally. A common assumption to
ensure \eqref{eq:do_is_cond} is to assume that $Z$ and $Y$ are not
confounded, that is, $Z$ and $Y$ have no common causes (this is known
as ignorability in the potential outcome framework). In the SCM
$\mathcal{M}$ this would for example mean that $H$ does not exist or
either $g$ or $h$ do not depend on $H$. In many cases, this is not
satisfied and one should instead carefully argue for which causal
assumptions are reasonable and adjust the estimation accordingly,
e.g., by conditioning on additional covariates as in
\eqref{eq:confounded_do}.

\section{Inference and algorithms for perturbation effects}
\label{sec:extra_est_and_alg}
In this section we provide details on the theory and algorithms for the
estimation of perturbation effects. Section~\ref{sec:np_estimation} introduces
the nonparametric estimators of perturbation effects. Section~\ref{sec:theory}
contains a summary of the theoretical results and algorithms for the main
estimation procedures. Section~\ref{sec:main_proofs} contains a proof of our
main theoretical result. Section~\ref{sec:estimator_robustness} investigates the
robustness of the different estimators. Section~\ref{sec:zeros} describes how to
modify the estimation of $\tau_\psi$ using a derivative-isolating
reparametrization in the presence of observations with $\omega_\psi(Z) = 0$.
Section~\ref{sec:tree-derivative} describes our procedure for the estimation of
the derivative of a generic regression function. Section~\ref{sec:loc-scale}
describes our procedure for the estimation of the conditional score function
$\rho$ that is required for the estimation of an average partial effect.

\subsection{Nonparametric estimation of perturbation effects}
\label{sec:np_estimation}
\subsubsection{Binary perturbation effect}
\label{sec:estimation_np_binary}
Consider a binary perturbation $\psi$ with corresponding perturbation effect
$\lambda_{\psi}$ and recall from the arguments of
Section~\ref{sec:perturbations} that with $L := \ind_{\{Z = \psi(Z, 1)\}}$ and
$W:= \psi(Z, 1)$, we have that
  \begin{equation*}
    \lambda_{\psi}
    = \frac{\E\big[\E\left[Y \given L=1, W \right]]
    - \E\left[Y \right] }{\mathbb{P}(L = 0)}.
  \end{equation*}
The denominator and $\E[Y]$ are both easy to estimate efficiently using their
empirical counterparts (the one-step correction term is $0$ for these
quantities). The term $\E\big[\E\left[Y \given L=1, W \right]]$ in the numerator
is trickier and is related to the notion of the \emph{average predictive effect}
\citep[e.g.,][]{robins1994estimation,chernozhukov2018double,kennedy2023semiparametric}
that has been analyzed extensively in the causal inference literature. The
one-step corrected estimator for $\E\big[\E\left[Y \given L=1, W \right]]$
corresponds to the \emph{augmented inverse propensity weighted} (AIPW) estimator
and is given by
\[
  \frac{1}{n} \sum_{i=1}^n \widehat{f}(1, W_i) + \frac{Y_i - \widehat{f}(L_i, W_i)}{\widehat{\pi}(W_i)} L_i,
\]
where $\widehat{f}$ is an estimator of $f: (\ell, w) \mapsto \E[Y \given L=\ell,
W=w]$ and $\widehat{\pi}$ is an estimator of the propensity score $\pi : w
\mapsto \mathbb{P}(L = 1 \given W=w)$. An efficient estimator for
$\widehat{\lambda}_\psi$ is thus
\begin{equation}
  \label{eq:nonparametric_discPE}
  \widehat{\lambda}_\psi := \frac{\frac{1}{n} \sum_{i=1}^n \widehat{f}(1, W_i) - Y_i + \frac{Y_i - \widehat{f}(L_i, W_i)}{\widehat{\pi}(W_i)} L_i}{\frac{1}{n} \sum_{i=1}^n (1-L_i)}.
\end{equation}

\subsubsection{Directional perturbation effect}
\label{sec:estimation_np_directional} 

Consider a directional perturbation $\psi$ with a derivative-isolating
reparametrization $\phi = (\phi^L, \phi^W)$ and the corresponding perturbation
effect $\tau_{\psi}$. By the arguments in Section~\ref{sec:perturbations},
setting $L := \phi^L(Z)$ and $W:= \phi^W(Z)$, we have that
\begin{equation*}
  \tau_\psi =  \E[\partial_\ell \E[Y \given L=\ell, W] \big|_{L=\ell}].
\end{equation*}
This quantity is another well-known target of inference; the \emph{average
partial effect} \citep{newey1993efficiency,chernozhukov2022debiased,
klyne2023average}. The one-step corrected estimator in this setting is given by
\begin{equation}
  \label{eq:nonparametric_contPE}
  \widehat{\tau}_{\psi} := \frac{1}{n} \sum_{i=1}^n \partial_\ell \widehat{f}(L_i, W_i) - \widehat{\rho}(L_i, W_i)(Y_i - \widehat{f}(L_i, W_i))
\end{equation}
where $\widehat{f}$ is a differentiable estimator of $f: (\ell, w) \mapsto \E[Y
\given L=\ell, W=w]$ and $\widehat{\rho}$ is an estimate of the conditional
score function $\rho$ of $L$ given $W$, that is, an estimate of the function
$\rho: (\ell, w)\mapsto \partial_{\ell}\log(p(\ell\vert w))$ where $p$ denotes
the conditional density of $L$ given $W$.

For the estimation of $f$ we require that the estimator $\widehat{f}$ is
differentiable and with computable derivatives (which additionally need to
approximate the true derivative at a sufficiently fast rate). This excludes the
direct use of many popular machine learning methods such as random forests or
boosted trees.  One solution is to start from a potentially non-differentiable
pilot estimate of $f$ and smooth it to get a differentiable estimate for which
derivatives can be computed. Such a procedure based on kernel-smoothing has been
proposed by \citet{klyne2023average}. Here, we use an alternative smoothing
method which uses random forests and local polynomials. Finally, for the
estimation of $\rho$, we follow the proposal in \citet{klyne2023average} and
assume a location-scale model. This requires the estimation of the conditional
mean and variance of $L$ given $W$ (referred to as $\eta$ in
Algorithm~\ref{alg:nonparametric_tau}). Further details on our smoothing method
and the location-scale based estimate of $\rho$ are given in
Sections~\ref{sec:tree-derivative} and \ref{sec:loc-scale}, respectively.

\subsection{Theoretical analysis of estimators}\label{sec:theory}

In this section we present a result on asymptotic normality
uniformly over large classes of distributions for each of the three estimation
methods described in Section~\ref{sec:estimation}; nonparametric estimation of
$\tau_\psi$, nonparametric estimation of $\lambda_\psi$ and by means of
estimating $\theta$ in the partially linear model. These results
are derived using standard proof techniques from the semiparametric literature
and likely exist individually in similar form elsewhere
\citep[e.g.,][]{klyne2023average,chernozhukov2018double}. We do not claim that
any of the derived asymptotic results are novel or surprising but include them
for completeness and to motivate the precise form of the estimators and their
corresponding variance estimates.

We start from a given perturbation $\psi$ and a corresponding reparametrization
$\phi:\mathcal{Z}\rightarrow\mathbb{R}\times\mathcal{W}$ which reparametrizes
$Z$ into $(L, W):=(\phi^L(Z), \phi^W(Z))$ and assume that $(L, W)$ satisfy
\eqref{eq:simple_dape} if $\psi$ is a binary perturbation and that $\phi$ is
derivative-isolating if $\psi$ is a directional perturbation. Our goal is now to
use the estimation procedures discussed in Section~\ref{sec:estimation} to
estimate the average perturbation effects (i.e., $\tau_\psi$ or $\lambda_\psi$)
consistently and construct asymptotically valid confidence intervals for the
estimates. To construct asymptotically valid confidence intervals, we estimate
both the target parameter and its asymptotic variance and then show that
the resulting standardized estimators are asymptotically standard normal
with a uniform convergence rate over the class of possible distributions of the
data. Distributionally uniform results are desirable here as they ensure uniform
coverage guarantees of the confidence intervals. The following result shows
uniform asymptotic normality of each of the three estimation procedures.
\begin{theorem}[Uniform asymptotic normality]
  \label{theorem:normality}
  Let $\mathcal{P}$ denote a class of distributions of
  $(Y, L, W)\in\mathbb{R}\times\mathbb{R}\times\mathcal{W}$ and assume
  we are in one of the following settings.
  \begin{enumerate}[(a)]
  \item \label{theorem:normality_directional} \textit{Nonparametric directional
    effect estimation:} Assumption~\ref{ass:tau_estimation} and
    $(\widehat{\beta}, \widehat{\sigma}^2_{\beta}):=(\widehat{\tau},
    \widehat{\sigma}^2_{\tau})$ from Algorithm~\ref{alg:nonparametric_tau}.
  \item \label{theorem:normality_binary} \textit{Nonparametric binary effect
    estimation:} Assumption~\ref{ass:lambda_estimation} and $(\widehat{\beta},
    \widehat{\sigma}^2_{\beta}):=(\widehat{\lambda},
    \widehat{\sigma}^2_{\lambda})$ from
    Algorithm~\ref{alg:nonparametric_lambda}.
  \item \label{theorem:normality_plm} \textit{Effect estimation using a
    partially linear model:} Assumption~\ref{ass:dml_estimation} and
    $(\widehat{\beta}, \widehat{\sigma}^2_{\beta}):=(\widehat{\theta},
    \widehat{\sigma}^2_{\theta})$ from Algorithm~\ref{alg:dml}.
  \end{enumerate}
  Then, letting $\Phi$ denote the cumulative distribution function of the
  standard normal distribution, it holds that
  \[
    \lim_{n \to \infty} \sup_{P \in \mathcal{P}} \sup_{x
      \in\mathbb{R}} \left|\mathbb{P}_P\left( \frac{\sqrt{n}}{
          \widehat{\sigma}_{\beta}}(\widehat{\beta}-\beta_P) \leq x
      \right) - \Phi\left(x \right)\right| = 0.
  \]
\end{theorem}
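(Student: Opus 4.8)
The plan is to treat all three settings with one unified template based on Neyman-orthogonal scores and cross-fitting, since by construction of the one-step corrections in Sections~\ref{sec:estimation} and \ref{sec:np_estimation} each estimator is a cross-fitted average of an influence function $\varphi(\,\cdot\,;\eta)$ that is first-order insensitive to its nuisance $\eta$. Concretely, for a generic target $\beta_P$ with influence function $\varphi$ and nuisance $\eta_P$, every estimator can be written as $\widehat\beta = \frac{1}{K}\sum_{k=1}^K \frac{1}{|I_k|}\sum_{i \in I_k}\varphi(O_i;\widehat\eta_{-k})$, where $\widehat\eta_{-k}$ is fitted on the folds excluding $I_k$ and $\sigma_P^2 = \var_P(\varphi(O;\eta_P))$. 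The three cases differ only in the explicit score: the AIPW-type score underlying \eqref{eq:nonparametric_discPE} in case (b), the average-partial-effect score underlying \eqref{eq:nonparametric_contPE} in case (a), and the Robinson partialling-out score underlying \eqref{eq:PLM_effect} in case (c).

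First I would establish the linearization $\frac{\sqrt n}{\sigma_P}(\widehat\beta-\beta_P) = \frac{1}{\sqrt n\,\sigma_P}\sum_{i=1}^n \{\varphi(O_i;\eta_P) - \beta_P\} + R_n$. Fold by fold, $R_n$ splits into an empirical-process term $\frac{1}{\sqrt n}\sum_{i \in I_k}\{\varphi(O_i;\widehat\eta_{-k}) - \varphi(O_i;\eta_P)\} - \sqrt n\,\E_P[\varphi(O;\widehat\eta_{-k})-\varphi(O;\eta_P)\mid\widehat\eta_{-k}]$ and a bias term $\sqrt n\,\E_P[\varphi(O;\widehat\eta_{-k})-\varphi(O;\eta_P)\mid\widehat\eta_{-k}]$. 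The empirical-process term is handled by cross-fitting: conditionally on the training folds it is a centred sum of i.i.d.\ terms whose conditional variance equals the $L^2(P)$ plug-in error of the nuisance, so Chebyshev yields $o_P(1)$ uniformly once the nuisance estimators are $L^2$-consistent uniformly over $\mathcal P$. This is exactly the device that avoids Donsker-class conditions and makes the uniform statement tractable.

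The crux is the bias term. By Neyman orthogonality of each score — the very property the one-step corrections were built to enforce — the first-order Gateaux derivative of $\E_P\varphi$ in the nuisance direction vanishes, so a second-order expansion leaves only products of nuisance errors: $\sqrt n\,\|\widehat f - f_P\|\cdot\|\widehat\pi - \pi_P\|$ in case (b), $\sqrt n\,\|\widehat f - f_P\|\cdot\|\widehat\rho - \rho_P\|$ (plus a term in the derivative error $\|\partial_\ell\widehat f - \partial_\ell f_P\|$) in case (a), and $\sqrt n\,\|\widehat g - g_P\|\cdot\|\widehat m - m_P\|$ in case (c). The role of Assumptions~\ref{ass:tau_estimation}, \ref{ass:lambda_estimation} and \ref{ass:dml_estimation} is precisely to impose uniform product-rate conditions $\sup_{P\in\mathcal P}\E_P[\text{(product of nuisance errors)}] = o(n^{-1/2})$ together with uniform moment bounds and a uniform lower bound on $\sigma_P^2$. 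I expect this step to be the main obstacle: one must verify, separately for each of the three distinct scores, that orthogonality genuinely reduces the bias to a product of errors and that the stated assumptions deliver the required product rate \emph{uniformly} — this is most delicate in case (a), where the estimated derivative $\partial_\ell\widehat f$ and the conditional score $\widehat\rho$ of $L$ given $W$ must be controlled in the correct norms.

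Finally, with $R_n = o_P(1)$ uniformly, I would apply a uniform central limit theorem to the leading linear term. Since $\sigma_P^2$ is bounded away from $0$ and the third absolute moment of $\{\varphi(O;\eta_P)-\beta_P\}/\sigma_P$ is uniformly bounded (again supplied by the assumptions), Berry--Esseen gives $\sup_{x}|\mathbb P_P(\frac{1}{\sqrt n\,\sigma_P}\sum_i \{\varphi(O_i;\eta_P)-\beta_P\}\le x) - \Phi(x)| \le C n^{-1/2}$ with $C$ independent of $P$, hence $\sup_{P}\sup_x \to 0$. To replace $\sigma_P$ by the estimated $\widehat\sigma_\beta$ I would show $\widehat\sigma_\beta^2/\sigma_P^2 \to 1$ uniformly — each variance estimator being an empirical average of $\varphi(\,\cdot\,;\widehat\eta)^2$ controlled by the same consistency arguments — and then combine the Berry--Esseen bound with a uniform Slutsky/Pólya argument, using the uniform continuity of $\Phi$ to absorb both the ratio $\sigma_P/\widehat\sigma_\beta$ and the $o_P(1)$ remainder into the final $\sup_x$ bound.
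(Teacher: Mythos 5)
Your overall architecture---cross-fitted linearization into an influence-function average plus a remainder, the empirical-process term killed by conditioning on the training folds and Chebyshev, the bias term killed by orthogonality leaving a product of nuisance errors, then a uniform CLT and uniform Slutsky---is exactly the route the paper takes; the paper simply executes it three times by explicit term-by-term decomposition (terms $\RN{1}$, $\RN{2}$, $\RN{3}$, \dots) rather than through an abstract Neyman-orthogonality framework. There are, however, three concrete gaps in your sketch. First, in cases (b) and (c) the estimator is \emph{not} a single cross-fitted average of a score: $\widehat\lambda=\widehat\kappa/\widehat p$ and $\widehat\theta=\widehat J^{-1}\bar\kappa$ are ratios of cross-fitted averages, so your template needs an additional delta-method step linearizing the ratio (this is why $\widehat\sigma^2_\lambda$ carries the extra $-\widehat p^{-3}\widehat\kappa^2(1-\widehat p)$ term and why the paper separately establishes $\widehat J=\mu_P+o_{\mathcal P}(1)$ and inverts it via Lemma~\ref{lemma:o_P-O_P-calculus}). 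Second, in case (a) the statement that ``orthogonality reduces the bias to a product of errors'' is not how the derivative term is handled: the term $\partial_\ell(\widehat f-f_P)+\rho_P(\widehat f-f_P)$ is shown to be \emph{exactly} conditionally mean-zero by an integration-by-parts identity (Lemma~\ref{lemma:integration_by_parts}), which requires verifying a tail-decay condition on $(\widehat f-f_P)(\cdot,w)p(\cdot\mid w)$ (via Lemma~\ref{lemma:function_decay_from_integrals}); consequently the derivative error $\|\partial_\ell\widehat f-\partial_\ell f_P\|_2$ only needs to be $o_{\mathcal P}(1)$, not part of an $o(n^{-1/2})$ product rate, and the only genuine product condition is Assumption~\ref{ass:tau_estimation}~\eqref{ass:tau_estimation_prod_error} on $\widehat f$ and $\widehat\rho$. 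Without the integration-by-parts argument your second-order expansion would leave a first-order term in $\widehat f-f_P$ alone, which the assumptions do not control at rate $n^{-1/2}$.

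Third, your final step invokes Berry--Esseen with a uniformly bounded third absolute moment, but the assumptions only supply $(2+\delta)$-moments with $\delta>0$ arbitrary; as written this step fails when $\delta<1$. The paper instead uses a uniform Lindeberg-type CLT and a uniform law of large numbers for the self-normalized statistic \citep[Lemmas 18--20 of][]{shah2020hardness}, which require only the $(2+\delta)$-moment bound together with the uniform lower bound on the variance. This is a repairable but genuine mismatch between your claimed tool and the hypotheses you actually have available.
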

The result is proven in Section~\ref{sec:main_proofs} below. In the following
three sections, we go over each of the settings (a), (b) and (c) individually
and provide the required assumptions and detailed algorithms for the target
parameter and variance. For the remainder of this section, we omit
$\psi$-subscripts for simplicity and use the following notation to help us
express the uniformly asymptotic results rigorously. Let $(X_{P, n})_{n \in
\mathbb{N}, P \in \mathcal{P}}$ denote a family of sequences of random variables
indexed by $\mathcal{P}$. For a nonnegative sequence $(a_n)_{n \in \mathbb{N}}$
we write $X_{P, n} = o_{\mathcal{P}}(a_n)$ if for all $\epsilon > 0$ it holds
that
\[
  \lim_{n \to \infty} \sup_{P \in \mathcal{P}} \mathbb{P}_P(|X_{P,n}| \geq a_n \epsilon) = 0,
\]
that is, $X_{P, n}/a_n$ converges to $0$ in probability uniformly over
$\mathcal{P}$. We also write $X_{P, n} = O_{\mathcal{P}}(a_n)$ if for
all $\epsilon > 0$ there exist $M_\epsilon > 0$ and
$N_\epsilon \in \mathbb{N}$ such that
\[
  \sup_{n \geq N_\epsilon} \sup_{P \in \mathcal{P}} \mathbb{P}_P(|X_{P, n}| \geq a_n M) < \epsilon,
\]
that is, $X_{P, n}/a_n$ is bounded in probability for large $n$. When we write
conditional expectations given a fitted regression function $\widehat{f}$ we
mean a conditional expectation given the random variables used to fit
$\widehat{f}$ and any additional randomness involved in the fitting.

\subsubsection{Nonparametric estimation for directional effects}

We begin with the nonparametric estimation of a directional perturbation effect
$\tau$ corresponding to
Theorem~\ref{theorem:normality}~\eqref{theorem:normality_directional}. A
high-level overview of this estimation was given in
Section~\ref{sec:estimation_np_directional}. More formally, assume we have a
dataset of $n$ i.i.d.\ copies of $(Y, L, W)$, a differentiable regression method
$\widehat{f}$ for the conditional expectation $(\ell,
w)\mapsto\mathbb{E}[Y\given L=\ell, W=w]$, a score estimation method
$\widehat{\rho}$ for the score $(\ell,w)\mapsto \partial_{\ell}\log(p(\ell\given
w))$, where $p$ denotes the conditional density of $L$ given $W$ with
corresponding nuisance function estimation method $\widehat{\eta}$ (in our
experiments, we assume a location-scale model and thus the nuisance functions
are estimators of the conditional mean and variance of $L$ given $W$, see
Section~\ref{sec:loc-scale}). We can then compute the estimator $\widehat{\tau}$
of $\tau$ and its corresponding variance estimate $\widehat{\sigma}_{\tau}^2$ as
described in Algorithm~\ref{alg:nonparametric_tau} using a user-specified number
of splits $K$ in the cross-validation.

\begin{algorithm}[H]
  \caption{Nonparametric estimation of $\tau$}
  \begin{algorithmic}[1]
    \Statex \textbf{Input:} Data $(Y_i, L_i, W_i)_{i \in [n]}$,
    number of folds $K$, differentiable regression method $\widehat{f}$, score 
    estimation method $\widehat{\rho}$ and nuisance function estimate $\widehat{\eta}$.
    \State Split indices $\{1, \dots, n\}$ into $K$ disjoint sets $I_1, \dots, I_K$ of (roughly) equal size.
    \State For $k \in [K]$, split $I_k$ into $I_{k, 0}$ and $I_{k, 1}$ of (roughly) equal size and set $I_{-k} := \cup_{k' \neq k} I_{k'}$.
    \For{$k =1, \dots, K$}
      \State Regress $Y$ on $(L, W)$ using $I_{-k}$ yielding regression estimate $\widehat{f}_k$. 
      \State Fit nuisance function estimates $\widehat{\eta}_k$ using $I_{-k}$.
      \For{$r=0, 1$}
        \State Fit $\widehat{\rho}_{k,r}$ using $I_{k, (1-r)}$ and the nuisance function estimate $\widehat{\eta}_k$. 
        \State Set $\widehat{\tau}_{k,r} := |I_{k,r}|^{-1} \sum_{i\in I_{k, r}} \partial_{\ell} \widehat{f}_k(L_i, W_i) - \widehat{\rho}_{k, r}(L_i, W_i)\{Y_i - \widehat{f}_k(L_i, W_i) \}$.
        \State Set $\widehat{\nu}_{k,r} := |I_{k,r}|^{-1} \sum_{i\in I_{k,r}} \left(\partial_{\ell} \widehat{f}_k(L_i, W_i) - \widehat{\rho}_{k, r}(L_i, W_i)\{Y_i - \widehat{f}_k(L_i, W_i) \} \right)^2$.
      \EndFor
    \EndFor
    \Statex \textbf{return} $\widehat{\tau} := \frac{1}{2K}
    \sum_{k=1}^K \sum_{r=0}^1 \widehat{\tau}_{k,r}$ and
    $\widehat{\sigma}^2_{\tau} := \frac{1}{2K} \sum_{k=1}^K
    \sum_{r=0}^1
    \widehat{\nu}_{k,r}-\widehat{\tau}^2$.
  \end{algorithmic}
  \label{alg:nonparametric_tau}
\end{algorithm}

To obtain the result in Theorem~\ref{theorem:normality}~(a), we require that the
estimation methods $\widehat{f}$ and $\widehat{\rho}$ converge sufficiently fast
and that the class of distributions $\mathcal{P}$ satisfies several regularity
conditions.

\begin{assumption}
  \label{ass:tau_estimation}
  Assume that for all $P \in \mathcal{P}$ and $w \in \mathcal{W}$, the
  conditional distribution of $L$ given $W = w$ has a density with
  respect to Lebesgue measure and denote it by $p_P(\cdot\given
  w)$. For all $P \in \mathcal{P}$ define
  $f_P : \phi(\mathcal{Z}) \to \mathbb{R}$ for all
  $(\ell, w)\in\phi(\mathcal{Z})$ by
  $f_P(\ell, w) := \E_P[Y \given L=\ell, W=w]$ and assume that for
  almost all $(\ell, w)\in\phi(\mathcal{Z})$ both $p_P(\cdot\given w)$
  and $f_P(\cdot, w)$ are differentiable at $\ell$. Define
  $\tau_P := \E_P[\partial_\ell f_P(\ell, W) \bigm|_{\ell = L}]$ and
  $\rho_P : \phi(\mathcal{Z}) \to \mathbb{R}$ for all
  $(\ell, w)\in\phi(\mathcal{Z})$ by
  $\rho_P(\ell, w) := \partial_\ell \log(p_P(\ell \given w))$.

  Let $(Y_1, L_1, W_1), \dots, (Y_{3n}, L_{3n}, W_{3n})$ denote $3n$ i.i.d.\
  copies of $(Y, L, W)$ and define the training datasets
  $\mathcal{A}:=\{(Y_{n+1}, L_{n+1}, W_{n+1}), \dots, (Y_{2n}, L_{2n},
  W_{2n})\}$ and $\mathcal{B}:=\{(Y_{2n+1}, L_{2n+1}, W_{2n+1})$, $\dots$,
  $(Y_{3n},  L_{3n}, W_{3n})\}$. Let $\widehat{f}$ denote a differentiable
  estimator of $f_P$ based on $\mathcal{A}$, let $\widehat{\eta}$ denote an
  estimator of a nuisance function relevant for the estimation of $\rho_P$ based
  on $\mathcal{A}$ and let $\widehat{\rho}$ denote an estimator of $\rho_P$
  based on $\mathcal{B}$ and using the nuisance function estimate
  $\widehat{\eta}$. Finally, assume there exist constants $\delta, c, C > 0$
  such that
  \begin{enumerate}[(a)]
  \item \label{ass:tau_estimation_moments}
    for all $P \in \mathcal{P}:\quad$
    $\max\{\E_P[|Y|^{4+\delta}], \E_P[|\rho_P(L, W)|^{4+\delta}],
    \E_P[|\partial_\ell f_P(L, W)|^{2+\delta}], \var_{P}(Y \given L, W)\}
    \leq C$
  \item \label{ass:tau_estimation_var_bound} for all
    $P \in \mathcal{P}:\quad$
    $\min\{\var_P(Y \given L, W), \E_P[\rho_P(L, W)^{2}]\} \geq c$
  \item \label{ass:tau_estimation_f_conv}
    $\E_P\left[ \left( \widehat{f}(L_1, W_1) - f_P(L_1, W_1) \right)^{4}
      \given \,\widehat{f}\,\right] = o_{\mathcal{P}}(1)$
  \item \label{ass:tau_estimation_df_conv}
    $\E_P\left[ \left( \partial_\ell \widehat{f}(L_1, W_1) -
        \partial_\ell f_P(L_1, W_1) \right)^2 \given \,\widehat{f}\,
    \right] = o_{\mathcal{P}}(1)$
  \item \label{ass:tau_estimation_rho_conv}
    $\frac{1}{n} \sum_{i=1}^n \left( \widehat{\rho}(L_i, W_i) -
      \rho_P(L_i, W_i) \right)^2 = o_{\mathcal{P}}(1)$
  \item \label{ass:tau_estimation_prod_error}
    $\left(\frac{1}{n} \sum_{i=1}^n \left\lbrace \widehat{f}(L_i, W_i) -
        f_P(L_i, W_i) \right\rbrace^2\right) \left(\frac{1}{n} \sum_{i=1}^n
      \left\lbrace \widehat{\rho}(L_i, W_i) - \rho_P(L_i, W_i) \right\rbrace^2
    \right) = o_{\mathcal{P}}(n^{-1})$
  \end{enumerate}
\end{assumption}

\subsubsection{Nonparametric estimation for binary effects}
We now present the algorithm and conditions required for the nonparametric
estimation of the binary perturbation effect $\lambda$ corresponding to
Theorem~\ref{theorem:normality}~\eqref{theorem:normality_binary}. A high-level
overview of this estimation was provided in
Section~\ref{sec:estimation_np_binary}. More formally, assume we have a dataset
of $n$ i.i.d.\ copies of $(Y, L, W)$, a regression method $\widehat{f}$ for the
conditional expectation $(\ell, w) \mapsto \E[Y \given L=\ell, W=w]$ and
propensity score estimator $\widehat{\pi}$ for the propensity score $w \mapsto
\mathbb{P}(L = 1 \given W=w)$. We can then compute the estimator
$\widehat{\lambda}$ of $\lambda$ and its corresponding variance estimate
$\widehat{\sigma}^2_\lambda$ as described in
Algorithm~\ref{alg:nonparametric_lambda} using a user-specified number of splits
$K$ in the cross-validation.

\begin{algorithm}[H]
\caption{Nonparametric estimation of $\lambda$}
\begin{algorithmic}[1]
  \Statex \textbf{Input:} Data $(Y_i, L_i, W_i)_{i \in [n]}$, number of folds $K$, regression method $\widehat{f}$ and propensity score estimator $\widehat{\pi}$.
  \State Split indices $\{1, \dots, n\}$ into $K$ disjoint sets $I_1, \dots, I_K$ of (roughly) equal size.
  \State For $k \in [K]$, set $I_{-k} := \cup_{k' \neq k} I_{k'}$.
  \For{$k =1, \dots, K$}
    \State Regress $Y$ on $(L, W)$ using $I_{-k}$ yielding regression estimate $\widehat{f}_k$. 
    \State Regress $L$ on $W$ using $I_{-k}$ yielding propensity score function $\widehat{\pi}_k$.
    \State Set $\widehat{\kappa}_k := |I_k|^{-1} \sum_{i\in I_k}  \widehat{f}_k(1, W_i) - Y_i + \frac{Y_i-\widehat{f}_k(L_i,W_i)}{\widehat{\pi}_k(W_i)} L_i$.
    \State Set $\widehat{\nu}_k:= |I_k|^{-1} \sum_{i\in I_k} \left(\widehat{f}_k(1, W_i) - Y_i + \frac{Y_i-\widehat{f}_k(L_i,W_i)}{\widehat{\pi}_k(W_i)} L_i\right)^2$.
  \EndFor
  \State Set $\widehat{p} := \frac{1}{n} \sum_{i=1}^n 1-L_i$.
  \State Set $\widehat{\kappa} := \frac{1}{K} \sum_{k=1}^K \widehat{\kappa}_k$.
  \Statex \textbf{return} $\widehat{\lambda} := \widehat{p}^{-1} \widehat{\kappa}$ and $\widehat{\sigma}^2_{\lambda} := \widehat{p}^{-2} \left(\frac{1}{K} \sum_{k=1}^K \widehat{\nu}_k - \widehat{\kappa}^2 \right) - \widehat{p}^{-3}\widehat{\kappa}^2 (1-\widehat{p})$. 
\end{algorithmic}
\label{alg:nonparametric_lambda}
\end{algorithm}

To obtain the result in
Theorem~\ref{theorem:normality}~\eqref{theorem:normality_binary}, we require
that the estimation methods $\widehat{f}$ and $\widehat{\pi}$ converge
sufficiently fast and that the class of distributions $\mathcal{P}$ satisfies
several regularity conditions.

\begin{assumption}
\label{ass:lambda_estimation}
For all $P \in \mathcal{P}$ define $f_P: \{0, 1\} \times \mathcal{W} \to
\mathbb{R}$ and $\pi_P: \mathcal{W} \to [0, 1]$ for all $(\ell, w) \in \{0, 1\}
\times \mathcal{W}$ by $f_P(\ell, w) := \E_P[Y \given L=\ell, W=w]$ and
$\pi_P(w):= \mathbb{P}_P(L=1 \given W=w)$, respectively. Define $p_P :=
\mathbb{P}_P(L=0)$ and $\lambda_P := p_P^{-1} \left(\E_P[\E_P[Y \given L=1, W]]
- \E_P[Y] \right)$.

Let $(Y_1, L_1, W_1), \dots, (Y_{2n}, L_{2n}, W_{2n})$ denote $2n$ i.i.d.\
copies of $(Y, L, W)$ and define the training dataset $\mathcal{A} :=
\{(Y_{n+1}, L_{n+1}, W_{n+1}), \dots, (Y_{2n}, L_{2n}, W_{2n})\}$. Let
$\widehat{f}$ denote an estimator of $f_P$ based on $\mathcal{A}$ and let
$\widehat{\pi}$ denote an estimator of $\pi_P$ also based on $\mathcal{A}$.
Finally, assume there exist constants $\delta, c, C > 0$ such that
\begin{enumerate}[(a)]
  \item\label{ass:lambda_estimation_upper_bounds} for all $P \in
  \mathcal{P}:\quad$ $\max\{\E_P \left[|Y|^{2+\delta} \right], \Var_P(Y \given
  W, L)\} \leq C$
  \item \label{ass:lambda_estimation_lower_bound}  for all $P \in
  \mathcal{P}:\quad$ $\min\{\Var_P(Y \given W, L), p_P, 1-\pi_P(W), \pi_P(W),
  \widehat{\pi}(W)\} \geq c$ 
  \item \label{ass:lambda_estimation_pi_conv} $\frac{1}{n}\sum_{i=1}^n
    (\widehat{\pi}(W_i) - \pi_P(W_i) )^2 =
    o_{\mathcal{P}}(1)$ 
  \item \label{ass:lambda_estimation_f_conv} $\frac{1}{n}\sum_{i=1}^n
    (\widehat{f}(1, W_i) - f_P(1, W_i) )^2 = o_{\mathcal{P}}(1)$ 
  \item \label{ass:lambda_estimation_product_conv} $\left(\frac{1}{n}\sum_{i=1}^n \left\lbrace \widehat{f}(L_i, W_i) - f_P(L_i, W_i) \right\rbrace^2 \right) \left( \frac{1}{n}\sum_{i=1}^n \left\lbrace \widehat{\pi}(W_i) - \pi_P(W_i) \right\rbrace^2 \right)  = o_{\mathcal{P}}(n^{-1})$
\end{enumerate}
\end{assumption}

\subsubsection{Partially linear model for binary/directional effects}
Finally, we present the algorithm and conditions required for the estimation of
the perturbation effects under a partially linear model assumption corresponding
to Theorem~\ref{theorem:normality}~\eqref{theorem:normality_plm}. A high-level
overview of this estimation was provided in Section~\ref{sec:estimation_plm}.
More formally, assume we have a dataset of $n$ i.i.d.\ copies of $(Y, L, W)$,
regression methods $\widehat{m}$ and $\widehat{g}$ for the conditional
expectations $w \mapsto \E[Y \given W=w]$ and $w \mapsto \E[L \given W=w]$,
respectively. We can then compute the estimator $\widehat{\theta}$ of $\theta$
and its corresponding variance estimate $\widehat{\sigma}^2_\theta$ as described
in Algorithm~\ref{alg:dml} using a user-specified number of splits $K$ in the
cross-validation.

\begin{algorithm}[H]
\caption{Partially linear estimation of $\theta$ (where
    $\theta=\lambda$ or $\theta=\tau$)}
\begin{algorithmic}[1]
\Statex \textbf{Input:} Data $(Y_i, L_i, W_i)_{i \in [n]} $, number of folds $K$,
 regression methods $\widehat{g}$ and $\widehat{m}$.
\State Split indices $\{1, \dots, n\}$ into $K$ disjoint sets $I_1, \dots, I_K$ of (roughly) equal size.
\State For $k \in [K]$, set $I_{-k} := \cup_{k' \neq k} I_{k'}$.
\For{$k =1, \dots, K$}
    \State Regress $Y$ on $W$ using $I_{-k}$ yielding regression estimate $\widehat{g}_k$.
    \State Regress $L$ on $W$ using $I_{-k}$ yielding regression estimate $\widehat{m}_k$.
    \State Set $\widehat{J}_k := |I_k|^{-1} \sum_{i \in I_k} \{L_i - \widehat{m}_k(W_i) \}^2$.
    \State Set $\widehat{\kappa}_k := |I_k|^{-1} \sum_{i\in I_k} \{Y_i -
      \widehat{g}_k(W_i)\} \{L_i - \widehat{m}_k(W_i) \}$.
\EndFor
\State Set $\widehat{J} := \frac{1}{K} \sum_{k=1}^K \widehat{J}_k$
\State Set $\widehat{\theta} := \widehat{J}^{-1} \left( \frac{1}{K} \sum_{k=1}^K \widehat{\kappa}_k \right)$.
\For{$k=1, \dots, K$}
    \State Set $\widehat{\nu}_k := |I_k|^{-1} \sum_{i\in I_k} \left( \{Y_i -
    \widehat{g}_k(W_i)\} \{L_i - \widehat{m}_k(W_i) \} - \widehat{\theta} \{L_i - \widehat{m}_k(W_i) \}^2 \right)^2$
\EndFor
\Statex \textbf{return} $\widehat{\theta}$ and $\widehat{\sigma}^2_{\theta} := \widehat{J}^{-2} \left( \frac{1}{K} \sum_{k=1}^K \widehat{\nu}_k \right)$.
\end{algorithmic}
\label{alg:dml}
\end{algorithm}

To obtain the result in
Theorem~\ref{theorem:normality}~\eqref{theorem:normality_plm}, we require that
the estimation methods $\widehat{m}$ and $\widehat{g}$ converge sufficiently
fast and that the class of distributions $\mathcal{P}$ satisfies several
regularity conditions.

\begin{assumption}
  \label{ass:dml_estimation}
Assume that for all $P \in \mathcal{P}$ there exist $\theta_P \in \mathbb{R}$
and $h_P : \mathcal{W} \to \mathbb{R}$ such that
\[
  \E_P[Y \given L, W] = \theta_P L + h_P(W).
\]
For all $P \in \mathcal{P}$ define $m_P: \mathcal{W} \to \mathbb{R}$ and $g_P:
\mathcal{W} \to \mathbb{R}$ for all $w \in \mathcal{W}$ by $m(w):= \E_P[L \given
W=w]$ and $g_P(w):=\E_P[Y \given W=w] = m_P(w) + \theta_P m_P(w)$, respectively.
Define $\varepsilon := Y - \E_P[Y \given L, W]$ and $\xi := Y - \E_P[L \given
W]$.

Let $(Y_1, L_1, W_1), \dots, (Y_{2n}, L_{2n}, W_{2n})$ denote $2n$ i.i.d.\
copies of $(Y, L, W)$ and define the training dataset $\mathcal{A} :=
\{(Y_{n+1}, L_{n+1}, W_{n+1}), \dots, (Y_{2n}, L_{2n}, W_{2n})\}$.  Let
$\widehat{m}$ denote an estimator of $m_P$ based on $\mathcal{A}$ and let
$\widehat{g}$ denote an estimator of $g_P$ also based on $\mathcal{A}$. Suppose
that there exist constants $\delta, c, C > 0$ such that
\begin{enumerate}[(a)]
  \item \label{ass:dml_upper_bounds} for all $P \in \mathcal{P}:\quad$ $\max\{\E_P\left[|\varepsilon\xi|^{2+\delta} \right] \E_P\left[ |\xi|^{4} \right], \Var_P(Y \given W), \Var_P(L \given W) \} \leq C$
  \item \label{ass:dml_lower_bounds} for all $P \in \mathcal{P}:\quad$ $\min\{\E_P\left[\xi^2\right], \E_P\left[\varepsilon^2 \xi^2 \right] \} \geq c$
  \item \label{ass:dml_estimation_m_conv} $\frac{1}{n}\sum_{i=1}^n (\widehat{m}(W_i) - m_P(W_i) )^2 = o_{\mathcal{P}}(n^{-1/2})$
  \item \label{ass:dml_estimation_m_conv_4} $\frac{1}{n}\sum_{i=1}^n (\widehat{m}(W_i) - m_P(W_i) )^4 = o_{\mathcal{P}}(1)$
  \item \label{ass:dml_estimation_g_conv} $\frac{1}{n}\sum_{i=1}^n (\widehat{g}(W_i) - g_P(W_i) )^2 = o_{\mathcal{P}}(1)$
  \item \label{ass:dml_estimation_product_conv} $\left(\frac{1}{n}\sum_{i=1}^n \left\lbrace\widehat{m}(W_i) - m_P(W_i) \right\rbrace^2 \right) \left( \frac{1}{n}\sum_{i=1}^n \left\lbrace \widehat{g}(W_i) - g_P(W_i) \right\rbrace^2 \right)  = o_{\mathcal{P}}(n^{-1})$
\end{enumerate}
\end{assumption}

\subsection{Proof of Theorem~\ref{theorem:normality}}
\label{sec:main_proofs}
In this section, we prove Theorem~\ref{theorem:normality} and some
additional auxiliary results.

\subsubsection[Proof of (a)]{Proof of
\eqref{theorem:normality_directional}}
\begin{proof}
We define for all $k \in [K]$ and $r \in \{0, 1\}$, $N_{k,r} := |I_{k, r}|$ and for all $i \in [n]$, $\varepsilon_i := Y_i - f(L_i, W_i)$. We first argue
that
\begin{equation}
  \label{eq:tau_numerator_decomposition}
  \widehat{\tau} = \frac{1}{n} \sum_{i=1}^n \underbrace{\partial_{\ell} f_P(L_i, W_i) - \rho_P(L_i, W_i)\varepsilon_i}_{S_i} + o_{\mathcal{P}}(n^{-1/2}).
\end{equation}
To that end, we write for all $k \in [K]$ and $r \in \{0, 1\}$
\begin{align*}
  \widehat{\tau}_{k, r} &= \frac{1}{N_{k, r}} \sum_{i \in I_{k, r}} S_i\\
  &+ \underbrace{\frac{1}{N_{k,r}}  \sum_{i \in I_{k, r}}  \{\widehat{\rho}_{k, r}(L_i, W_i)- \rho_P(L_i, W_i)\}\{\widehat{f}_k(L_i, W_i) - f_P(L_i, W_i)\} }_{\RN{1}}\\
  &- \underbrace{\frac{1}{N_{k,r}}  \sum_{i \in I_{k, r}}  \{\widehat{\rho}_{k, r}(L_i, W_i)- \rho_P(L_i, W_i)\}\varepsilon_i}_{\RN{2}}\\
  & + \underbrace{\frac{1}{N_{k,r}}  \sum_{i \in I_{k, r}}\{\partial_{\ell}\widehat{f}_k(L_i, W_i) - \partial_\ell f_P(L_i, W_i)\} + \rho_P(L_i, W_i)\{\widehat{f}_k(L_i, W_i) - f_P(L_i, W_i)\}}_{\RN{3}}.
\end{align*}
By the Cauchy--Schwarz inequality, Assumption~\ref{ass:tau_estimation}~\eqref{ass:tau_estimation_prod_error} and Lemma~\ref{lemma:o_P-continuous-transformation}, we have
\[
  |\RN{1}| \leq \sqrt{\frac{1}{N_{k,r}}  \sum_{i \in I_{k, r}}  \{\widehat{\rho}_{k, r}(L_i, W_i)- \rho_P(L_i, W_i)\}^2 \frac{1}{N_{k,r}}  \sum_{i \in I_{k, r}}  \{\widehat{f}_k(L_i, W_i) - f_P(L_i, W_i)\}^2 } = o_{\mathcal{P}}(n^{-1}).
\]
Since the summands of $\RN{2}$ are conditionally i.i.d. and mean zero given $(L_i, W_i)_{i \in I_{k, r}}$ and $\widehat{\rho}_{k, r}$, we have by conditional Chebyshev's inequality that for any $\epsilon > 0$
\[
  \mathbb{P}_P(|\RN{2}| \geq N_{k, r}^{-1/2} \epsilon \given (L_i, W_i)_{i \in I_{k, r}}, \widehat{\rho}_{k, r}) \leq \frac{1}{N_k \epsilon^2} \sum_{i \in I_{k, r}}  \{\widehat{\rho}_{k, r}(L_i, W_i)- \rho_P(L_i, W_i)\}^2 \var_P(Y_i \given L_i, W_i).
\]
Combining the above with Assumption~\ref{ass:tau_estimation}~\eqref{ass:tau_estimation_moments} and \eqref{ass:tau_estimation_rho_conv}, we have 
\[
  \mathbb{P}_P(|\RN{2}| \geq N_{k, r}^{-1/2} \epsilon \given (L_i, W_i)_{i \in I_{k, r}}, \widehat{\rho}_{k, r}) = o_{\mathcal{P}}(1)
\]
which implies $\RN{2} = o_{\mathcal{P}}(N_k^{-1/2})$ as $\epsilon$ was arbitrary.

To deal with the $\RN{3}$ term, let $i \in I_{k, r}$ be fixed and note that by the triangle inequality, the Cauchy--Schwarz inequality for conditional expectations and independence of $(L_i, W_i)$ from $\widehat{f}_k$, we have
\begin{align*}
  &\E_P\left[ \left|\partial_{\ell}\widehat{f}_k(L_i, W_i) - \partial_\ell f_P(L_i, W_i) + \rho_P(L_i, W_i)\{\widehat{f}_k(L_i, W_i) - f_P(L_i, W_i)\}\right| \given \widehat{f}_k\right]\\
  &\leq \E_P[|\partial_\ell \widehat{f}_k(L_i, W_i)- \partial_\ell f_P(L_i, W_i)| \given \widehat{f}_k] + \E_P\left[\rho_P(L_i, W_i)^2 \right]^{1/2} \E_P\left[\{\widehat{f}_k(L_i, W_i) - f_P(L_i, W_i)\}^2 \given \widehat{f}_k\right]^{1/2}.
\end{align*}
By the above,
Assumption~\ref{ass:tau_estimation}~\eqref{ass:tau_estimation_f_conv},
\eqref{ass:tau_estimation_df_conv} and
\eqref{ass:tau_estimation_moments},
Lemma~\ref{lemma:o_P-O_P-calculus}~\eqref{lemma:o_P-O_P-calculus-1}
and Jensen's inequality, we have that
\begin{align*}
  \E_P\left[ \left|\partial_{\ell}\widehat{f}_k(L_i, W_i) - \partial_\ell f_P(L_i, W_i) + \rho_P(L_i, W_i)\{\widehat{f}_k(L_i, W_i) - f_P(L_i, W_i)\}\right| \given \widehat{f}_k\right] &= O_{\mathcal{P}}(1),\\
  \E_P\left[\left|\widehat{f}_k(L_i, W_i)-f_P(L_i, W_i)\right| \given \widehat{f}_k\right] &= O_{\mathcal{P}}(1).
\end{align*}
This implies that for any $\zeta > 0$, we can find $M > 0$ and $n \in \mathbb{N}$ so that 
\begin{align*}
  \Omega_P &:= \left\lbrace \E_P\left[\left|\widehat{f}_k(L_i, W_i)-f_P(L_i, W_i)\right| \given \widehat{f}_k\right] < M \right\rbrace \\
  &\cap \left\lbrace \E_P\left[ \left|\partial_{\ell}\widehat{f}_k(L_i, W_i) - \partial_\ell f_P(L_i, W_i) + \rho_P(L_i, W_i)\{\widehat{f}_k(L_i, W_i) - f_P(L_i, W_i)\}\right| \given \widehat{f}_k\right]  < M \right\rbrace 
\end{align*}
satisfies $\mathbb{P}_P(\Omega_P) \geq 1-\zeta/2$ for all $P \in \mathcal{P}$. We therefore have for any $\epsilon > 0$
\[
  \sup_{P \in \mathcal{P}} \mathbb{P}_P(|\RN{3}| \geq N_{k,r}^{-1/2} \epsilon) \leq \sup_{P \in \mathcal{P}} \mathbb{P}_P(\{|\RN{3}| \geq N_{k,r}^{-1/2}\epsilon\} \cap \Omega_P) + \zeta/2 \leq \sup_{P \in \mathcal{P}} \mathbb{P}_P(|\ind_{\Omega_P}\cdot \RN{3}| \geq N_{k,r}^{-1/2}\epsilon) + \zeta/2,
\]
hence it suffices to prove that $\ind_{\Omega_P}\cdot\RN{3} =
o_{\mathcal{P}}(N_{k,r}^{-1/2})$ to conclude that $\RN{3} =
o_{\mathcal{P}}(N_{k,r}^{-1/2})$. To that end, we first note
that
\[
  \E_P\left[ \ind_{\Omega_P}\cdot\RN{3} \given \widehat{f}_k \right] = \E_P\left[\ind_{\Omega_P} \left(\partial_{\ell}\widehat{f}_k(L_i, W_i) - \partial_\ell f_P(L_i, W_i) + \rho_P(L_i, W_i)\{\widehat{f}_k(L_i, W_i) - f_P(L_i, W_i)\} \right) \given \widehat{f}_k\right].
\]
We will show that 
\begin{equation}
  \label{eq:integration_by_parts_eq}
  \E_P\left[ \ind_{\Omega_P} \cdot \RN{3} \given \widehat{f}_k = h\right] = 0
\end{equation}
for $\mathbb{P}_P$-almost every $h$. If $h$ is such that
$\ind_{\Omega_P}= 0$ then clearly the conditional expectation is
$0$. If $h$ is such that $\ind_{\Omega_P} = 1$, then we intend to
apply Lemma~\ref{lemma:integration_by_parts} (integration by parts)
with $g=(h - f_P)$. Assumptions (a)-(c) of
Lemma~\ref{lemma:integration_by_parts} are clear from the
above. Assumption (d) of Lemma~\ref{lemma:integration_by_parts}
  also holds, since then both $\E[|h(L, W) - f_P(L, W)| \given W]$ and
  $\E[|\partial_{\ell} (h(L, W) - f_P(L, W)) + \rho_P(L, W) (h(L, W) -
  f_P(L, W))| \given W]$ are almost surely finite when
  $\ind_{\Omega_P} = 1$, so
  Lemma~\ref{lemma:function_decay_from_integrals} yields that the
  assumption is satisfied. We conclude that
\[
  \E_P\left[ \partial_\ell(h(L, W) - f_P(L, W)) + \rho_P(L, W)(h(L, W)- f_P(L, W))\right] = 0,
\] 
and hence \eqref{eq:integration_by_parts_eq} is satisfied. By conditional
Chebyshev's inequality (applicable since we have shown that summands are
conditionally mean zero in \eqref{eq:integration_by_parts_eq}), the fact that
$(a+b)^2\leq 2a^2+2b^2$ and conditional Cauchy--Schwarz inequality, we therefore
have
\begin{align*}
  &\mathbb{P}_P(| \ind_{\Omega_P}\cdot\RN{3}| \geq N_{k, r}^{-1/2}\epsilon \given
    \widehat{f}_k)\\
  &\qquad\leq \frac{1}{\epsilon^2 N_{k, r}} \E_P[\ind_{\Omega_P}\cdot\RN{3}^2 \given \widehat{f}_k]\\
  &\qquad= \frac{1}{\epsilon^2} \E_P\left[ \left(\partial_{\ell}\widehat{f}_k(L_i, W_i) - \partial_\ell f_P(L_i, W_i) + \rho_P(L_i, W_i)\{\widehat{f}_k(L_i, W_i) - f_P(L_i, W_i)\}\right)^2 \given \widehat{f}_k\right]\\
  &\qquad\leq \frac{2}{\epsilon^2} \E_P\left[ \left(\partial_{\ell}\widehat{f}_k(L_i, W_i) - \partial_\ell f_P(L_i, W_i) \right)^2 \given \widehat{f}_k\right]\\
  &\qquad+ \frac{2}{\epsilon^2} \E_P\left[|\rho_P(L_i, W_i)|^{4}\right]^{1/2} \E_P\left[\left|\widehat{f}_k(L_i, W_i) - f_P(L_i, W_i)\right|^{4} \given \widehat{f}_k\right]^{1/2} .
\end{align*}
Assumption~\ref{ass:tau_estimation}~\eqref{ass:tau_estimation_f_conv}, \eqref{ass:tau_estimation_df_conv} and \eqref{ass:tau_estimation_moments} and Lemma~\ref{lemma:o_P-continuous-transformation} thus yield that 
\[
  \mathbb{P}_P(| \ind_{\Omega_P}\cdot\RN{3}| \geq N_{k, r}^{-1/2} \given \widehat{f}_k) = o_{\mathcal{P}}(1)
\]
which implies that $\ind_{\Omega_P} \cdot \RN{3} =
o_{\mathcal{P}}(N_{k,r}^{-1/2})$ so $\RN{3} =
o_{\mathcal{P}}(N_{k,r}^{-1/2})$ by our earlier argument. The above combined with Lemma~\ref{lemma:uneven_crossfit_split} proves \eqref{eq:tau_numerator_decomposition}.

Let $\delta' := \delta/2$. We now prove that for all $i \in [n]$ and $P \in \mathcal{P}$, we have
\begin{equation}
  \label{eq:tau_v_moment_bound}
  \E_P\left[|S_i|^{2+\delta'}\right] \leq 2^{1+\delta'} \left(C^{(2+\delta')/(2+\delta)}+ C 2^{2+\delta'} \right)  =: C'.
\end{equation}
By using that $|a+b|^{1+\delta} \leq 2^{\delta}(|a|^{2+\delta} + |b|^{2+\delta})$ for $\delta > 0$ (sometimes known as the $c_r$-inequality), we have 
\[
|S_i|^{2+\delta'} \leq 2^{1+\delta'}\left( \left| \partial_\ell f_P(L_i, W_i) \right|^{2+\delta'} + \left|\rho_P(L_i, W_i)\right|^{2+\delta} \left| \varepsilon_i \right|^{2+\delta}\right),
\]
and
\[
  |\varepsilon_i|^{4+\delta} \leq 2^{3+\delta} \left(|Y_i|^{4+\delta} + |\E_P[Y_i \given L_i, W_i]|^{4+\delta}\right).
\]
Therefore, using Assumption~\ref{ass:tau_estimation}~\eqref{ass:tau_estimation_moments}, the Cauchy--Schwarz inequality and Jensen's inequality, we have 
\begin{align*}
  \E_P\left[|S_i|^{2+\delta'}\right] &\leq 2^{1+\delta'}\left( C^{(2+\delta')/(2+\delta)} + \left(\E_P\left[\left|\rho_P(L_i, W_i)\right|^{4+\delta}\right] \E_P \left[\left| \varepsilon_i \right|^{4+\delta}\right] \right)^{1/2}  \right)\\
  & \leq 2^{1+\delta'}\left( C^{(2+\delta')/(2+\delta)} + \left(C^2 (2^{4+\delta})\right)^{1/2}  \right) = C'
\end{align*}
as desired.

We now show that 
\begin{equation}
  \label{eq:tau_denominator_decomposition}
  \widehat{\sigma}^2_{\tau} = \frac{1}{n} \sum_{i=1}^n S_i^2 - \left( \frac{1}{n} \sum_{i=1}^n S_i \right)^2 + o_{\mathcal{P}}(1).
\end{equation}
Note that $\frac{1}{n} \sum_{i=1}^n S_i = O_{\mathcal{P}}(1)$ by \eqref{eq:tau_v_moment_bound} and Markov's inequality, hence by \eqref{eq:tau_numerator_decomposition} and Lemma~\ref{lemma:o_P-continuous-transformation}, we have that 
\[
  \widehat{\tau}^2 = \left(\frac{1}{n} \sum_{i=1}^n S_i\right)^2 + o_{\mathcal{P}}(1).
\]
We have thus shown \eqref{eq:tau_denominator_decomposition} if we can show that 
\[
  \frac{1}{2K} \sum_{k=1}^K \sum_{r=0}^1 \widehat{\nu}_{k, r} = \frac{1}{n} \sum_{i=1}^n S_i^2 + o_{\mathcal{P}}(1).
\]
To that end using the same decomposition as for $\widehat{\tau}_{k, r}$, we write for $k \in [K]$, $r \in \{0, 1\}$, $i \in I_{k, r}$
\begin{align*}
  &\partial_\ell \widehat{f}_k(L_i, W_i) - \widehat{\rho}_{k, r}(L_i, W_i) \{Y_i - \widehat{f}_{k, r}(L_i, W_i)\} = \underbrace{\partial_{\ell} f_P(L_i, W_i) - \rho_P(L_i, W_i)\varepsilon_i}_{S_i}\\
  &\underbrace{\boxed{\begin{aligned}  &+ \{\widehat{\rho}_{k, r}(L_i, W_i)- \rho_P(L_i, W_i)\}\{\widehat{f}_k(L_i, W_i) - f_P(L_i, W_i)\} - \{\widehat{\rho}_{k, r}(L_i, W_i)- \rho_P(L_i, W_i)\}\varepsilon_i\\
    & + \{\partial_{\ell}\widehat{f}_k(L_i, W_i) - \partial_\ell f_P(L_i, W_i)\} + \rho_P(L_i, W_i)\{\widehat{f}_k(L_i, W_i) - f_P(L_i, W_i)\}\end{aligned}}}_{R_i}
\end{align*}
so that 
\[
  \widehat{\nu}_{k, r} = \frac{1}{N_{k, r}} \sum_{i \in I_{k, r}} S_i^2 + \frac{1}{N_{k, r}} \sum_{i \in I_{k, r}} R_i^2 + \frac{2}{N_{k, r}} \sum_{i \in I_{k, r}} S_i R_i
\]
Since $(a+b+c)^2 \leq 3 (a^2+b^2+c^2)$ (which can be seen by the Cauchy--Schwarz inequality), we have
\begin{align*}
  \frac{1}{N_{k, r}} \sum_{i \in I_{k, r}} R_i^2 &\leq 3 \biggl( \underbrace{\frac{1}{N_{k, r}} \sum_{i \in I_{k, r}} \{\widehat{\rho}_{k, r}(L_i, W_i)- \rho_P(L_i, W_i)\}^2\{\widehat{f}_k(L_i, W_i) - f_P(L_i, W_i)\}^2}_{\widetilde{\RN{1}}}\\
  &+ \underbrace{\frac{1}{N_{k, r}} \sum_{i \in I_{k, r}} \{\widehat{\rho}_{k, r}(L_i, W_i)- \rho_P(L_i, W_i)\}^2\varepsilon_i^2}_{\widetilde{\RN{2}}}\\
  &+ \underbrace{\frac{1}{N_{k, r}} \sum_{i \in I_{k, r}} \left(\{\partial_{\ell}\widehat{f}_k(L_i, W_i) - \partial_\ell f_P(L_i, W_i)\} + \rho_P(L_i, W_i)\{\widehat{f}_k(L_i, W_i) - f_P(L_i, W_i)\} \right)^2}_{\widetilde{\RN{3}}} \biggr).
\end{align*}
By using that $\sum_{i} a_i^2 b_i^2 \leq \left(\sum_{i} a_i^2 \right) \left( \sum_i b_i^2 \right)$, we have 
\[
  \widetilde{\RN{1}} \leq \left(\frac{1}{N_{k,r}}  \sum_{i \in I_{k,r}} \{\widehat{\rho}_{k, r}(L_i, W_i)- \rho_P(L_i, W_i)\}^2 \right)  \left( \frac{1}{N_k}  \sum_{i \in I_{k,r}}\left( \widehat{f}_k(L_i, W_i) - f_P(L_i, W_i) \right)^2 \right).
\]
We have already shown above as part of the argument for the $\RN{1}$ term that the upper bound is $o_{\mathcal{P}}(1)$. Similarly, each of the remaining two terms have already been shown to be $o_{\mathcal{P}}(1)$ as part of the bounds above for the corresponding terms without tildes hence we deduce that
\[
  \frac{1}{N_{k,r}} \sum_{i \in I_{k,r}} R_i^2 = o_{\mathcal{P}}(1).
\]
By the Cauchy--Schwarz inequality, we have that 
\[
  \frac{2}{N_{k,r}} \left| \sum_{i \in I_{k,r}} S_iR_i \right| \leq 2 \left( \frac{1}{N_{k,r}^2} \sum_{i \in I_{k,r}} R_i^2 \sum_{i \in I_{k,r}} S_i^2 \right)^{1/2}.
\]
Since $1/N_{k,r} \sum_{i \in I_{k, r}} S_i^2 = O_{\mathcal{P}}(1)$ by \eqref{eq:tau_v_moment_bound} and Markov's inequality and $\frac{1}{N_{k, r}} \sum_{i \in I_{k,r}} R_i^2 = o_\mathcal{P}(1)$ by the above, we conclude by Lemma~\ref{lemma:o_P-continuous-transformation} that
\[
  \frac{2}{N_{k,r}}  \sum_{i \in I_{k,r}} S_iR_i = o_\mathcal{P}(1)
\]
and Lemma~\ref{lemma:uneven_crossfit_split} concludes the proof of \eqref{eq:tau_denominator_decomposition}.

Since for all $i \in [n]$,  
\[
  \E_P[S_i] = \E_P[\partial_\ell f_P(L_i, W_i) - \rho_P(L_i, W_i)\varepsilon_i] = \tau_P,
\]
and \eqref{eq:tau_v_moment_bound} implies that $\frac{1}{\sqrt{n}}
\sum_{i=1}^n (S_i - \tau_P) = O_{\mathcal{P}}(1)$, we have by
Lemma~\ref{lemma:o_P-continuous-transformation},
\eqref{eq:tau_numerator_decomposition} and
\eqref{eq:tau_denominator_decomposition} that 
\[
  \frac{\sqrt{n}}{\widehat{\sigma}_{\tau}}(\widehat{\tau} - \tau_P) = \frac{\frac
  {1}{\sqrt{n}} \sum_{i=1}^n (S_i - \tau_P)}{ \left( \frac{1}{n} \sum_{i=1}^n S_i^2 - \left(\frac{1}{n} \sum_{i=1}^n S_i \right)^2 \right)^{1/2}} + o_{\mathcal{P}}(1).
\]
The conclusion now follows from a uniform version of Slutsky's theorem \citep[][Lemma 20]{shah2020hardness} and Lemma~\ref{lemma:self-normalised_clt}~\eqref{lemma:self-normalised_clt-1}, which we can apply by Assumption~\ref{ass:tau_estimation}~\eqref{ass:tau_estimation_moments} and \eqref{ass:tau_estimation_var_bound}, since
\begin{align*}
  \var_P(S_i) &= \E_P\left[\left(\partial_\ell f_P(L, W) - \rho_P(L, W)(Y - f_P(L, W))\right)^2\right] - \E_P\left[\partial_\ell f_P(L, W)\right]^2\\
  &\geq \E_P[\rho_P(L, W)^2(Y - f_P(L, W))^2] = \E_P[\rho_P(L, W)^2\var_P(Y \given L, W)] \geq c^2.
\end{align*}

\end{proof}

\subsubsection[Proof of (b)]{Proof of \eqref{theorem:normality_binary}}
\begin{proof}
We define for all $k \in [K]$, $N_k := |I_k|$ and for all $i \in [n]$ $\varepsilon_i := Y_i - f_P(L_i, W_i)$. We first argue that
\begin{equation}
  \label{eq:lambda_numerator_decomposition}
  \widehat{\kappa} = \frac{1}{n} \sum_{i=1}^n \underbrace{f_P(1, W_i)-Y_i + \frac{\varepsilon_i}{\pi_P(W_i)}L_i}_{S_i} + o_{\mathcal{P}}(n^{-1/2}).
\end{equation}
To that end, we write for $k \in [K]$
\begin{align*}
  \widehat{\kappa}_{k} &= \frac{1}{N_k} \sum_{i \in I_k} S_i\\
  &- \underbrace{\frac{1}{N_k}  \sum_{i \in I_k} \left( \frac{1}{\widehat{\pi}_k(W_i)} - \frac{1}{\pi_P(W_i)} \right) \left( \widehat{f}_k(L_i, W_i) - f_P(L_i, W_i) \right) L_i}_{\RN{1}}\\
  &+ \underbrace{\frac{1}{N_k}  \sum_{i \in I_k} \left( \frac{1}{\widehat{\pi}_k(W_i)} - \frac{1}{\pi_P(W_i)} \right) \varepsilon_i  L_i}_{\RN{2}}\\
  &+ \underbrace{\frac{1}{N_k}  \sum_{i \in I_k} \left(\widehat{f}_k(1, W_i) - f_P(1, W_i) \right) \left( 1- \frac{L_i}{\pi_P(W_i)} \right)}_{\RN{3}}.
\end{align*}
Applying the triangle and Cauchy--Schwarz inequality, followed by Assumption~\ref{ass:lambda_estimation}~\eqref{ass:lambda_estimation_lower_bound} and \eqref{ass:lambda_estimation_product_conv} leads to
\begin{align*}
  |\RN{1}| &\leq \sqrt{\frac{1}{N_k^2} \sum_{i \in I_k} \frac{\{\pi_P(W_i) - \widehat{\pi}_k(W_i)\}^2}{\widehat{\pi}_k(W_i)^2 \pi_P(W_i)^2}  \sum_{i \in I_k} \{\widehat{f}(L_i, W_i) - f_P(L_i, W_i) \}^2 }\\ &\leq \frac{1}{\zeta^2} \sqrt{\frac{1}{N_k^2} \sum_{i \in I_k} \{\pi_P(W_i) - \widehat{\pi}_k(W_i)\}^2 \sum_{i \in I_k} \{\widehat{f}(L_i, W_i) - f_P(L_i, W_i) \}^2 } = o_{\mathcal{P}}(N_k^{-1/2}).
\end{align*}
Since the summands of $\RN{2}$ are conditionally i.i.d.\ and mean zero given $(L_i, W_i)_{i \in I_k}$ and $\widehat{\pi}_k$, we have by conditional Chebyshev's inequality that for any $\epsilon > 0$
\begin{align*}
\mathbb{P}_P(|\RN{2}| \geq N_k^{-1/2} \epsilon \given (L_i, W_i)_{i \in I_k}, \widehat{\pi}_k) &\leq \frac{1}{N_k \epsilon^2} \sum_{i \in I_k} \left( \frac{1}{\widehat{\pi}_k(W_i)} - \frac{1}{\pi_P(W_i)} \right)^2 \Var_P(Y_i \given W_i, L_i)\\
& = \frac{1}{N_k \epsilon^2} \sum_{i \in I_k} \frac{\{\pi_P(W_i) - \widehat{\pi}_k(W_i)\}^2}{\widehat{\pi}_k(W_i)^2 \pi_P(W_i)^2} \Var_P(Y_i \given W_i, L_i).
\end{align*}
Combining Assumption~\ref{ass:lambda_estimation}~\eqref{ass:lambda_estimation_lower_bound}, \eqref{ass:lambda_estimation_pi_conv} and \eqref{ass:lambda_estimation_upper_bounds}, we conclude that 
\[
  \mathbb{P}_P(|\RN{2}| \geq N_k^{-1/2} \epsilon \given (L_i, W_i)_{i \in I_k}, \widehat{\pi}_k) = o_{\mathcal{P}}(1)
\]
which implies that $\RN{2} = o_{\mathcal{P}}(N_k^{-1/2})$ as $\epsilon$ was arbitrary.

The summands of $\RN{3}$ are conditionally i.i.d.\ and mean zero given $(W_i)_{i \in I_k}$ and $\widehat{f}_k$, hence by a similar argument as for the previous term, for any $\epsilon > 0$, 
\begin{align*}
  \mathbb{P}(|\RN{3}| \geq N_k^{-1/2} \epsilon \given (W_i)_{i \in I_k}, \widehat{f}_k) &\leq \frac{1}{N_k \epsilon^2} \sum_{i \in I_k} \E_P\left[ \left(1- \frac{L_i}{\pi_P(W_i)}  \right)^2 \Given W_i \right] \left( \widehat{f}_k(1, W_i) - f_P(1, W_i) \right)^2.
\end{align*}
By Assumption~\ref{ass:lambda_estimation} \eqref{ass:lambda_estimation_lower_bound}, we have 
\[
  \E_P\left[ \left(1- \frac{L_i}{\pi_P(W_i)}  \right)^2 \Given W_i \right] = \frac{1}{\pi_P(W_i)} -
  1 \leq \frac{1-c}{c},
\]
hence, by Assumption~\ref{ass:lambda_estimation} \eqref{ass:lambda_estimation_f_conv}, 
\[
  \mathbb{P}_P(|\RN{3}| \geq N_k^{-1/2}\epsilon \given (W_i)_{i \in I_k}, \widehat{f}_k) = o_{\mathcal{P}}(1)
\]
which implies that $\RN{3} = o_{\mathcal{P}}(N_k^{-1/2})$ as $\epsilon$ was arbitrary. The above combined with Lemma~\ref{lemma:uneven_crossfit_split} proves \eqref{eq:lambda_numerator_decomposition}.

We now prove that for $i \in [n]$
\begin{equation}
  \label{eq:lambda_v_moment_bound}
  \E_P\left[|S_i|^{2+\delta}\right] \leq 3^{1+\delta} C \left( 1 + c^{-(2+\delta)} (1+2^{2+\delta})   \right) =: C'.
\end{equation}
We have by repeated applications of Jensen's inequality that 
\[
|S_i|^{2+\delta} \leq 3^{1+\delta}\left(|f_P(1, W_i)|^{2+\delta} + |Y_i|^{2+\delta} + \frac{|\varepsilon_i|^{2+\delta}}{\pi_P(W_i)^{2+\delta}} \right),
\]
by Assumption~\ref{ass:lambda_estimation}~\eqref{ass:lambda_estimation_lower_bound},
\[
  |f_P(1, W_i)|^{2+\delta} = \frac{|\E_P[Y_i \ind_{\{L_i=1\}} \given W_i]|^{2+\delta}}{\pi_P(W_i)^{2+\delta}} \leq \E_P\left[|Y_i|^{2+\delta} \given W_i\right] c^{-(2+\delta)}
\]
and
\[
  |\varepsilon_i|^{2+\delta} \leq 2^{1+\delta} \left( |Y_i|^{2+\delta} + |\E_P[Y_i \given W_i, L_i]|^{2+\delta} \right).
\]
Therefore, using Assumption~\ref{ass:lambda_estimation}~\eqref{ass:lambda_estimation_upper_bounds}
\[
  \E_P\left[|S_i|^{2+\delta}\right] \leq 3^{1+\delta} \E_P\left[|Y_i|^{2+\delta} \right] \left( 1 + c^{-(2+\delta)} (1 + 2^{1+\delta})   \right) \leq C'
\]
as desired.

Define $\kappa_P := \E_P[S_i]$ and $\sigma^2_{\kappa_P} := \var_P(S_i)$. We now show that
\begin{equation}
\label{eq:lambda_denominator_decomposition}
  \frac{1}{K}\sum_{j=1}^k \widehat{\nu}_k -\widehat{\kappa}^2 = \sigma^2_{\kappa_P} + o_{\mathcal{P}}(1).
\end{equation}
By \eqref{eq:lambda_numerator_decomposition} and Chebyshev's inequality, we have that 
\[
  \widehat{\kappa} = \kappa_P + o_{\mathcal{P}}(1),
\]
thus by \eqref{eq:lambda_v_moment_bound} and Lemma~\ref{lemma:o_P-continuous-transformation}, we have 
\[
  \widehat{\kappa}^2 = \kappa_P^2 + o_{\mathcal{P}}(1).
\]
We have therefore shown \eqref{eq:lambda_denominator_decomposition} if we can show that
\begin{equation}
  \label{eq:lambda_nu_decomposition}
  \frac{1}{K} \sum_{k=1}^K \widehat{\nu}_k =  \E_P[S_i^2] + o_{\mathcal{P}}(1).
\end{equation}
To that end using the same composition as for $\widehat{\kappa}_k$, we write for $k \in [K]$, $i \in I_k$
\begin{align*}
&\widehat{f}_k(1, W_i) - Y_i + \frac{Y_i-\widehat{f}_k(L_i,W_i)}{\widehat{\pi}_k(W_i)} L_i = \underbrace{f_P(1, W_i)-Y_i + \frac{\varepsilon_i}{\pi_P(W_i)}L_i}_{S_i}\\
&\underbrace{\boxed{\begin{aligned}& -\left( \frac{1}{\widehat{\pi}_k(W_i)} - \frac{1}{\pi_P(W_i)} \right) \left( \widehat{f}_k(L_i, W_i) - f_P(L_i, W_i) \right) L_i + \left( \frac{1}{\widehat{\pi}_k(W_i)} - \frac{1}{\pi_P(W_i)} \right) \varepsilon_i  L_i\\
  &+ \left(\widehat{f}_k(1, W_i) - f_P(1, W_i) \right) \left( 1- \frac{L_i}{\pi_P(W_i)} \right)\end{aligned}}}_{R_i}
\end{align*}
so that 
\begin{align*}
&\widehat{\nu}_k = \frac{1}{N_k} \sum_{i \in I_k} S_i^2 + \frac{1}{N_k} \sum_{i \in I_k} R_i^2 + \frac{2}{N_k} \sum_{i \in I_k} S_i R_i.
\end{align*}
We have 
\begin{align*}
\frac{1}{N_k} \sum_{i \in I_k} R_i^2 &\leq 3 \biggl(\underbrace{\frac{1}{N_k}  \sum_{i \in I_k} \left( \frac{1}{\widehat{\pi}_k(W_i)} - \frac{1}{\pi_P(W_i)} \right)^2 \left( \widehat{f}_k(L_i, W_i) - f_P(L_i, W_i) \right)^2}_{\widetilde{\RN{1}}}\\
&+ \underbrace{\frac{1}{N_k}  \sum_{i \in I_k} \left( \frac{1}{\widehat{\pi}_k(W_i)} - \frac{1}{\pi_P(W_i)} \right)^2 \varepsilon_i^2 }_{\widetilde{\RN{2}}} + \underbrace{\frac{1}{N_k}  \sum_{i \in I_k} \left(1- \frac{L_i}{\pi_P(W_i)} \right)^2 \left( \widehat{f}_k(1, W_i) - f_P(1, W_i) \right)^2}_{\widetilde{\RN{3}}} \biggr).
\end{align*}
By using that $\sum_{i} a_i^2 b_i^2 \leq \left(\sum_{i} a_i^2 \right) \left( \sum_i b_i^2 \right)$, we have 
\[
  \widetilde{\RN{1}} \leq \left(\frac{1}{N_k}  \sum_{i \in I_k} \left( \frac{1}{\widehat{\pi}_k(W_i)} - \frac{1}{\pi_P(W_i)} \right)^2 \right)  \left( \frac{1}{N_k}  \sum_{i \in I_k}\left( \widehat{f}_k(L_i, W_i) - f_P(L_i, W_i) \right)^2 \right).
\]
We have already shown as part of the argument for the $\RN{1}$ term that the upper bound above is $o_{\mathcal{P}}(1)$. Similarly, each of the remaining two terms have already been shown to be $o_{\mathcal{P}}(1)$ as part of the bounds above for the corresponding terms without tildes hence we deduce that 
\[
  \frac{1}{N_k} \sum_{i \in I_k} R_i^2 = o_{\mathcal{P}}(1).
\]
By the Cauchy--Schwarz inequality, we have that 
\[
  \frac{2}{N_k} \left| \sum_{i \in I_k} S_iR_i \right| \leq 2 \left( \frac{1}{N_k^2} \sum_{i \in I_k} R_i^2 \sum_{i \in I_k} S_i^2 \right)^{1/2}.
\]
Since $1/N_k \sum_{i \in I_k} S_i^2 = O_{\mathcal{P}}(1)$ by \eqref{eq:lambda_v_moment_bound} and Markov's inequality and $\frac{1}{N_k} \sum_{i \in I_k} R_i^2 = o_\mathcal{P}(1)$ by the above, we conclude by Lemma~\ref{lemma:o_P-continuous-transformation} that
\[
  \frac{2}{N_k}  \sum_{i \in I_k} S_iR_i = o_\mathcal{P}(1).
\]
Applying Lemma~\ref{lemma:uneven_crossfit_split}, we conclude that 
\[
  \frac{1}{K} \sum_{k=1}^K \widehat{\nu}_k = \frac{1}{n} \sum_{i=1}^n S_i^2 + o_{\mathcal{P}}(1)
\]
so \eqref{eq:lambda_nu_decomposition} now holds by Chebyshev's inequality and thus also \eqref{eq:lambda_denominator_decomposition} follows.

Applying Chebyshev's inequality once more, we see that 
\begin{equation}
  \label{eq:lambda_p_decomposition}
  \widehat{p} = p_P + o_\mathcal{P}(1).
\end{equation}
Using this, Assumption~\ref{ass:lambda_estimation}~\eqref{ass:lambda_estimation_lower_bound}, \eqref{eq:lambda_v_moment_bound} and \eqref{eq:lambda_denominator_decomposition}, we have by Lemma~\ref{lemma:o_P-O_P-calculus} that
\begin{equation}
  \label{eq:lambda_variance}
  \widehat{\sigma}^2_{\lambda} = \underbrace{\frac{\sigma^2_{\kappa_P}}{p_P^2} - \frac{\kappa_P^2(1-p_P)}{p_P^3}}_{\sigma^2_{\lambda_P}} + o_{\mathcal{P}}(1).
\end{equation}
We note that by the Cauchy--Schwarz inequality
\[
  \kappa_P^2 = \E_P[S_i]^2 \leq p_P \E_P\left[ (f_P(1, W) - f(L, W))^2 \right]
\]
thus by Assumption~\ref{ass:lambda_estimation}~\eqref{ass:lambda_estimation_lower_bound}
\begin{equation}
  \label{eq:sigma_lambda_lower_bound}
  p_P \sigma^2_{\kappa_P} - \kappa_P^2(1-p_P) = p_P \E_P[S_i^2] - \kappa_P^2 \geq p_P \E_P\left[(Y-f_P(L, W))^2\left(\frac{(L-\pi(W))^2}{\pi(W)^2} \right)\right] \geq 1-c > 0.
\end{equation}
This in turn implies that $\sigma^2_{\lambda_P} \geq c^{-3}(1-c)$ so, by repeated applications of Lemma~\ref{lemma:o_P-O_P-calculus} using \eqref{eq:lambda_numerator_decomposition}, \eqref{eq:lambda_denominator_decomposition}, \eqref{eq:lambda_p_decomposition} and \eqref{eq:tau_v_moment_bound}, we have 
\begin{align*}
  \frac{\sqrt{n}}{\widehat{\sigma}_{\lambda}}(\widehat{\lambda} - \lambda_P) =   \frac{\sqrt{n}}{p_P \widehat{p} \widehat{\sigma}_{\lambda}}(\widehat{\kappa} p_P  - \widehat{p}\kappa_P ) = \frac{\sqrt{n}}{p_P^2 \sigma_{\lambda}}\left(\frac{1}{n} \sum_{i=1}^n S_i p_P - (1-L_i) \kappa_P \right) + o_{\mathcal{P}}(1).
\end{align*}
Since
\[
  \var_P(S_i p_P - (1-L_i) \kappa_P ) = \sigma^2_{\kappa_P} p_P^2 - \kappa_P (1-p_P) p_P = \sigma^2_{\lambda_P} p_P^4,
\]
and $\sigma^2_{\lambda_P} p_P^4 \geq c(1-c)$ by \eqref{eq:sigma_lambda_lower_bound}, the result now follows by Lemma~\ref{lemma:self-normalised_clt}~\eqref{lemma:self-normalised_clt-1} and a uniform version of Slutsky's theorem \citep[][Lemma 20]{shah2020hardness}.

\end{proof}

\subsubsection[Proof of (c)]{Proof of \eqref{theorem:normality_plm}}
\begin{proof}

We set $\mu_P := \E_P\left[ \xi^2 \right]$ and define for all $k \in [K]$, $N_k := |I_k|$.

We first argue that 
\begin{equation}
  \label{eq:dml_cfi_J_decomposition}
  \widehat{J} = \frac{1}{n}\sum_{i=1}^n \xi_i^2 + o_{\mathcal{P}}(n^{-1/2}).
\end{equation}
To that end, for all $k \in [K]$, we note that 
\[
  \widehat{J}_k = \frac{1}{N_k} \sum_{i\in I_k} \xi_i^2 + \underbrace{\frac{1}{N_k}\sum_{i\in I_k} \{m_P(W_i)- \widehat{m}_k(W_i)\}^2}_{\RN{1}} + \underbrace{\frac{1}{N_k} \sum_{i\in I_k} \{m_P(W_i)- \widehat{m}_k(W_i)\}\xi_i}_{\RN{2}}.
\]
By
Assumption~\ref{ass:dml_estimation}~\eqref{ass:dml_estimation_m_conv},
we have $\RN{1}=o_\mathcal{P}(N_k^{-1/2})$. Since the summands of the
third term are conditionally i.i.d. and mean zero given $\widehat{m}_k$ and $(W_i)_{i \in
  I_k}$, we have by the
conditional version of Chebyshev's inequality that
\[
  \mathbb{P}_P\left( \left|\RN{2} \right| \geq N_k^{-1/2} \epsilon \given  (W_i)_{i \in I_k}, \widehat{m}_k \right) \leq \frac{1}{N_k \epsilon^2} \sum_{i \in I_k} \{m_P(W_i)-\widehat{m}_k(W_i)\}^2 \var_P(L_i \given W_i). 
\]
Combining the above with
Assumption~\ref{ass:dml_estimation}~\eqref{ass:dml_upper_bounds}
and \eqref{ass:dml_estimation_m_conv}, we conclude that
\[
  \mathbb{P}_P\left( \left|\RN{2} \right| \geq N_k^{-1/2} \epsilon \given  (W_i)_{i \in I_k}, \widehat{m}_k \right)  = o_{\mathcal{P}}(N_k^{-1/2}) = o_{\mathcal{P}}(1)
\]
which implies that $\RN{2} = o_{\mathcal{P}}(N_k^{-1/2})$ as $\epsilon$ was arbitrary. Combining the above with Lemma~\ref{lemma:uneven_crossfit_split}, we see that \eqref{eq:dml_cfi_J_decomposition} holds.

We now argue that
\begin{equation}
  \label{eq:dml_numerator_decomposition}
  \widehat{\theta} - \theta_P = \frac{1}{n} \sum_{i=1}^n \frac{\varepsilon_i \xi_i}{\mu_P} + o_{\mathcal{P}}(n^{-1/2}).
\end{equation}
To do so, we first note that
\begin{equation}
  \label{eq:theta_bound}
  \sup_{P \in \mathcal{P}} |\theta_P| =  \sup_{P \in \mathcal{P}} \left| \frac{\E_P[\varepsilon\xi]}{\E_P\left[ \xi^2 \right]} \right| \leq \sup_{P \in \mathcal{P}} \frac{ \E_P[|\varepsilon\xi|]}{\mu_P} \leq \frac{C^{1/(2+\delta)}}{c}
\end{equation}
by Assumption~\ref{ass:dml_estimation}~\eqref{ass:dml_upper_bounds} and \eqref{ass:dml_lower_bounds}. Thus, for all $k \in [K]$, we note that, using the fact that $Y - g_P(W) = \varepsilon + \theta_P\xi$ and \eqref{eq:dml_cfi_J_decomposition},
\begin{align*}
  \widehat{\kappa}_k - \theta_P \widehat{J} &= \frac{1}{N_k} \sum_{i \in I_k} \varepsilon_i \xi_i +\underbrace{\frac{1}{N_k} \sum_{i \in I_k} \{g_P(W_i) - \widehat{g}_k(W_i)\}\{m_P(W_i) - \widehat{m}_k(W_i)\}}_{\RN{3}}\\
  &+ \underbrace{\frac{1}{N_k} \sum_{i \in I_k} \{g_P(W_i) - \widehat{g}_k(W_i)\}\xi_i}_{\RN{4}} + \underbrace{\frac{1}{N_k} \sum_{i \in I_k} \{m_P(W_i) - \widehat{m}_k(W_i)\}\{\theta_P \xi_i + \varepsilon_i\}}_{\RN{5}} + o_{\mathcal{P}}(N_k^{-1/2}).
\end{align*}
By the Cauchy--Schwarz inequality and Assumption~\ref{ass:dml_estimation}~\eqref{ass:dml_estimation_product_conv}, we have 
\[
  |\RN{3}| \leq \sqrt{\frac{1}{N_k} \sum_{i \in I_k} \{g_P(W_i) - \widehat{g}_k(W_i)\}^2 \sum_{i \in I_k} \{m_P(W_i) - \widehat{m}_k(W_i)\}^2} = o_{\mathcal{P}}(N_k^{-1/2}).
\]
We can deal with the $\RN{4}$ and $\RN{5}$ by similar arguments as for the $\RN{2}$ term above, except for the $\RN{4}$ term we condition on $(W_i)_{i \in I_k}$ and $\widehat{g}_k$ and use
Assumption~\ref{ass:dml_estimation}~\eqref{ass:dml_upper_bounds} and \eqref{ass:dml_estimation_m_conv}. Combining the convergence of the terms $\RN{3}$, $\RN{4}$ and $\RN{5}$ with Lemma~\ref{lemma:uneven_crossfit_split} yields
\begin{equation}
  \label{eq:dml_kappa_decomposition}
  \frac{1}{K} \sum_{k=1}^K \widehat{\kappa}_k - \theta_P \widehat{J} = \frac{1}{n} \sum_{i=1}^n \varepsilon_i \xi_i + o_{\mathcal{P}}(n^{-1/2}).
\end{equation}
Next, combining \eqref{eq:dml_cfi_J_decomposition}, Chebyshev's inequality and Assumption~\ref{ass:dml_estimation}~\eqref{ass:dml_upper_bounds} yields that 
\[
  \widehat{J} = \mu_P + o_{\mathcal{P}}(1),
\]
hence Assumption~\ref{ass:dml_estimation}~\eqref{ass:dml_lower_bounds} and Lemma~\ref{lemma:o_P-O_P-calculus}~\eqref{lemma:o_P-O_P-calculus-4} yields
\begin{equation}
  \label{eq:dml_J_inv_decomposition}
  \widehat{J}^{-1} = \mu_P^{-1} + o_{\mathcal{P}}(1).
\end{equation}
Combining \eqref{eq:dml_J_inv_decomposition} with
\eqref{eq:dml_kappa_decomposition} yields that
\eqref{eq:dml_numerator_decomposition} is satisfied.

Defining $\nu_P := \E_P[\varepsilon^2 \xi^2]$, we now argue that 
\begin{equation}
  \label{eq:dml_sigma_decomposition}
  \widehat{\sigma}^2_{\theta} = \frac{\nu_P}{\mu_P^2} + o_{\mathcal{P}}(1).
\end{equation}
To that end, for all $k \in [K]$, we have
\begin{equation}
  \label{eq:nu_k_decomposition}
  \begin{aligned}
  \widehat{\nu}_k &= \underbrace{\frac{1}{N_k} \sum_{i\in I_k} \{Y_i -
  \widehat{g}_k(W_i)\}^2 \{L_i - \widehat{m}_k(W_i) \}^2}_{\widehat{\nu}_k^{(1)}} +  \widehat{\theta}^2 \underbrace{\frac{1}{N_k}\sum_{i \in I_k} \{L_i - \widehat{m}_k(W_i) \}^4}_{\widehat{\nu}_k^{(2)}}\\
  &- 2 \widehat{\theta} \underbrace{\frac{1}{N_k}\sum_{i \in I_k} \{Y_i - \widehat{g}_k(W_i)\} \{L_i - \widehat{m}_k(W_i) \}^3}_{\widehat{\nu}_k^{(3)}}.
  \end{aligned}
\end{equation}
Dealing first with $\widehat{\nu}_k^{(1)}$, we decompose further and write 
\begin{align*}
  &\widehat{\nu}_k^{(1)} = \frac{1}{N_k} \sum_{i\in I_k} \varepsilon_i^2 \xi_i^2 + \frac{1}{N_k} \sum_{i\in I_k} \theta_P^2 \xi_i^4 + 2\underbrace{\frac{1}{N_k} \sum_{i\in I_k} \theta_P \varepsilon_i\xi_i^3}_{\theta_P \cdot \widetilde{\RN{1}}} + \underbrace{\frac{1}{N_k} \sum_{i\in I_k} \{\theta_P \xi_i + \varepsilon_i\}^2 \{m_P(W_i) - \widehat{m}_k(W_i)\}^2}_{\widetilde{\RN{2}}_m} \\
  & +  2\underbrace{\frac{1}{N_k} \sum_{i\in I_k} \{\theta_P \xi_i + \varepsilon_i\}^2 \xi_i \{m_P(W_i) - \widehat{m}_k(W_i)\}}_{\widetilde{\RN{3}}_m} +  \underbrace{\frac{1}{N_k} \sum_{i\in I_k} \{g_P(W_i) - \widehat{g}_k(W_i)\}^2 \xi_i^2}_{\widetilde{\RN{2}}_g}\\
  &+  \underbrace{\frac{1}{N_k} \sum_{i\in I_k} \{g_P(W_i) - \widehat{g}_k(W_i)\}^2 \{m_P(W_i) - \widehat{m}_k(W_i)\}^2}_{\widetilde{\RN{4}}} +  2\underbrace{\frac{1}{N_k} \sum_{i\in I_k} \{g_P(W_i) - \widehat{g}_k(W_i)\}^2 \xi_i \{m_P(W_i) - \widehat{m}_k(W_i)\}}_{\widetilde{\RN{5}}_g}\\
  &+  2\underbrace{\frac{1}{N_k} \sum_{i\in I_k} \{\theta_P \xi_i + \varepsilon_i\} \{g_P(W_i) - \widehat{g}_k(W_i)\} \xi_i^2}_{\widetilde{\RN{3}}_g}\\
  &+  2\underbrace{\frac{1}{N_k} \sum_{i\in I_k} \{\theta_P \xi_i + \varepsilon_i\} \{g_P(W_i) - \widehat{g}_k(W_i)\} \{m_P(W_i) - \widehat{m}_k(W_i)\}^2}_{\widetilde{\RN{5}}_m}\\
  &+  4\underbrace{\frac{1}{N_k} \sum_{i\in I_k} \{\theta_P \xi_i + \varepsilon_i\} \{g_P(W_i) - \widehat{g}_k(W_i)\} \xi_i \{m_P(W_i) - \widehat{m}_k(W_i)\}}_{\widetilde{\RN{6}}}.
\end{align*}
Note first that 
\[
  \E_P\left[ \varepsilon \xi^3 \right] =  \E_P\left[ \E_P\left[ \varepsilon \given L, W \right] \xi^3 \right] = 0.
\]
Further, the Cauchy--Schwarz inequality yields
\[
  \sup_{P \in \mathcal{P}}\E_P\left[ \left|\varepsilon\xi^3\right|^{1+\delta/6} \right] \leq \sup_{P \in \mathcal{P}}\E_P\left[ \left|\varepsilon\xi\right|^{1+\delta/2} \xi^2 \right] \leq \sup_{P \in \mathcal{P}} \E_P\left[ \left|\varepsilon\xi\right|^{2+\delta} \right] \E_P\left[ \xi^4 \right] \leq C^2,
\]
hence by a uniform version of the law of large numbers \citep[][Lemma 19]{shah2020hardness} we conclude that $\widetilde{\RN{1}} =
o_{\mathcal{P}}(1)$. Combining this with \eqref{eq:theta_bound}, we
conclude that also $\theta_P \cdot \widetilde{\RN{1}} =
o_{\mathcal{P}}(1)$. As part of the argument for $\RN{4}$ and $\RN{5}$, we have already argued that $\widetilde{\RN{2}}_m = o_{\mathcal{P}}(1)$ and $\widetilde{\RN{2}}_g = o_{\mathcal{P}}(1)$. For any $M > 0$, we have by Markov's inequality, Assumption~\ref{ass:dml_estimation}~\eqref{ass:dml_upper_bounds} and \eqref{eq:theta_bound} that
\[
  \sup_{P \in \mathcal{P}} \mathbb{P}_P\left( \left| \frac{1}{N_k} \sum_{i \in I_k} \{\theta_P \xi_i + \varepsilon_i\}^2 \xi^2 \right| \geq M \right) \leq \sup_{P \in \mathcal{P}} \frac{\theta_P^2 \E_P\left[\xi^4 \right] + \E_P\left[\varepsilon^2 \xi^2 \right]}{M} \leq \frac{1}{M} \left(\frac{C^{(4+\delta)/(2+\delta)}}{c^2} + C^{2/(2+\delta)}\right),
\]
hence
\begin{equation}
  \label{eq:theta_xi_eps_O_P}
  \frac{1}{N_k} \sum_{i \in I_k} \{\theta_P \xi_i + \varepsilon_i\}^2 \xi_i^2 = O_{\mathcal{P}}(1).
\end{equation}
By the Cauchy--Schwarz inequality, \eqref{eq:theta_xi_eps_O_P}, the fact that $\widetilde{\RN{2}}_m = o_{\mathcal{P}}(1)$ and Lemmas~\ref{lemma:o_P-O_P-calculus} and \ref{lemma:o_P-continuous-transformation}, we have
\[
  |\widetilde{\RN{3}}_m| \leq \left(\frac{1}{N_k} \sum_{i \in I_k} \{\theta_P \xi_i + \varepsilon_i\}^2 \xi^2 \right)^{1/2} \left(\widetilde{\RN{2}}_m \right)^{1/2} = o_{\mathcal{P}}(1).
\]
An analogous argument can be used to show that $\widetilde{\RN{3}}_g = o_{\mathcal
P}(1)$. Using $\sum_{i} a_i^2 a_i^2 \leq \left(\sum_{i} a_i^2 \right)
\left( \sum_{i} a_i^2 \right)$ and
Assumption~\ref{ass:dml_estimation}~\eqref{ass:dml_estimation_product_conv}, we have
\[
  \widetilde{\RN{4}} \leq \frac{1}{N_k} \sum_{i\in I_k}  \{m_P(W_i) - \widehat{m}_k(W_i)\}^2 \sum_{i\in I_k} \{g_P(W_i) - \widehat{g}_k(W_i)\}^2 = o_{\mathcal{P}}(1).
\]
By the Cauchy--Schwarz inequality, our results above and Lemmas~\ref{lemma:o_P-O_P-calculus} and \ref{lemma:o_P-continuous-transformation}, we have
\[
  |\widetilde{\RN{5}}_g| \leq \left( \widetilde{\RN{2}}_g \right)^{1/2}  \left( \widetilde{\RN{4}} \right)^{1/2} = o_{\mathcal{P}}(1).
\]
An analogous argument can be applied to show that $\widetilde{\RN{5}}_m = o_{\mathcal{P}}(1)$. Finally, again by similar arguments as for the previous terms
\[
  |\widetilde{\RN{6}}| \leq \left(  \widetilde{\RN{2}}_g  \right)^{1/2} \left(  \widetilde{\RN{2}}_m  \right)^{1/2} = o_{\mathcal{P}}(1).
\]
Putting things together, we conclude that
\begin{equation}
  \label{eq:nu_k_decomposition_1}
  \widehat{\nu}_k^{(1)} =  \frac{1}{N_k} \sum_{i\in I_k} \varepsilon_i^2 \xi_i^2 + \frac{1}{N_k} \sum_{i\in I_k} \theta_P^2 \xi_i^4 + o_{\mathcal{P}}(1).
\end{equation}
Turning now to $\widehat{\nu}_k^{(2)}$, we decompose further and
write
\begin{align*}
  \widehat{\nu}_k^{(2)} &=  \frac{1}{N_k} \sum_{i\in I_k}  \xi_i^4 + 4\underbrace{\frac{1}{N_k} \sum_{i\in I_k}  \xi_i^3 \{m_P(W_i) - \widehat{m}_k(W_i)\}}_{\widetilde{\RN{7}}_m} + 6 \underbrace{\frac{1}{N_k} \sum_{i\in I_k}  \xi_i^2 \{m_P(W_i) - \widehat{m}_k(W_i)\}^2}_{\widetilde{\RN{8}}}\\
  &+ 4 \underbrace{\frac{1}{N_k} \sum_{i\in I_k}  \xi_i \{m_P(W_i) - \widehat{m}_k(W_i)\}^3}_{\widetilde{\RN{9}}} + \underbrace{\frac{1}{N_k} \sum_{i\in I_k}  \{m_P(W_i) - \widehat{m}_k(W_i)\}^4}_{\widetilde{\RN{10}}}
\end{align*}
Note that by similar arguments as used in \eqref{eq:theta_xi_eps_O_P}, we can show that $\frac{1}{N_k} \sum_{i \in I_k} \xi_i^4 = O_{\mathcal{P}}(1)$. By the Cauchy--Schwarz inequality and Lemmas~\ref{lemma:o_P-O_P-calculus} and \ref{lemma:o_P-continuous-transformation}, we have 
\[
  |\widetilde{\RN{7}}_m| \leq \left(\frac{1}{N_k} \sum_{i=1}^n \xi_i^4 \right)^{1/2} \left( \RN{2} \right)^{1/2} = o_{\mathcal{P}}(1),
\]
since we already showed above that $\RN{2} =
o_{\mathcal{P}}(1)$. We have already argued that $\widetilde{\RN{8}} = o_{\mathcal{P}}(1)$ as part of the proof for the $\RN{2}$ term. Assumption~\ref{ass:dml_estimation}~\eqref{ass:dml_estimation_m_conv_4} yields that $\widetilde{\RN{10}} = o_\mathcal{P}(1)$ and once more applying the Cauchy--Schwarz inequality yields
\[
  |\widetilde{\RN{9}}| \leq \left( \RN{2} \right)^{1/2} \left(\widetilde{\RN{9}} \right)^{1/2} = o_{\mathcal{P}}(1).
\]
In summary, we have that 
\begin{equation}
  \label{eq:nu_k_decomposition_2}
  \widehat{\nu}_k^{(2)} = \frac{1}{N_k} \sum_{i\in I_k}  \xi_i^4 + o_{\mathcal{P}}(1).
\end{equation}

To deal with $\widehat{\nu}_k^{(3)}$, we decompose further and write
\begin{align*}
  &\frac{1}{2}\widehat{\nu}_k^{(3)} =  \frac{1}{N_k} \sum_{i\in I_k}  \theta_P\xi_i^4 + \underbrace{\frac{1}{N_k} \sum_{i\in I_k}  \varepsilon_i \xi_i^3}_{\widetilde{\RN{1}}} + \underbrace{\frac{1}{N_k} \sum_{i\in I_k}  \{g_P(W_i) - \widehat{g}_k(W_i)\}\xi_i^3}_{\widetilde{\RN{7}}_g}\\
  &+ 3 \underbrace{\frac{1}{N_k} \sum_{i\in I_k}  \{\theta_P \xi_i + \varepsilon_i\} \xi_i^2 \{m_P(W_i)-\widehat{m}_k(W_i)\}}_{\widetilde{\RN{11}}} + 3 \underbrace{\frac{1}{N_k} \sum_{i\in I_k}  \{g_P(W_i) - \widehat{g}_k(W_i)\} \xi_i^2 \{m_P(W_i)-\widehat{m}_k(W_i)\}}_{\widetilde{\RN{12}}}\\
  &+ 3 \underbrace{\frac{1}{N_k} \sum_{i\in I_k}  \{\theta_P \xi_i + \varepsilon_i\} \xi_i \{m_P(W_i)-\widehat{m}_k(W_i)\}^2}_{\widetilde{\RN{13}}} + 3 \underbrace{\frac{1}{N_k} \sum_{i\in I_k}  \{g_P(W_i) - \widehat{g}_k(W_i)\} \xi_i \{m_P(W_i)-\widehat{m}_k(W_i)\}^2}_{\widetilde{\RN{14}}}\\
  &+  \underbrace{\frac{1}{N_k} \sum_{i\in I_k}  \{\theta_P \xi_i + \varepsilon_i\}  \{m_P(W_i)-\widehat{m}_k(W_i)\}^3}_{\widetilde{\RN{15}}} + \underbrace{\frac{1}{N_k} \sum_{i\in I_k}  \{g_P(W_i) - \widehat{g}_k(W_i)\} \{m_P(W_i)-\widehat{m}_k(W_i)\}^3}_{\widetilde{\RN{16}}}
\end{align*}
We have already argued that $\widetilde{\RN{1}} = o_\mathcal{P}(1)$
above and using similar arguments as for $\widetilde{\RN{7}}_m$, we
see that $\widetilde{\RN{7}}_g = o_{\mathcal{P}}(1)$. For the
remaining terms, we use the Cauchy--Schwarz inequality and the
previous results as follows
\begin{align*}
  &|\widetilde{\RN{11}}| \leq  \left(\frac{1}{N_k} \sum_{i \in I_k} \{\theta_P \xi_i + \varepsilon_i\}^2 \xi^2 \right)^{1/2} \left(\RN{2} \right)^{1/2} = o_\mathcal{P}(1),\\
  &|\widetilde{\RN{12}}| \leq  \left(\widetilde{\RN{2}}_g \right)^{1/2} \left(\RN{2} \right)^{1/2} = o_\mathcal{P}(1),\\
  &|\widetilde{\RN{13}}| \leq  \left(\frac{1}{N_k} \sum_{i \in I_k} \{\theta_P \xi_i + \varepsilon_i\}^2 \xi^2 \right)^{1/2} \left(\widetilde{\RN{10}} \right)^{1/2} = o_\mathcal{P}(1),\\
  &|\widetilde{\RN{14}}| \leq  \left(\widetilde{\RN{2}}_g \right)^{1/2} \left(\widetilde{\RN{10}} \right)^{1/2} = o_\mathcal{P}(1),\\
  &|\widetilde{\RN{15}}| \leq  \left(\widetilde{\RN{2}}_m \right)^{1/2} \left(\widetilde{\RN{10}} \right)^{1/2} = o_\mathcal{P}(1),\\
  &|\widetilde{\RN{16}}| \leq  \left(\widetilde{\RN{4}} \right)^{1/2} \left(\widetilde{\RN{10}} \right)^{1/2} = o_\mathcal{P}(1).
\end{align*}
Putting things together, we have that 
\begin{equation}
  \label{eq:nu_k_decomposition_3}
  \widehat{\nu}_k^{(3)} = 2 \frac{1}{N_k} \sum_{i\in I_k}  \theta_P\xi_i^4 + o_\mathcal{P}(1).
\end{equation}
Combining \eqref{eq:theta_bound} and \eqref{eq:dml_numerator_decomposition}, we have that 
\[
  \widehat{\theta} = \theta_P + o_{\mathcal{P}}(1)
\]
thus by \eqref{eq:theta_xi_eps_O_P}, \eqref{eq:nu_k_decomposition}, \eqref{eq:nu_k_decomposition_1}, \eqref{eq:nu_k_decomposition_2} and \eqref{eq:nu_k_decomposition_3}, we have 
\[
  \widehat{\nu}_k = \frac{1}{N_k} \sum_{i \in I_k} \varepsilon_i^2 \xi_i^2 + o_{\mathcal{P}}(1),
\]
and hence Lemma~\ref{lemma:uneven_crossfit_split} yields that 
\[
  \frac{1}{K} \sum_{k=1}^K \widehat{\nu}_k = \frac{1}{n} \sum_{i=1}^n \varepsilon_i^2 \xi_i^2 + o_{\mathcal{P}}(1).
\]
Applying Chebyshev's inequality, we obtain 
\[
  \frac{1}{K} \sum_{k=1}^K \widehat{\nu}_k = \nu_P + o_{\mathcal{P}}(1),
\]
so by Lemma~\ref{lemma:o_P-O_P-calculus}, \eqref{eq:dml_J_inv_decomposition} and Assumption~\ref{ass:dml_estimation}~\eqref{ass:dml_upper_bounds} and \eqref{ass:dml_lower_bounds}, \eqref{eq:dml_sigma_decomposition} holds.

Finally, combining \eqref{eq:dml_numerator_decomposition}, \eqref{eq:dml_sigma_decomposition} with Lemmas~\ref{lemma:o_P-continuous-transformation} and \ref{lemma:o_P-O_P-calculus} using Assumption~\ref{ass:dml_estimation}~\eqref{ass:dml_upper_bounds} and \eqref{ass:dml_lower_bounds}, we have that 
\[
  \frac{\sqrt{n}}{\widehat{\sigma}_{\theta}}(\widehat{\theta} - \theta_P) = \frac{\frac{1}{\sqrt{n}} \sum_{i=1}^n \varepsilon_i \xi_i}{\nu_P^{1/2}} + o_{\mathcal{P}}(1).
\]
The conclusion now follows from Assumption~\ref{ass:dml_estimation}~\eqref{ass:dml_upper_bounds} and \eqref{ass:dml_lower_bounds}, Lemma~\ref{lemma:self-normalised_clt}~\eqref{lemma:self-normalised_clt-3} and a uniform version of Slutsky's theorem \citep[][Lemma 20]{shah2020hardness}.

\end{proof}

\subsubsection{Auxiliary results}

\label{sec:aux_proofs}
\begin{lemma}
\label{lemma:uneven_crossfit_split}
Let $X$ be a random variable with distribution determined by
$\mathcal{P}$ and let $(X_n)_{n \in \mathbb{N}}$ be an i.i.d. sequence
of copies of $X$ with
$\sup_{P \in \mathcal{P}}\E_P[|X|] \leq C < \infty$. Let
$K \in \mathbb{N}$, let $I_1, \dots, I_K$ be a partition of
  $[n]$ and for all $k\in[K]$ define $N_k := |I_k|$ and assume that
  $\lfloor n/K \rfloor \leq N_k \leq \lceil n/K \rceil$. Suppose that there exist random variables
$Y_1, \dots, Y_k$ and $\delta \in [0, 1)$ such that for all
  $k\in [K]$ it holds that
\[
  Y_k = \frac{1}{N_k} \sum_{i \in I_k} X_i + o_{\mathcal{P}}(N_k^{-\delta}).
\]
Then 
\[
  \frac{1}{K} \sum_{k=1}^K Y_k = \frac{1}{n}\sum_{i=1}^n X_i + o_{\mathcal{P}}(n^{-\delta}).
\]
\begin{proof}
Without loss of generality we will assume that $\lceil n/K \rceil = N_1 \geq \dots \geq N_K = \lfloor n/K \rfloor$, since otherwise we may relabel the partition so that this is satisfied.

By assumption, we can write
\[
  Y_k = \frac{1}{N_k} \sum_{i \in I_k} X_i + R_k
\]
with $R_k = o_{\mathcal{P}}(N_k^{-\delta})$ hence
\[
  \frac{1}{K} \sum_{k=1}^K Y_k = \frac{1}{n} \sum_{i=1}^n X_i  + \underbrace{\sum_{k=1}^K  \left(\frac{1}{K N_k} - \frac{1}{n}\right) \sum_{i \in I_k} X_i}_{\RN{1}} + \underbrace{\frac{1}{K} \sum_{k=1}^K R_k}_{\RN{2}},
\]
and it suffices to show that $\RN{1} = o_\mathcal{P}(n^{-\delta})$ and $\RN{2} = o_\mathcal{P}(n^{-\delta})$.

For $\RN{1}$, we first note that by the triangle inequality and Hölder's inequality, we have
\[
  |\RN{1}| \leq \sum_{k=1}^K  \left|\frac{1}{K N_k} - \frac{1}{n}\right| \sum_{i \in I_k} |X_i| \leq \max_{k \in [K]} \left|\frac{n}{K N_k} - 1\right| \frac{1}{n} \sum_{i=1}^n |X_i|.
\]
By Markov's inequality and the assumption on $\E[|X|]$, we have for any $\epsilon > 0$ that
\[
  \sup_{P \in \mathcal{P}} \mathbb{P}_P(|\RN{1}| \geq n^{-\delta} \epsilon) \leq \frac{C}{\epsilon}\max_{k \in [K]} |n-K N_k| \frac{n^{\delta}}{K N_k} .
\]
By definition of $N_k$, we have 
\[
  \frac{n}{K} - 1 \leq N_k \leq n/K + 1 \Longrightarrow  n - K \leq K N_k \leq n + K \Longrightarrow |n-N_k K| \leq K,
\]
thus, for $n \geq K+1$,
\[
  \sup_{P \in \mathcal{P}} \mathbb{P}_P(|\RN{1}| \geq n^{-\delta} \epsilon) \leq \frac{C K}{\epsilon}  \frac{n^{\delta}}{n-K}. 
\]
As the upper bound goes to $0$ as $n \to \infty$ by our assumption on $\delta$, we have $\RN{1} = o_{\mathcal{P}}(n^{-\delta})$ as desired.

For $\RN{2}$, by the triangle inequality, the union bound and using that $n \leq 2K N_k$ for $n \geq K^2$, we have 
\begin{align*}
  \sup_{P \in \mathcal{P}} \mathbb{P}_P\left( \left| \RN{2} \right| \geq n^{-\delta} \epsilon \right) \leq \sum_{k=1}^K \sup_{P \in \mathcal{P}}  \mathbb{P}_P (|R_k| \geq n^{-\delta} \epsilon) \leq \sum_{k=1}^K \sup_{P \in \mathcal{P}}  \mathbb{P}_P (|R_k| \geq (2K)^{-\delta} N_k^{-\delta}\epsilon).
\end{align*} 
We conclude that $\RN{2} = o_{\mathcal{P}}(n^{-\delta})$ by the assumption that $R_k = o_{\mathcal{P}}(N_k^{-\delta})$.
\end{proof}
\end{lemma}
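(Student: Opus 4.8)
The plan is to peel off the stated fold-wise remainders and then control separately the error caused by the folds having slightly unequal sizes. Writing $R_k := Y_k - \frac{1}{N_k}\sum_{i\in I_k}X_i = o_{\mathcal{P}}(N_k^{-\delta})$ and using that $I_1,\dots,I_K$ partition $[n]$, I would decompose
\[
  \frac{1}{K}\sum_{k=1}^K Y_k - \frac{1}{n}\sum_{i=1}^n X_i = \underbrace{\sum_{k=1}^K\Big(\frac{1}{KN_k}-\frac{1}{n}\Big)\sum_{i\in I_k}X_i}_{\RN{1}} + \underbrace{\frac{1}{K}\sum_{k=1}^K R_k}_{\RN{2}},
\]
so that it suffices to prove $\RN{1}=o_{\mathcal{P}}(n^{-\delta})$ and $\RN{2}=o_{\mathcal{P}}(n^{-\delta})$ individually.

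For the reweighting term $\RN{1}$, the key observation is that the coefficients are uniformly small because the fold sizes are nearly balanced. Since $\lfloor n/K\rfloor \le N_k \le \lceil n/K\rceil$ forces $|n - KN_k|\le K$, I can bound $|\RN{1}| \le \max_{k}\big|\frac{n}{KN_k}-1\big|\,\frac{1}{n}\sum_{i=1}^n|X_i|$, where the prefactor is deterministic and of order $K/(n-K)$. Because $\frac{1}{n}\sum_i|X_i|$ has expectation at most $C$ uniformly over $\mathcal{P}$, Markov's inequality gives $\sup_{P}\mathbb{P}_P(|\RN{1}|\ge n^{-\delta}\epsilon) \le \frac{CK}{\epsilon}\frac{n^{\delta}}{n-K}$, which tends to $0$ precisely because $\delta<1$. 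This is exactly where the hypothesis $\delta\in[0,1)$ enters.

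For the remainder term $\RN{2}$, I would use $|\RN{2}|\le \max_k|R_k|$ together with a union bound over the finitely many folds, and then convert the target rate $n^{-\delta}$ into the fold-specific rate $N_k^{-\delta}$ under which each $R_k$ is controlled. The balanced-fold assumption again supplies $n \le 2KN_k$ for all large $n$, hence $n^{-\delta}\ge (2K)^{-\delta}N_k^{-\delta}$, so each event $\{|R_k|\ge n^{-\delta}\epsilon\}$ is contained in $\{|R_k|\ge (2K)^{-\delta}N_k^{-\delta}\epsilon\}$. Applying $R_k=o_{\mathcal{P}}(N_k^{-\delta})$ with the rescaled constant $(2K)^{-\delta}\epsilon$ and summing the $K$ uniform bounds then yields $\RN{2}=o_{\mathcal{P}}(n^{-\delta})$.

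The individual estimates are routine; the main thing to get right is the bookkeeping of the rate conversions, namely keeping the inequalities between $n^{-\delta}$ and $N_k^{-\delta}$ pointing in the correct direction. This requires the ratio $n/N_k$ to be uniformly bounded, which the near-balance of the folds provides, and it requires noting that $\RN{1}$ decays strictly faster than $n^{-\delta}$ only because $\delta<1$. Once these two points are handled, combining the two bounds closes the argument.
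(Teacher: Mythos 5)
Your proposal is correct and follows essentially the same route as the paper's own proof: the identical decomposition into the reweighting term and the averaged remainder, the same bound $|n-KN_k|\le K$ combined with Markov's inequality for the first term, and the same union bound with the rate conversion $n\le 2KN_k$ for the second. No gaps.
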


\begin{lemma}
  \label{lemma:o_P-O_P-calculus}
Let $(X_n)_{n \in \mathbb{N}}$ and $(Y_n)_{n \in \mathbb{N}}$ be sequences of real-valued random variables and let $(a_n)_{n \in \mathbb{N}}$ and $(b_n)_{n \in \mathbb{N}}$ be sequences of non-negative real numbers.
\begin{enumerate}[(a)]
  \item \label{lemma:o_P-O_P-calculus-1} If $X_n = o_{\mathcal{P}}(a_n)$ then $X_n  = O_{\mathcal{P}}(a_n)$.
  \item \label{lemma:o_P-O_P-calculus-2} If $X_n = o_{\mathcal{P}}(a_n)$ and $Y_n = O_{\mathcal{P}}(b_n)$ then $X_n Y_n = o_{\mathcal{P}}(a_n b_n)$.
  \item \label{lemma:o_P-O_P-calculus-3} If $X_n = o_{\mathcal{P}}(a_n)$ and $Y_n = O_{\mathcal{P}}(a_n)$ then $X_n + Y_n = O_{\mathcal{P}}(a_n)$.
  \item \label{lemma:o_P-O_P-calculus-4} If $X_n = Y_n + o_{\mathcal{P}}(1)$ and $Y_n^{-1} = O_{\mathcal{P}}(1)$, then $X_n^{-1}=Y_n^{-1} + o_{\mathcal{P}}(1)$.
\end{enumerate}
\begin{proof}
\begin{enumerate}[(a)]
  \item  Given $\epsilon > 0$, we can set $M_\epsilon = 1$ and, since $X_n = o_{\mathcal{P}}(a_n)$, we have the existence of $N_\epsilon$ such that
  \[
    \sup_{n \geq N_\epsilon} \sup_{P \in \mathcal{P}} \mathbb{P}_P(|X_n| \geq a_n) < \epsilon.
  \]
  \item Given $\epsilon > 0$, we have for any $M > 0$ that 
  \[
    \sup_{P \in \mathcal{P}} \mathbb{P}_P(|X_n Y_n| \geq a_n b_n \epsilon) \leq  \sup_{P \in \mathcal{P}} \mathbb{P}_P(|X_n| \geq a_n\epsilon/M) +  \sup_{P \in \mathcal{P}} \mathbb{P}_P(|Y_n| \geq b_n M).
  \]
  Choosing $N$ and $M$ large enough, by assumption on $X_n$ and $Y_n$, we can make each of the two terms smaller than $\epsilon/2$ hence proving the desired result.
  \item For any $M > 0$, we have that
  \[
    \sup_{P \in \mathcal{P}} \mathbb{P}_P(|X_n + Y_n| \geq a_n M) \leq \sup_{P \in \mathcal{P}} \mathbb{P}_P(|X_n| \geq a_n M /2) + \sup_{P \in \mathcal{P}} \mathbb{P}_P(|Y_n| \geq a_n M /2).
  \]
  Since $X_n = o_{\mathcal{P}}(a_n)$, the first term goes to $0$ for any $M > 0$ and since $Y_n = O_{\mathcal{P}}(a_n)$, we can make the second term arbitrarily small by choosing $M$ large enough. This proves the desired result.  
  \item Let $R_n := X_n - Y_n$ so that $R_n = o_{\mathcal{P}}(1)$. We have
  \[
    X_n^{-1}-Y_n^{-1} = R_n Y_n^{-1} (Y_n + R_n)^{-1},
  \]
  so we are done by the previous claims if we can show that $(Y_n + R_n)^{-1} = O_\mathcal{P}(1)$. For any $M, K, \zeta > 0$ we have 
  \begin{align*}
    \sup_{P \in \mathcal{P}} \mathbb{P}_P(|Y_n + R_n|^{-1} \geq M) &\leq \underbrace{\sup_{P \in \mathcal{P}} \mathbb{P}_P(|Y_n + R_n|^{-1} \geq M, |R_n| < \zeta, |Y_n|^{-1} < K)}_{\RN{1}}\\
    &+ \underbrace{\sup_{P \in \mathcal{P}} \mathbb{P}_P(|Y_n|^{-1} \geq K)}_{\RN{2}} + \underbrace{\sup_{P \in \mathcal{P}} \mathbb{P}_P(|R_n| \geq \zeta)}_{\RN{3}}.
  \end{align*}
  Given $\epsilon > 0$, we can make $\RN{2} < \epsilon/2$ by choosing $K$ sufficiently large. Letting $\zeta := K^{-1}/2$, we can make $\RN{3} < \epsilon/2$ for $n$ large enough. Letting $M := K^{-1}/2$, $\RN{1} = 0$, since
  \[
    |Y_n + R_n| \geq |Y_n| - |R_n| < K^{-1}/2 \Longrightarrow |Y_n + R_n|^{-1} < K^{-1}/2 = M.
  \]
  As $\epsilon$ was arbitrary, this proves the desired result.
\end{enumerate}
\end{proof}
\end{lemma}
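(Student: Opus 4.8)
The plan is to work directly from the definitions of $o_{\mathcal{P}}$ and $O_{\mathcal{P}}$ stated above, reducing each of the four claims to a union bound on the relevant events followed by a careful ordering of the quantifiers over $n$, over $P\in\mathcal{P}$, and over the constants appearing in the two definitions. The guiding observation is that both notions are uniform over $\mathcal{P}$ through a supremum inside the probability statement, so every inequality I derive must survive $\sup_{P\in\mathcal{P}}$; since union bounds and Markov-type inequalities are preserved under this supremum, no new difficulty arises from the uniformity beyond bookkeeping. Part (a) is immediate: given $\epsilon>0$, I would take the constant $M_\epsilon=1$ in the definition of $O_{\mathcal{P}}$ and apply the definition of $o_{\mathcal{P}}(a_n)$ with threshold $1$ to produce $N_\epsilon$ with $\sup_P\mathbb{P}_P(|X_n|\geq a_n)<\epsilon$ for all $n\geq N_\epsilon$, which is exactly the required $O_{\mathcal{P}}$ bound.

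Parts (b) and (c) share a single union bound. For (b) I would use the inclusion $\{|X_nY_n|\geq a_nb_n\epsilon\}\subseteq\{|X_n|\geq a_n\epsilon/M\}\cup\{|Y_n|\geq b_nM\}$, valid for every $M>0$, so that $\sup_P\mathbb{P}_P(|X_nY_n|\geq a_nb_n\epsilon)$ is bounded by the sum of the two corresponding suprema; I would first fix $M$ large using $Y_n=O_{\mathcal{P}}(b_n)$ to control the second term, and only then send $n\to\infty$, invoking $X_n=o_{\mathcal{P}}(a_n)$ with the now-fixed threshold $\epsilon/M$ to annihilate the first term. For (c) I would use the analogous inclusion $\{|X_n+Y_n|\geq a_nM\}\subseteq\{|X_n|\geq a_nM/2\}\cup\{|Y_n|\geq a_nM/2\}$; here the first term vanishes in the limit for \emph{any} fixed $M$ because $X_n=o_{\mathcal{P}}(a_n)$, while $M$ is taken large to make the second term small via $Y_n=O_{\mathcal{P}}(a_n)$. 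The only care needed in both cases is to fix the $O_{\mathcal{P}}$-constant before letting $n$ grow.

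Part (d) is where the real work lies and is the step I expect to be the main obstacle. Writing $R_n:=X_n-Y_n=o_{\mathcal{P}}(1)$, the algebraic identity $X_n^{-1}-Y_n^{-1}=-R_n\,Y_n^{-1}\,(Y_n+R_n)^{-1}$ reduces the claim, through the product rule in part (b), to showing $(Y_n+R_n)^{-1}=X_n^{-1}=O_{\mathcal{P}}(1)$: once this is in hand, the right-hand side is a product of an $o_{\mathcal{P}}(1)$ factor with two $O_{\mathcal{P}}(1)$ factors and is therefore $o_{\mathcal{P}}(1)$ by (b). To establish $X_n^{-1}=O_{\mathcal{P}}(1)$ I would, for constants $M,K,\zeta>0$, decompose $\{|Y_n+R_n|^{-1}\geq M\}$ according to whether $|R_n|<\zeta$ and $|Y_n|^{-1}<K$, producing three pieces: a main event, together with $\{|Y_n|^{-1}\geq K\}$ and $\{|R_n|\geq\zeta\}$. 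On the main event the reverse triangle inequality gives $|Y_n+R_n|\geq|Y_n|-|R_n|>1/K-\zeta$, so the choice $\zeta=1/(2K)$ forces $|Y_n+R_n|^{-1}<2K$ and the choice $M=2K$ then renders the main event empty. Given $\epsilon>0$, I would pick $K$ large enough that $\sup_P\mathbb{P}_P(|Y_n|^{-1}\geq K)<\epsilon/2$ for large $n$ (using $Y_n^{-1}=O_{\mathcal{P}}(1)$), and with $\zeta=1/(2K)$ now fixed, use $R_n=o_{\mathcal{P}}(1)$ to make $\sup_P\mathbb{P}_P(|R_n|\geq\zeta)<\epsilon/2$ for large $n$. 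The delicate point, and the reason this is the hard step, is that $\zeta$ must be tied to $K$, so the two tail bounds cannot be handled independently: one must fix $K$ (hence $\zeta$ and $M$) first and only afterwards exploit the convergence of $R_n$.
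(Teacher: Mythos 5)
Your proposal is correct and follows essentially the same route as the paper's proof: the identical union bounds for (a)--(c) with the same quantifier ordering, and for (d) the same algebraic identity plus the three-event decomposition with $\zeta$ tied to $K$. If anything, your version is slightly cleaner -- your sign in the identity $X_n^{-1}-Y_n^{-1}=-R_nY_n^{-1}(Y_n+R_n)^{-1}$ and your choice $M=2K$ (rather than the paper's $M=K^{-1}/2$) are the correct ones; the paper's displayed implication there contains a typo.
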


\begin{lemma} \label{lemma:o_P-continuous-transformation}
Let $(X_n)_{n \in \mathbb{N}}$ and $(Y_n)_{n \in \mathbb{N}}$ be sequences of random variables taking values in the closed set $\mathcal{X} \subseteq \mathbb{R}^d$. Suppose that $X_n = Y_n + o_{\mathcal{P}}(1)$ and let $h:\mathcal{X} \rightarrow \mathbb{R}$ be a continuous function. Suppose that at least one of the following conditions hold:
\begin{enumerate}[(a)]
  \item $h$ is uniformly continuous,
  \item $Y_n = O_{\mathcal{P}}(1)$.
\end{enumerate}
Then $h(X_n) = h(Y_n) + o_\mathcal{P}(1)$.
\end{lemma}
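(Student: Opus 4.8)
The plan is to prove a uniform version of the continuous mapping theorem by reducing both cases to a single observation: under (uniform) continuity the event $\{|h(X_n) - h(Y_n)| \ge \epsilon\}$ forces $X_n$ and $Y_n$ to be far apart, and $X_n = Y_n + o_{\mathcal{P}}(1)$ controls exactly such a ``far apart'' event uniformly over $\mathcal{P}$. Throughout I would fix an arbitrary $\epsilon > 0$ and aim to show $\sup_{P \in \mathcal{P}} \mathbb{P}_P(|h(X_n) - h(Y_n)| \ge \epsilon) \to 0$, which is precisely the assertion $h(X_n) = h(Y_n) + o_{\mathcal{P}}(1)$.

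For case (a) I would invoke uniform continuity to obtain a single $\delta > 0$ with $|x - y| < \delta \implies |h(x) - h(y)| < \epsilon$ for all $x, y \in \mathcal{X}$. Taking the contrapositive gives the event inclusion $\{|h(X_n) - h(Y_n)| \ge \epsilon\} \subseteq \{|X_n - Y_n| \ge \delta\}$, whence $\sup_P \mathbb{P}_P(|h(X_n) - h(Y_n)| \ge \epsilon) \le \sup_P \mathbb{P}_P(|X_n - Y_n| \ge \delta) \to 0$ by the $o_{\mathcal{P}}(1)$ hypothesis. This case is essentially immediate.

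For case (b), where $h$ is only continuous, I would first use $Y_n = O_{\mathcal{P}}(1)$ to confine $Y_n$ to a ball: given a target level $\zeta > 0$, choose $M > 0$ and $N_0$ so that $\sup_P \mathbb{P}_P(|Y_n| \ge M) < \zeta/2$ for $n \ge N_0$. The key step is then to upgrade continuity to uniform continuity on the compact set $K := \mathcal{X} \cap \overline{B(0, M+1)}$, which is compact because $\mathcal{X}$ is closed and the closed ball is bounded; Heine--Cantor then supplies $\delta \in (0, 1)$ working uniformly on $K$. On the event $\{|Y_n| < M\} \cap \{|X_n - Y_n| < \delta\}$ both $Y_n$ and $X_n$ lie in $K$ (the buffer $M+1$ together with $\delta < 1$ keeps $X_n$ inside), so $|h(X_n) - h(Y_n)| < \epsilon$. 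This yields $\{|h(X_n) - h(Y_n)| \ge \epsilon\} \subseteq \{|Y_n| \ge M\} \cup \{|X_n - Y_n| \ge \delta\}$, and hence, taking $n$ large enough that $\sup_P \mathbb{P}_P(|X_n - Y_n| \ge \delta) < \zeta/2$, one gets $\sup_P \mathbb{P}_P(|h(X_n) - h(Y_n)| \ge \epsilon) < \zeta$. Letting $\zeta \downarrow 0$ finishes.

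The main obstacle is the non-compactness of $\mathcal{X}$ in case (b): plain continuity provides no single modulus $\delta$ valid on all of $\mathcal{X}$, so I must carve out a compact piece on which to apply Heine--Cantor. Tightness ($O_{\mathcal{P}}(1)$) supplies exactly the confinement of $Y_n$ needed, and the only genuine care required is the buffer (radius $M+1$, step size $\delta < 1$) ensuring that $X_n$, which may stray just outside the ball containing $Y_n$, still lands in $K$ so that the uniform-continuity estimate applies to the pair $(X_n, Y_n)$.
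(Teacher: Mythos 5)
Your proposal is correct and follows essentially the same route as the paper's proof: case (a) via the contrapositive event inclusion from uniform continuity, and case (b) by using tightness of $Y_n$ to confine the argument to a compact subset of the closed set $\mathcal{X}$, applying Heine--Cantor there, and decomposing the bad event into $\{\|X_n-Y_n\|\ge\delta\}$ and an escape-from-the-ball event. Your explicit buffer (radius $M+1$ with $\delta<1$) is a slightly more careful bookkeeping of the same step the paper handles by bounding $\mathbb{P}_P(\|Y_n\|>M-\delta)$.
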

\begin{proof}
Let $\epsilon > 0$ be given. We need to show that 
  \[
    \lim_{n \to \infty} \sup_{P \in \mathcal{P}} \mathbb{P}_P(|h(X_n)-h(Y_n)| > \epsilon) = 0.
  \]
If $h$ is uniformly continuous, then we can find $\delta >0$ such that $|h(x) - h(y)| \leq \epsilon$ whenever $\|x-y\| \leq \delta$.  Thus,
\[
\sup_{P \in \mathcal{P}} \mathbb{P}_P(|h(X_n)-h(Y_n)| > \epsilon) \leq \sup_{P \in \mathcal{P}} \mathbb{P}_P(\|X_n-Y_n\| > \delta) \rightarrow 0
\]
as $n \to \infty$.   On the other hand, suppose now that $Y_n = O_\mathcal{P}(1)$ and let $M > 0$. Since $h$ is continuous, it is uniformly continuous on $B(0, M):= \{x \in \mathcal{X} : \|x \| \leq M \}$, so we can choose $\delta > 0$ such that  $|h(x) - h(y)| \leq \epsilon$ whenever $x,y \in B(0, M)$ satisfy $\|x-y\| \leq \delta$.  Hence, for $M > \delta$,
  \begin{align*}
  \sup_{P \in \mathcal{P}} \mathbb{P}_P(|h(X_n)-h(Y_n)| > \epsilon) &\leq \sup_{P \in \mathcal{P}} \mathbb{P}_P(\|X_n-Y_n\| > \delta)\\
  &\hspace{1cm}+ \sup_{P \in \mathcal{P}} \mathbb{P}_P(\|X_n\| \vee \|Y_n\| > M,\|X_n-Y_n\| \leq \delta) \\
  &\leq \sup_{P \in \mathcal{P}} \mathbb{P}_P(\|X_n-Y_n\| > \delta) + \sup_{P \in \mathcal{P}} \mathbb{P}_P(\|Y_n\| > M-\delta) \rightarrow 0
  \end{align*}
as $n, M \to \infty$.
\end{proof}

\begin{lemma}
  \label{lemma:self-normalised_clt}
  Let $(X_n)_{n \in \mathbb{N}}$ be a sequence of i.i.d. copies of a random variable $X$ taking values in $\mathbb{R}$. Suppose there exist constants $c, C, \delta > 0$ such that for all $P \in \mathcal{P}$ $\var_P(X) \geq c$ and $\E_P\left[|X|^{2+\delta}\right] \leq C$. Then
  \begin{enumerate}[(a)]
    \item  \label{lemma:self-normalised_clt-1}
    \[
      T := \frac{\frac{1}{\sqrt{n}}\sum_{i=1}^n (X_i - \E_P[X])}{\left( \frac{1}{n} \sum_{i=1}^n X_i^2 - \left(\frac{1}{n} \sum_{i=1}^n X_i \right)^2 \right)^{1/2}} \quad \text{satisfies} \quad \lim_{n \to \infty} \sup_{P \in \mathcal{P}} \sup_{t \in \mathbb{R}} |\mathbb{P}_P(T \leq t) - \Phi(t) | = 0
    \]
    \item \label{lemma:self-normalised_clt-3}
      \[
        T := \widetilde{T} := \frac{\frac{1}{\sqrt{n}}\sum_{i=1}^n (X_i - \E_P[X])}{\left( \var_P(X) \right)^{1/2}} \quad \text{satisfies} \quad \lim_{n \to \infty} \sup_{P \in \mathcal{P}} \sup_{t \in \mathbb{R}} |\mathbb{P}_P(\widetilde{T} \leq t) - \Phi(t) | = 0
      \]
  \end{enumerate}

\begin{proof}
Defining $\mu_P:= \E_P[X]$, we have that
\[
  \frac{1}{n} \sum_{i=1}^n X_i = \mu_P + o_\mathcal{P}(1)
\]
by the uniform law of large numbers in \citet[][Lemma 19]{shah2020hardness} and the assumption that $\E_P\left[|X|^{2+\delta}\right] \leq C$. The same argument reveals that, defining $\nu_P := \E_P\left[ X^2 \right]$,
\[
  \frac{1}{n} \sum_{i=1}^n X_i^2 = \nu_P + o_\mathcal{P}(1).
\]
We conclude by Lemma~\ref{lemma:o_P-O_P-calculus} and Lemma~\ref{lemma:o_P-continuous-transformation} that 
\[
  \left( \frac{1}{n} \sum_{i=1}^n X_i^2 - \left(\frac{1}{n} \sum_{i=1}^n X_i \right)^2 \right)^{1/2} = \sigma + o_{\mathcal{P}}(1).
\]
For any $M > 0$ and $P \in \mathcal{P}$, by Markov's inequality, Jensen's inequality and the fact that $\E_P\left[|X|^{2+\delta}\right] \leq C$, 
\[
  \mathbb{P}_P\left(\left| \frac{1}{\sqrt{n}}\sum_{i=1}^n (X_i - \E_P[X]) \right| \geq M \right) \leq \frac{1}{M^2} \E_P\left(X^2 \right) \leq \frac{1}{M^2} C^{2/(2+\delta)},
\]
hence the numerator of $T$ is $O_{\mathcal{P}}(1)$. Combining the facts above, by Lemma~\ref{lemma:o_P-O_P-calculus} since $\sigma \geq c^{1/2}$, we have that
\[
  T = \frac{1}{\sqrt{n}\sigma}\sum_{i=1}^n (X_i - \E_P[X]) + o_{\mathcal{P}}(1).
\]
The result now follows by a uniform CLT and version of Slutsky's theorem in \citet[][Lemmas 18 and 20]{shah2020hardness}.

Similar arguments can be applied to show the result for $\widetilde{T}$.
\end{proof}
\end{lemma}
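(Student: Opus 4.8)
The plan is to reduce both parts to the uniform central limit theorem of \citet[][Lemma 18]{shah2020hardness} by showing that the self-normalising denominator in (a) can be replaced by the population standard deviation up to a uniform $o_{\mathcal{P}}(1)$ error. Throughout I write $\mu_P := \E_P[X]$, $\nu_P := \E_P[X^2]$ and $\sigma_P := (\var_P(X))^{1/2}$, so that $\sigma_P^2 = \nu_P - \mu_P^2 \geq c$ uniformly over $\mathcal{P}$, while $|\mu_P| \leq \E_P[|X|] \leq C^{1/(2+\delta)}$ and $\nu_P \leq C^{2/(2+\delta)}$ by Jensen's inequality, so both $\mu_P$ and $\sigma_P$ are also uniformly bounded above.

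First I would treat part (b). The statistic $\widetilde{T}$ is exactly the standardised mean $\sigma_P^{-1} n^{-1/2} \sum_{i=1}^n (X_i - \mu_P)$, so it suffices to verify the hypotheses of the uniform CLT. The required Lyapunov-type condition is that the ratio $\E_P[|X - \mu_P|^{2+\delta}] / (\sigma_P^{2+\delta} n^{\delta/2})$ tends to $0$ uniformly over $P \in \mathcal{P}$. The numerator is uniformly bounded: by the $c_r$-inequality $\E_P[|X - \mu_P|^{2+\delta}] \leq 2^{1+\delta}(\E_P[|X|^{2+\delta}] + |\mu_P|^{2+\delta}) \leq 2^{2+\delta} C$, while the denominator is bounded below by $c^{(2+\delta)/2} n^{\delta/2}$. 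Hence the ratio is $O(n^{-\delta/2})$ uniformly and part (b) follows directly from \citet[][Lemma 18]{shah2020hardness}.

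For part (a) I would show that the empirical denominator converges to $\sigma_P$ uniformly and then transfer the conclusion of (b). By the uniform law of large numbers \citep[][Lemma 19]{shah2020hardness}, which applies since $\E_P[|X|^{2+\delta}] \leq C$ controls the first and second moments uniformly, one has $\frac{1}{n} \sum_{i=1}^n X_i = \mu_P + o_{\mathcal{P}}(1)$ and $\frac{1}{n} \sum_{i=1}^n X_i^2 = \nu_P + o_{\mathcal{P}}(1)$. Applying Lemma~\ref{lemma:o_P-continuous-transformation} to the continuous map $(a, b) \mapsto \sqrt{b - a^2}$, with the uniformly bounded deterministic target $(\mu_P, \nu_P)$ in the role of the $O_{\mathcal{P}}(1)$ sequence, yields that the empirical standard deviation $\widehat{\sigma}_n$ satisfies $\widehat{\sigma}_n = \sigma_P + o_{\mathcal{P}}(1)$. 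Writing $T = \widetilde{T} \cdot \sigma_P / \widehat{\sigma}_n$, the lower bound $\sigma_P \geq c^{1/2}$ together with Lemma~\ref{lemma:o_P-O_P-calculus} gives $\sigma_P / \widehat{\sigma}_n = 1 + o_{\mathcal{P}}(1)$; since the numerator of $\widetilde{T}$ is $O_{\mathcal{P}}(1)$ by Markov's inequality (using $\E_P[X^2] \leq C^{2/(2+\delta)}$), the same lemma gives $T = \widetilde{T} + o_{\mathcal{P}}(1)$.

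The conclusion in both parts then follows from a uniform version of Slutsky's theorem \citep[][Lemma 20]{shah2020hardness} combined with the uniform CLT, using that the limiting law $\Phi$ is continuous so that uniform convergence of the distribution functions in the Kolmogorov metric is preserved under the $o_{\mathcal{P}}(1)$ perturbation. I expect the main obstacle to be the self-normalisation step in part (a): every estimate — the convergence of $\widehat{\sigma}_n$, the ratio $\sigma_P / \widehat{\sigma}_n$, and the final Slutsky argument — must hold in the uniform-over-$\mathcal{P}$ sense, and this is precisely where the uniform variance lower bound $\sigma_P^2 \geq c$ and the uniform moment bound $\E_P[|X|^{2+\delta}] \leq C$ are indispensable, since without them neither $1/\widehat{\sigma}_n$ could be controlled nor the Lyapunov ratio forced to vanish uniformly.
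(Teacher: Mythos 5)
Your proof is correct and takes essentially the same route as the paper's: a uniform law of large numbers for the empirical first and second moments, the $o_{\mathcal{P}}/O_{\mathcal{P}}$ calculus lemmas (Lemmas~\ref{lemma:o_P-O_P-calculus} and \ref{lemma:o_P-continuous-transformation}) to replace the self-normalising denominator by $\sigma_P$, and the uniform CLT and Slutsky results of \citet[][Lemmas 18 and 20]{shah2020hardness}. The only differences are organisational --- you prove (b) first and reduce (a) to it via the ratio $\sigma_P/\widehat{\sigma}_n$, whereas the paper handles (a) directly and dismisses (b) as similar, and you spell out the Lyapunov-moment verification that the paper leaves implicit.
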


\begin{lemma}
  \label{lemma:integration_by_parts}
  Let $(L, W)$ denote random variables supported on $\mathcal{S} \subseteq
  \mathbb{R} \times \mathcal{W}$ and let $P_W$ denote the distribution of $W$.
  Suppose that the conditional distribution of $L$ given $W=w$ is absolutely
  continuous with respect to Lebesgue measure with conditional density $p(\ell
  \given w)$.  Let $g: \mathcal{S} \to \mathbb{R}$ denote a function. Assume
  that for $P_W$-almost every$w \in
  \mathcal{W}$:
  \begin{enumerate}[(a)]
    \item $\partial_\ell g(\ell, w)$ exists,
    \item $\partial_\ell p(\ell \given w)$ exists,
    \item $\E\left[ | \partial_\ell g(L, W) + \rho(L, W)g(L, W) | \right] < \infty$, where $\rho(\ell, w) := \partial_\ell \log(p(\ell \given w))$,
    \item $\lim_{t \to \infty} g(t, w)p(t \given w) -  g(-t, w)p(-t \given w) = 0$.
  \end{enumerate}
  Then 
  \[
    \E\left[\partial_\ell g(L, W) + \rho(L, W)g(L, W)  \given W \right] = 0,
  \]
  and thus integration by parts is valid, that is, 
  \[
    \E\left[\partial_\ell g(L, W)\right] = - \E\left[ \rho(L, W)g(L, W)\right].
  \]
  \begin{proof}
    The following is an adaptation of \citet[][Proposition 1]{klyne2023average}
(the primary change is the slight rephrasing of assumption~(d)). We prove the
result by showing that for $P_W$-almost all $w \in \mathcal{W}$, we have
    \[
      \E\left[\partial_\ell g(L, W) + \rho(L, W)g(L, W)  \given W = w \right] = 0.
    \]
    Note that assumption~(c) and Fubini's theorem \citep[][Corollary 14.9]{schilling2017measures} implies that 
    \[
      \ell \mapsto \partial_\ell g(\ell, w) p(\ell \given w) + \rho(\ell, w)g(\ell, w) p(\ell \given w) = \partial_\ell (g(\ell, w) p(\ell \given w))
    \]
    is integrable with respect to Lebesgue measure for $P_W$ almost all $w$. Hence, 
    \begin{align*}
      &\E\left[\partial_\ell g(L, W) + \rho(L, W)g(L, W)  \given W = w \right] = \int  \partial_\ell (g(\ell, w) p(\ell \given w)) \, \mathrm{d}\ell\\
      &= \lim_{t \to \infty} \int_{-t}^t \partial_\ell (g(\ell, w) p(\ell \given w)) \, \mathrm{d}\ell = \lim_{t \to \infty}  g(t, w)p(t \given w) -  g(-t, w)p(-t \given w) = 0,
    \end{align*}
    by dominated convergence \citep[][Theorem 12.2]{schilling2017measures}, the fundamental theorem of calculus and (iv). This proves the desired result.
  \end{proof}
\end{lemma}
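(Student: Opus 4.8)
The plan is to establish the conditional identity for $P_W$-almost every fixed $w$ and then integrate over $W$ by the tower property. The key algebraic observation is that, since by definition $\rho(\ell, w) = \partial_\ell p(\ell \given w) / p(\ell \given w)$, assumptions~(a) and~(b) let me apply the product rule to obtain
\[
  \partial_\ell\bigl(g(\ell, w) p(\ell \given w)\bigr) = \partial_\ell g(\ell, w)\, p(\ell \given w) + g(\ell, w)\, \partial_\ell p(\ell \given w) = \bigl(\partial_\ell g(\ell, w) + \rho(\ell, w) g(\ell, w)\bigr) p(\ell \given w)
\]
for $P_W$-almost every $w$. Thus the integrand appearing in the conditional expectation is precisely the $\ell$-derivative of the product $g(\cdot, w) p(\cdot \given w)$, which is what makes an integration-by-parts argument possible.

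Next I would fix such a $w$ and express the conditional expectation as a Lebesgue integral against the conditional density,
\[
  \E\bigl[\partial_\ell g(L, W) + \rho(L, W) g(L, W) \given W = w\bigr] = \int \partial_\ell\bigl(g(\ell, w) p(\ell \given w)\bigr)\, \mathrm{d}\ell.
\]
To justify this rigorously I would invoke assumption~(c) together with Fubini's theorem: since the unconditional $L^1$ norm of $\partial_\ell g + \rho g$ is finite, the map $\ell \mapsto \partial_\ell(g(\ell, w) p(\ell \given w))$ is integrable with respect to Lebesgue measure for $P_W$-almost all $w$, so the integral on the right-hand side is well-defined and finite.

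The heart of the argument is then the fundamental theorem of calculus combined with the boundary condition. Writing the integral over $\mathbb{R}$ as a limit of integrals over $[-t, t]$ and applying dominated convergence (justified by the integrability just established), I would obtain
\[
  \int \partial_\ell\bigl(g(\ell, w) p(\ell \given w)\bigr)\, \mathrm{d}\ell = \lim_{t \to \infty} \int_{-t}^{t} \partial_\ell\bigl(g(\ell, w) p(\ell \given w)\bigr)\, \mathrm{d}\ell = \lim_{t \to \infty} \bigl[ g(t, w) p(t \given w) - g(-t, w) p(-t \given w) \bigr],
\]
where the final equality is the fundamental theorem of calculus applied to the absolutely continuous function $\ell \mapsto g(\ell, w) p(\ell \given w)$. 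By assumption~(d) this limit is $0$, which proves the conditional identity for $P_W$-almost every $w$. Taking expectations over $W$ then yields $\E[\partial_\ell g(L, W) + \rho(L, W) g(L, W)] = 0$, and rearranging gives the stated integration-by-parts formula.

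The main obstacle I anticipate is the measure-theoretic bookkeeping required to transfer the pointwise-in-$w$ derivative identity into simultaneous integrability and validity of the fundamental theorem of calculus for $P_W$-almost every $w$. In particular, assumption~(c) controls only the unconditional $L^1$ norm, so Fubini is exactly what converts this into conditional integrability for almost every $w$; and one must confirm that $\ell \mapsto g(\ell, w) p(\ell \given w)$ is genuinely absolutely continuous (and not merely differentiable) for the fundamental theorem of calculus to apply. Once these points are secured the remaining steps are routine. I would remark that this argument is an adaptation of the proof of Proposition~1 in \citet{klyne2023average}, the only substantive change being the reformulation of the boundary condition in assumption~(d).
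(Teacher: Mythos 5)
Your proposal is correct and follows essentially the same route as the paper: apply the product rule to recognize the integrand as $\partial_\ell(g(\ell,w)p(\ell\given w))$, use assumption~(c) with Fubini's theorem to obtain Lebesgue integrability for $P_W$-almost every $w$, and then combine dominated convergence, the fundamental theorem of calculus, and the boundary condition~(d) to conclude the conditional expectation vanishes. The only difference is expository — you spell out the product-rule identity and flag the absolute-continuity caveat explicitly, whereas the paper leaves these implicit.
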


\begin{lemma}
  \label{lemma:function_decay_from_integrals}
Let $\mu$ denote the Lebesgue measure on $\mathbb{R}$, let $f: \mathbb{R} \to \mathbb{R}$ be differentiable with derivative $f'$ and suppose that 
\[
  \max\left( \int |f| \, \mathrm{d}\mu, \int |f'| \, \mathrm{d}\mu \right) < \infty.
\]
Then $\lim_{|x| \to \infty} f(x) = 0$. 
\begin{proof}
Since $f'$ is the derivative of $f$ by \citet[][Theorem 7.21]{rudin1986real}, we have for all $x \in \mathbb{R}$ and $y \geq x$ that 
\[
  |f(x)| = \left| f(y) - \int_x^y f'(t) \, \mathrm{d}\mu(t) \right| \leq | f(y) | +\int_x^y |f'(t)| \, \mathrm{d}\mu(t). 
\]
Multiplying both sides of the inequality by $\ind_{[x, x+1]}(y)$, using monotonicity of integrals and Fubini's theorem, we see that 
\begin{align*}
  |f(x)| &= \int_x^{x+1} |f(x)| \, \mathrm{d}\mu(y) \leq \int_x^{x+1} | f(y) | \, \mathrm{d}\mu(y) + \int_x^{x+1} \int_x^y |f'(t)| \, \mathrm{d}\mu(t) \, \mathrm{d}\mu(y)\\
  &\leq \int_x^\infty | f(y) | \, \mathrm{d}\mu(y) + \int_x^\infty \int_x^{x+1} |f'(t)| \, \mathrm{d}\mu(t) \, \mathrm{d}\mu(y) =  \int_x^\infty | f(y) | \, \mathrm{d}\mu(y) + \int_x^\infty |f'(t)| \, \mathrm{d}\mu(t).
\end{align*}
By integrability of $f$ and $f'$ the right-hand side must go to $0$ as $x \to \infty$ which proves that $\lim_{x \to \infty} |f(x)| = 0$. An analogous argument can be applied to show that $\lim_{x \to -\infty} |f(x)| = 0$.
\end{proof}
\end{lemma}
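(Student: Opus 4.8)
The plan is to prove that the two one-sided limits $\lim_{x \to +\infty} f(x)$ and $\lim_{x \to -\infty} f(x)$ both exist and then use integrability of $f$ to force each of them to vanish.

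First I would establish existence of the limit at $+\infty$. Because $f$ is differentiable everywhere and $f'$ is Lebesgue integrable, the fundamental theorem of calculus applies (this is precisely the content of the cited \citet[][Theorem 7.21]{rudin1986real}) and gives, for every $x \geq 0$,
\[
  f(x) = f(0) + \int_0^x f'(t)\,\mathrm{d}\mu(t).
\]
Since $\int |f'|\,\mathrm{d}\mu < \infty$, the integral $\int_0^x f'\,\mathrm{d}\mu$ converges absolutely as $x \to \infty$ to $\int_0^\infty f'\,\mathrm{d}\mu$ (dominate $f'\ind_{[0,x]}$ by $|f'|$ and apply dominated convergence), so $L_+ := \lim_{x\to\infty} f(x)$ exists and is finite.

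Second, I would rule out $L_+ \neq 0$ using $\int |f|\,\mathrm{d}\mu < \infty$. If $L_+ \neq 0$, then there is $R > 0$ with $|f(x)| \geq |L_+|/2$ for all $x \geq R$, whence $\int_R^\infty |f|\,\mathrm{d}\mu = \infty$, contradicting integrability of $f$. Hence $L_+ = 0$, i.e.\ $\lim_{x\to\infty} f(x) = 0$. Running the symmetric argument on $(-\infty, 0]$ gives $\lim_{x\to-\infty} f(x) = 0$, and the two together yield the claim.

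The main point to get right is the first step: establishing the FTC representation for a function that is only assumed differentiable (not a priori absolutely continuous) while its derivative is merely integrable; this is exactly where Rudin's Theorem~7.21 is needed, since the elementary FTC cannot be invoked directly. An alternative that sidesteps the existence of the limit entirely would be to bound $|f(x)|$ directly: for $y \in [x, x+1]$ write $|f(x)| \le |f(y)| + \int_x^y |f'|\,\mathrm{d}\mu$, integrate in $y$ over $[x,x+1]$, and apply Fubini to obtain $|f(x)| \le \int_x^\infty |f|\,\mathrm{d}\mu + \int_x^\infty |f'|\,\mathrm{d}\mu$, whose right-hand side tends to $0$ as $x \to \infty$ by integrability; this version only uses the FTC on bounded intervals.
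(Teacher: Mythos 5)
Your proposal is correct, and your primary argument takes a genuinely different route from the paper's. The paper never establishes that $\lim_{x\to\infty} f(x)$ exists; instead it derives the direct quantitative bound $|f(x)| \leq \int_x^\infty |f|\,\mathrm{d}\mu + \int_x^\infty |f'|\,\mathrm{d}\mu$ by averaging the pointwise inequality $|f(x)| \leq |f(y)| + \int_x^y |f'|\,\mathrm{d}\mu$ over $y \in [x, x+1]$ and applying Fubini --- which is exactly the ``alternative'' you sketch in your final paragraph. Your main route instead uses the global consequence of Rudin's Theorem~7.21, namely $f(x) = f(0) + \int_0^x f'\,\mathrm{d}\mu$, combined with dominated convergence to show the limit $L_+$ exists, and then rules out $L_+ \neq 0$ by contradiction with $\int |f|\,\mathrm{d}\mu < \infty$. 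Both arguments are sound and both rest on the same nontrivial ingredient (the FTC for everywhere-differentiable functions with merely integrable derivative); your version yields the slightly stronger intermediate fact that the limits exist, while the paper's version is more self-contained in that it produces an explicit decay bound without any existence or contradiction step. You correctly identify that the citation to Rudin is the crux, since $f$ is not assumed absolutely continuous a priori.
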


\subsection{Robustness of nuisance function estimation for different
  estimators} \label{sec:estimator_robustness}
A key motivation for applying the estimators based on a partially linear model
on $Y$ is to avoid estimating nuisance functions that result in unstable
estimators. We illustrate this point based on three examples below.

\subsubsection[Estimation of nuisance functions with continuous
L]{Estimation of nuisance functions with continuous
  $L$} \label{sec:estimator_robustness_cont}

To illustrate the difficulty of estimating the nuisance functions
$\partial_\ell f$ and $\rho$ (see
Section~\ref{sec:estimation_np_directional}) compared to $m$ and $g$
(see Section~\ref{sec:estimation_plm}) we estimate the functions on
simulated data and compare the estimates to the truth. We simulate
$n=1000$ observations from the model where $W$ is uniformly
distributed on $\Delta^{2}$, $\varepsilon$ and $\xi$ are standard
Gaussian independent of each other and from $W$ and
\[
  L := W^1 + (1 + \ind_{\{W^1 > 1/2\}})\xi, \quad Y:= (1+L)W^1 + \varepsilon.
\]
The out-of-sample predictions evaluated on a new dataset
with $n=1000$ observations can be seen together with
the true function in Figure~\ref{fig:semiparametric_robustness_cont}.

\begin{figure}[H]
  \centering
  \includegraphics*[width=\textwidth]{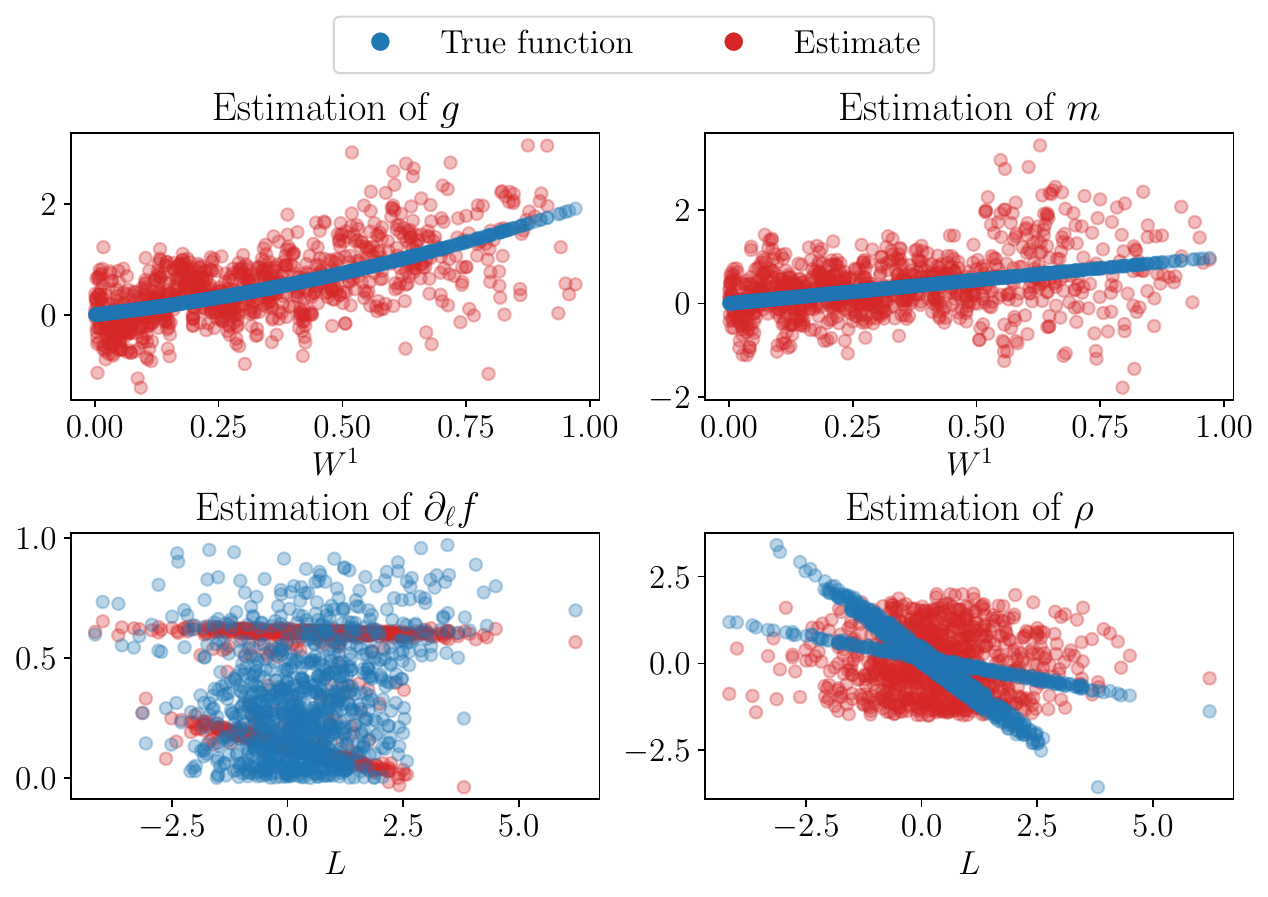}
  \caption{Estimation of nuisance functions on the simulated data in
    Section~\ref{sec:estimator_robustness_cont}. The estimates are
    computed on a dataset of size $n=1000$ and the estimates (red) and
    true functions (blue) are evaluated on a new dataset of the same
    size. We see that the $m$ and $g$ estimates are reasonable
    estimates of the true functions, that roughly capture the behavior
    of the true function in an unbiased (conditioned on $W^1$)
    way. In contrast, the derivative and score estimates, are
    much worse at capturing the more complex behavior (with clear
    biases conditioned on $W^1$).} \label{fig:semiparametric_robustness_cont}
\end{figure}

\subsubsection[Estimation of nuisance functions with binary
L]{Estimation of nuisance functions with binary
  $L$}\label{sec:estimator_robustness_binary}

To illustrate the difficulty of estimating the inverse propensity
score $\pi^{-1}$ (see Section~\ref{sec:estimation_np_binary}) compared
to $m$ (in this case equal to $\pi$) and $g$ (see
Section~\ref{sec:estimation_plm}) we estimate the functions on
simulated data and compare the estimates to the truth. We simulate
$n=1000$ observations from the model where $W$ is uniformly
distributed on $\Delta^{2}$, $\varepsilon$ is standard Gaussian
independent of $W$, $L$ conditioned on $W$ is generated as a Bernoulli random variable
with success probability
$\pi(W):= 0.95\mathrm{expit}(\mathrm{logit}(W^1))+0.05$
and
\[
   Y:= 2LW^1 + \varepsilon.
\]
The out-of-sample predictions evaluated on new dataset with
$n=1000$ observations can be seen together with the true function in
Figure~\ref{fig:semiparametric_robustness_bin}.

\begin{figure}[H]
  \centering
  \includegraphics*[width=\textwidth]{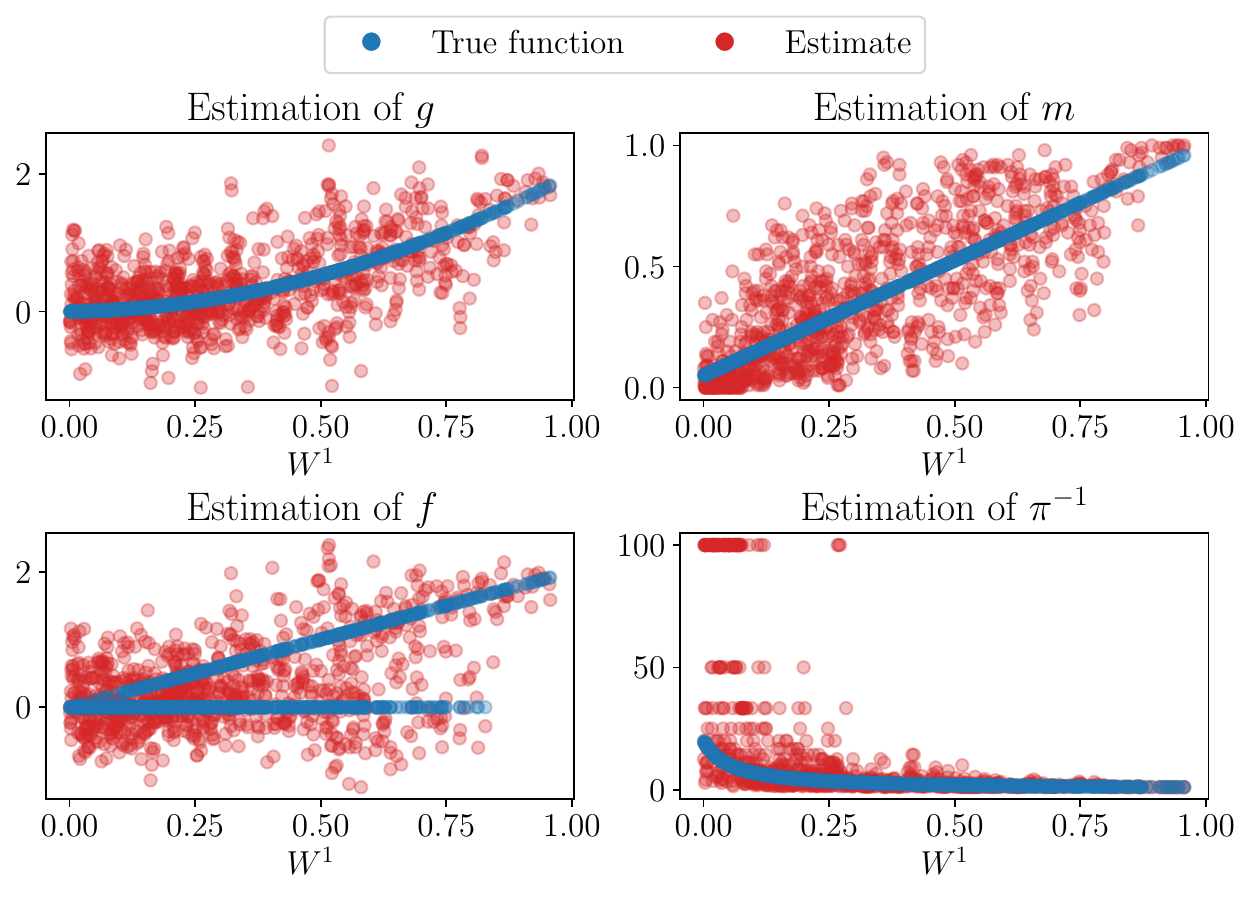}
  \caption{Estimation of nuisance functions on the simulated data in
    Section~\ref{sec:estimator_robustness_binary}. The estimates are
    computed on a dataset of size $n=1000$ and the estimates (red) and
    true functions (blue) are evaluated on a new dataset of the same
    size. We see that the $m$, $g$ and $f$ estimates are reasonable
    estimates of the true functions whereas the inverse propensity
    (even when cutting off extreme values at $100$) is poorly
    estimated for $W^1$ close to
      zero.} \label{fig:semiparametric_robustness_bin}
\end{figure}

\subsubsection{Partially linear model estimate when partial
  linearity is violated} \label{sec:estimator_robustness_plm}

The estimators based on the partially linear model are still
well-behaved even in settings where the model is misspecified. The
estimand
$\E[\cov(Y, L \given W)]/\E[\var(L \given W)]$ just happens to be
different from $\tau$ and $\lambda$ (depending on the setting). We can
see this by plotting the distribution of the estimates from the $d=15$
experiment (with $Y$ nonparametric) of Section~\ref{sec:est_exp} which
we do in Figure~\ref{fig:semiparametric_distributions}. Even
  though the estimand is different, it is still related to the
  original target parameters, since it can be seen as the 'best
  approximation' to a generic nonparametric model, as shown in the
  following proposition.
\begin{proposition}
  \label{prop:plm_approximation}
  Let $(Y, L, W) \in \mathbb{R} \times \mathbb{R} \times \mathcal{W}$
  denote random variables where $\max\{\E[L^2], \E[Y^2] \} <
  \infty$. Then,
  \[
    \theta := \frac{\E[\cov(Y, L \given W)]}{\E[\var(L \given W)]}
    \quad \text{and} \quad h : w \mapsto \E[Y \given W=w] - \theta
    \E[L \given W=w]
  \]
  are solutions to the minimization problem
  \begin{equation}
    \label{eq:optim_h_theta}
    \min_{\theta \in \mathbb{R}, h: \mathcal{W} \to \mathbb{R}} \E\left[\left(\E[Y \given L, W] - \theta L - h(W) \right)^2 \right].
  \end{equation}
  \begin{proof}
    First, we note that instead of the optimization in
      \eqref{eq:optim_h_theta} we can also solve
    \begin{equation}
      \label{eq:otpim_htilde_theta}
      \min_{\theta \in \mathbb{R}, \widetilde{h}: \mathcal{W} \to
        \mathbb{R}} \underbrace{\E\left[\left(\E[Y \given L, W] - \E[Y \given W] -
          \theta (L-\E[L \given W]) - \widetilde{h}(W) \right)^2 \right]}_{=c(\theta,\widetilde{h})},
    \end{equation}
    since using
      $h: w \mapsto \widetilde{h}(w) + \E[Y \given W=w] - \theta \E[L \given
      W=w]$ leads to a solution of \eqref{eq:optim_h_theta}.  Next, we expand the square in the
    objective function of the minimization problem
    \eqref{eq:otpim_htilde_theta} and obtain that
\begin{align*}
 c(\theta, \widetilde{h}) &=  \E\left[\left(\E[Y \given L, W] - \E[Y \given W]  \right)^2 \right] + \theta^2 \E\left[\var(L \given W) \right] + \E[\widetilde{h}(W)^2]\\
  &- 2 \theta \E[(\E[Y \given L, W] - \E[Y \given W])(L - \E[L \given W])].
\end{align*}
It is now clear that the optimal $\widetilde{h}$ is the zero function and by noting that
\[
  \E[(\E[Y \given L, W] - \E[Y \given W])(L - \E[L \given W])] = \E[\cov(Y, L \given W)],
\]
we obtain the desired result by minimizing the quadratic in $\theta$.
\end{proof}
\end{proposition}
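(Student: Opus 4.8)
The plan is to treat the minimization as an $L^2$-projection of the regression function $f(L,W) := \E[Y \mid L, W]$ onto the class of partially linear functions $(\ell, w) \mapsto \theta \ell + h(w)$, and to solve it by first profiling out $h$ and then optimizing the resulting scalar problem in $\theta$. The assumptions $\E[L^2] < \infty$ and $\E[Y^2] < \infty$ guarantee that $f(L,W)$, $L$, and all conditional expectations appearing below lie in $L^2$, so the objective is finite and every projection used is well-defined.

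First I would fix $\theta$ and minimize over $h$. Since $\theta L$ is held fixed, the inner minimizer is the $L^2$-projection of $f(L,W) - \theta L$ onto square-integrable functions of $W$, that is, the conditional expectation
\[
  h_\theta(w) = \E[f(L,W) - \theta L \mid W = w] = \E[Y \mid W=w] - \theta \, \E[L \mid W=w],
\]
where the last equality uses the tower property $\E[f(L,W) \mid W] = \E[Y \mid W]$. This already produces the claimed form of $h$ once the optimal $\theta$ is identified.

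Next I would substitute $h_\theta$ back to obtain a profiled objective in $\theta$ alone. Writing $\tilde f := f(L,W) - \E[Y \mid W]$ and $\tilde L := L - \E[L \mid W]$, the profiled objective is $\E[(\tilde f - \theta \tilde L)^2]$, a one-dimensional least-squares problem minimized at $\theta = \E[\tilde f \tilde L] / \E[\tilde L^2]$, provided $\E[\tilde L^2] = \E[\var(L \mid W)] > 0$; in the degenerate case where this vanishes, $L$ is almost surely a function of $W$ and any $\theta$ is a minimizer. The denominator is immediately $\E[\var(L \mid W)]$, so it remains only to evaluate the numerator.

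The crux of the argument — and the step I expect to be the main obstacle — is showing $\E[\tilde f \tilde L] = \E[\cov(Y, L \mid W)]$, i.e.\ replacing the regression function $f = \E[Y \mid L, W]$ by $Y$ inside the cross term. Conditioning on $W$ and using that $\tilde L$ is $\sigma(L, W)$-measurable, the tower property gives $\E[f(L,W)\, \tilde L \mid W] = \E[Y \, \tilde L \mid W]$; combined with $\E[\tilde L \mid W] = 0$ this yields $\E[\tilde f \, \tilde L \mid W] = \E[Y(L - \E[L \mid W]) \mid W] = \cov(Y, L \mid W)$, and taking expectations gives $\E[\tilde f \tilde L] = \E[\cov(Y, L \mid W)]$. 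Hence $\theta = \E[\cov(Y, L \mid W)] / \E[\var(L \mid W)]$ and $h = h_\theta$ as claimed, with everything apart from this cross-term identity being routine Hilbert-space projection.
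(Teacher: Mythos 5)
Your proposal is correct and follows essentially the same route as the paper: both arguments rest on the orthogonal decomposition into the centered quantities $L-\E[L\given W]$ and $\E[Y\given L,W]-\E[Y\given W]$, together with the cross-term identity $\E[(\E[Y\given L,W]-\E[Y\given W])(L-\E[L\given W])]=\E[\cov(Y,L\given W)]$. The only difference is organizational (you profile out $h$ first and then minimize the resulting scalar quadratic, whereas the paper reparametrizes $h$ and expands the full square at once), and your explicit tower-property justification of the cross-term identity is a welcome detail the paper leaves implicit.
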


\begin{figure}[H]
  \centering
  \includegraphics*[width=\textwidth]{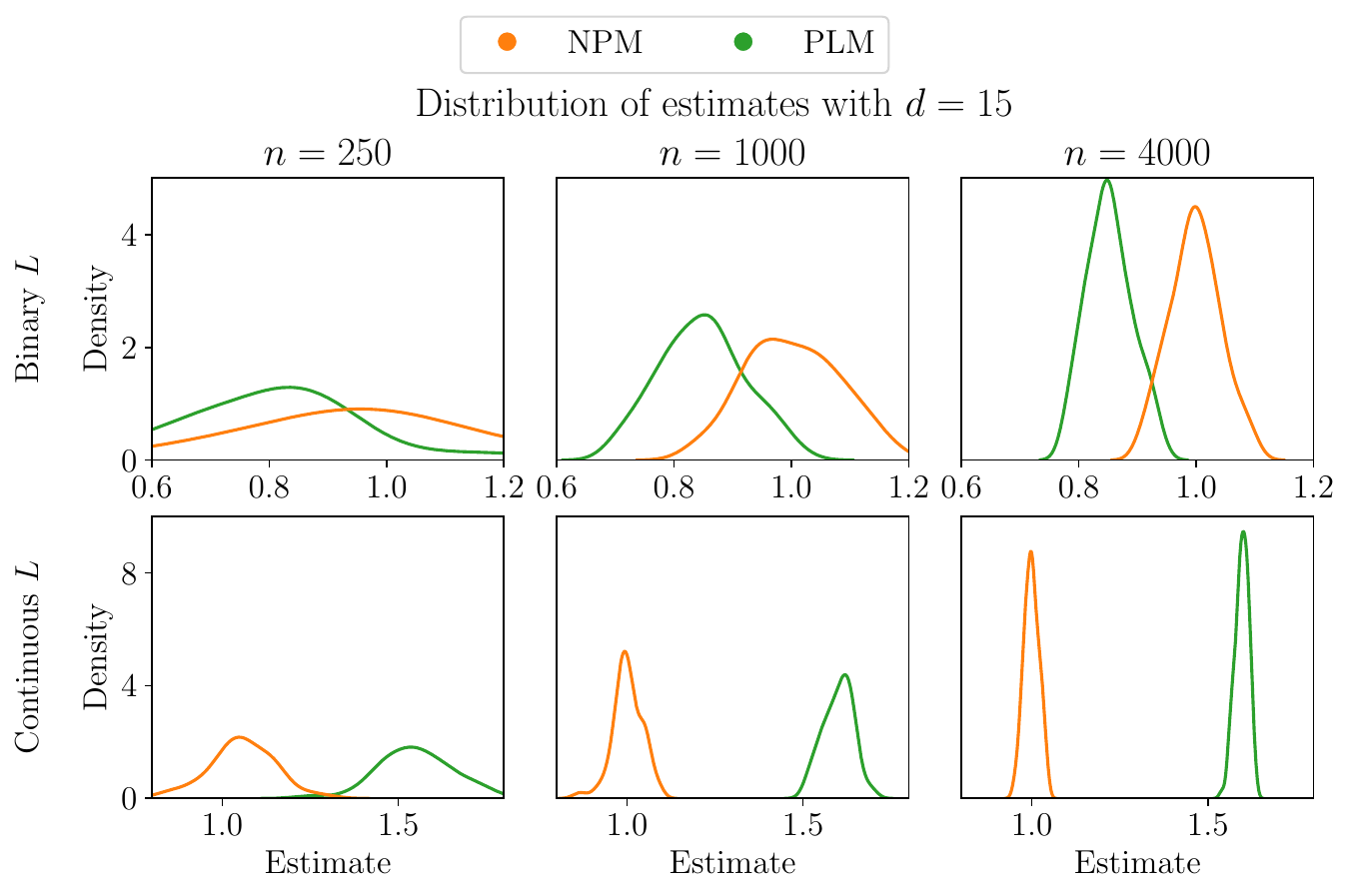}
  \caption{Smoothed empirical distributions
    of the estimates for the nonparametric model (NPM) and
      partially linear model (PLM) estimators in the
    nonparametric settings described in Section~\ref{sec:est_exp} with
    $d=15$. We see that while the NPM estimator correctly clusters
    around the true value of $1$, the PLM estimator also clusters
    however around a different, but potentially still useful,
      value.} \label{fig:semiparametric_distributions}
\end{figure}

\subsection{Computing directional perturbation effects in the presence
  of zero speeds}
\label{sec:zeros}

In Proposition~\ref{prop:derivative-isolating} we only considered
directional perturbations where $\omega_\psi(Z) \neq 0$ almost
surely. We would, however, like to be able to use the derivative-isolating
reparametrizations for estimation of the directional average
perturbation effect $\tau_{\psi}$ even when this is not the
case. To that end, we propose a modified estimation procedure that we
outline in Algorithm~\ref{alg:zeros}. We propose to estimate
$\tau_\psi$ on the subset of observations where
$\omega_\psi(Z) \neq 0$ and then correct the resulting
point and variance estimate appropriately.

\begin{algorithm}[H]
  \caption{Estimation of $\tau_\psi$ in the presence of observations with $\omega_\psi(Z) = 0$.}
  \begin{algorithmic}[1]
  \Statex \textbf{Input:} Data $(Y_i, Z_i)_{i \in [n]}$, directional perturbation $\psi$, derivative-isolating reparametrization $\phi$.
  \State For $i \in [n]$, define $D_i := \ind_{\{\omega_\psi(Z_i) \neq 0\}}$.
  \State Define $\mathcal{I} := \{i \in [n] \mid D_i = 1 \}$.
  \State For $i \in \mathcal{I}$, define $(L_i, W_i) = \phi(Z_i)$.
  \State Compute estimate $\widehat{\tau}$ and asymptotic variance estimate $\widehat{\sigma}^2$ from either Algorithm~\ref{alg:nonparametric_tau} or \ref{alg:dml}.
  \State Compute $\widehat{p} := \frac{1}{n} \sum_{i=1}^n D_i$.
  \Statex \textbf{return} estimate $\widetilde{\tau}:= \widehat{\tau}  \widehat{p}$ and asymptotic variance estimate $\widetilde{\sigma}^2:=\widehat{p}  \widehat{\sigma}^2 + \widehat{\tau}^2 \widehat{p} (1-\widehat{p})$.
  \end{algorithmic}
  \label{alg:zeros}
\end{algorithm}

Theoretical guarantees for the proposed modification follow from
repeating the arguments of the proof of
Theorem~\ref{theorem:normality}~\eqref{theorem:normality_directional}
and \eqref{theorem:normality_plm} to obtain the population quantities
that determine the asymptotic behavior of $\widehat{\tau}$ and
$\widehat{\sigma}^2$ and then applying Proposition~\ref{prop:zeros}
below.

\begin{proposition}
  \label{prop:zeros}
Let $(D, S) \in  \{0, 1\} \times \mathbb{R}$ be a pair of random variables with distributions determined by $\mathcal{P}$ and let $(S_n, D_n)_{n \in \mathbb{N}}$ be an i.i.d.\ sequence of copies of $(S, D)$. Suppose that there exist $\delta, C, c > 0$ such that $\var_P(DS) \geq c$ and $\E_P[D|S|^{2+\delta}] \leq C$. Suppose further that
\[ 
  \widehat{\tau} = \frac{1}{\sum_{i=1}^n D_i} \sum_{i=1}^n D_i S_i + o_{\mathcal{P}}(n^{-1/2})
\]
and
\[
  \widehat{\sigma}^2 = \frac{1}{\sum_{i=1}^n D_i} \sum_{i=1}^n D_i S_i^2 - \left(\frac{1}{\sum_{i=1}^n D_i} \sum_{i=1}^n D_i S_i \right)^2 + o_{\mathcal{P}}(1).
\]
Then, 
\[
  \widehat{p} := \frac{1}{n} \sum_{i=1}^n D_i, \quad \widetilde{\tau} := \widehat{\tau}\widehat{p}, \quad \text{and} \quad \widetilde{\sigma}^2:=\widehat{p}  \widehat{\sigma}^2 + \widehat{\tau}^2 \widehat{p} (1-\widehat{p})
\]
satisfy
\[
  \lim_{n \to \infty} \sup_{P \in \mathcal{P}} \sup_{x \in \mathbb{R}} \left| \mathbb{P}_P\left( \frac{\sqrt{n}}{\widetilde{\sigma}}(\widetilde{\tau} - \tau_P) \leq x  \right) - \Phi(x)\right| = 0.
\]
\end{proposition}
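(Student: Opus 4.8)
The plan is to reduce everything to the single i.i.d.\ summand $X_i := D_i S_i$ and then invoke the self-normalized central limit theorem of Lemma~\ref{lemma:self-normalised_clt}. Since $D_i \in \{0,1\}$ we have $D_i^2 = D_i$, so $X_i^2 = D_i S_i^2$ and $|X_i|^{2+\delta} = D_i|S_i|^{2+\delta}$; the hypotheses $\var_P(DS) \geq c$ and $\E_P[D|S|^{2+\delta}] \leq C$ are therefore exactly the moment conditions $\var_P(X) \geq c$ and $\E_P[|X|^{2+\delta}] \leq C$ required by that lemma, with $\tau_P = \E_P[DS] = \E_P[X]$ the centering quantity. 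A preliminary I would establish first is that $p_P := \E_P[D]$ is bounded away from zero uniformly over $\mathcal{P}$: by $D^2 = D$ and conditional Jensen (or Hölder with exponents $(2+\delta)/\delta$ and $(2+\delta)/2$),
\[
  c \le \var_P(DS) \le \E_P[D S^2] \le p_P^{\delta/(2+\delta)} \big(\E_P[D|S|^{2+\delta}]\big)^{2/(2+\delta)} \le p_P^{\delta/(2+\delta)} C^{2/(2+\delta)},
\]
so $p_P \ge \big(c\, C^{-2/(2+\delta)}\big)^{(2+\delta)/\delta} =: c' > 0$. Consequently $\mathbb{P}_P(\sum_i D_i = 0) \le (1-c')^n \to 0$ uniformly, so we may work on the event $\{\sum_i D_i > 0\}$, and since $\widehat{p} = p_P + o_{\mathcal{P}}(1)$ by Chebyshev, Lemma~\ref{lemma:o_P-O_P-calculus}~\eqref{lemma:o_P-O_P-calculus-4} gives $\widehat{p}^{-1} = p_P^{-1} + o_{\mathcal{P}}(1) = O_{\mathcal{P}}(1)$.

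Next I would reduce $\widetilde{\tau}$ to the empirical mean of the $X_i$. Writing $A := \tfrac1n\sum_i D_i S_i$, the given expansion of $\widehat{\tau}$ and the identity $\widehat{p}\cdot \big(\sum_i D_i S_i / \sum_i D_i\big) = A$ yield
\[
  \widetilde{\tau} = \widehat{\tau}\,\widehat{p} = A + \widehat{p}\cdot o_{\mathcal{P}}(n^{-1/2}) = \frac1n \sum_{i=1}^n X_i + o_{\mathcal{P}}(n^{-1/2}),
\]
using $\widehat{p} = O_{\mathcal{P}}(1)$ and Lemma~\ref{lemma:o_P-O_P-calculus}~\eqref{lemma:o_P-O_P-calculus-2} for the remainder.

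The heart of the argument is to show that the reconstructed variance collapses exactly to the empirical variance $\widehat{V} := \tfrac1n\sum_i X_i^2 - A^2$ of the $X_i$. Substituting the expansion of $\widehat{\sigma}^2$ and using $\widehat{p}\cdot\big(\sum_i D_i S_i^2/\sum_i D_i\big) = \tfrac1n\sum_i D_i S_i^2 = \tfrac1n\sum_i X_i^2$ together with $\widehat{p}\cdot\big(\sum_i D_i S_i/\sum_i D_i\big)^2 = A^2/\widehat{p}$, I get $\widehat{p}\,\widehat{\sigma}^2 = \tfrac1n\sum_i X_i^2 - A^2/\widehat{p} + o_{\mathcal{P}}(1)$. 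For the second piece, $\widehat{\tau} = A/\widehat{p} + o_{\mathcal{P}}(n^{-1/2})$ with $A/\widehat{p} = O_{\mathcal{P}}(1)$ gives $\widehat{\tau}^2 = A^2/\widehat{p}^2 + o_{\mathcal{P}}(1)$ (Lemma~\ref{lemma:o_P-continuous-transformation}), whence $\widehat{\tau}^2\,\widehat{p}(1-\widehat{p}) = A^2/\widehat{p} - A^2 + o_{\mathcal{P}}(1)$. Adding the two pieces, the $\pm A^2/\widehat{p}$ terms cancel and I obtain $\widetilde{\sigma}^2 = \tfrac1n\sum_i X_i^2 - A^2 + o_{\mathcal{P}}(1) = \widehat{V} + o_{\mathcal{P}}(1)$; the crucial point is that $D_i^2 = D_i$ makes $\tfrac1n\sum_i D_i S_i^2$ literally the second empirical moment of $X_i$.

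Finally, Lemma~\ref{lemma:self-normalised_clt}~\eqref{lemma:self-normalised_clt-1} applied to $X_i$ shows that the self-normalized statistic $T := \big(\tfrac{1}{\sqrt n}\sum_i (X_i - \E_P[X])\big)/\widehat{V}^{1/2}$ is asymptotically standard normal uniformly over $\mathcal{P}$; in particular its numerator is $O_{\mathcal{P}}(1)$ and $\widehat{V} = \var_P(DS) + o_{\mathcal{P}}(1)$ is bounded away from zero. Combining the two reductions with $\widetilde{\sigma} = \widehat{V}^{1/2} + o_{\mathcal{P}}(1)$ (Lemma~\ref{lemma:o_P-continuous-transformation}) and the stochastic-order calculus of Lemma~\ref{lemma:o_P-O_P-calculus} gives
\[
  \frac{\sqrt n}{\widetilde{\sigma}}(\widetilde{\tau} - \tau_P) = \frac{\tfrac{1}{\sqrt n}\sum_i (X_i - \E_P[X])}{\widehat{V}^{1/2}} + o_{\mathcal{P}}(1) = T + o_{\mathcal{P}}(1),
\]
and a uniform version of Slutsky's theorem \citep[][Lemma 20]{shah2020hardness} yields the claim. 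The main obstacle I anticipate is the variance bookkeeping of the third paragraph — verifying the exact cancellation that turns $\widetilde{\sigma}^2$ into the empirical variance of $D_iS_i$ — together with establishing the uniform lower bound $p_P \ge c'$, which is what legitimizes controlling $\widehat{p}^{-1}$ and discarding the degenerate event; the remaining steps are routine applications of the already developed $o_{\mathcal{P}}/O_{\mathcal{P}}$ calculus.
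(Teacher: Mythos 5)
Your proposal is correct and takes essentially the same route as the paper's proof: reduce everything to the i.i.d.\ summand $X_i = D_iS_i$ (using $D_i^2 = D_i$), verify that $\widetilde{\tau}$ and $\widetilde{\sigma}^2$ equal the empirical mean and empirical variance of the $X_i$ up to $o_{\mathcal{P}}(n^{-1/2})$ and $o_{\mathcal{P}}(1)$ terms respectively, and then conclude via Lemma~\ref{lemma:self-normalised_clt}~\eqref{lemma:self-normalised_clt-1} and uniform Slutsky. You in fact supply two details the paper leaves implicit --- the exact cancellation of the $\pm A^2/\widehat{p}$ terms in the variance bookkeeping and the uniform lower bound $p_P \geq c' > 0$ (which legitimizes controlling $\widehat{p}^{-1}$ and discarding the event $\sum_i D_i = 0$) --- both of which are correct and strengthen the argument.
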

\begin{proof}
We first note that
\[
  \widetilde{\tau} := \frac{1}{n} \sum_{i=1}^n D_i S_i + o_{\mathcal{P}}(n^{-1/2})
\]
since $\widehat{p} \leq 1$. Similarly, we have that
\[
  \widetilde{\sigma}^2 = \frac{1}{n} \sum_{i=1}^n D_i S_i^2 - \left(\frac{1}{n} \sum_{i=1}^n D_i S_i \right)^2 + o_{\mathcal{P}}(1).
\]
Therefore, by several applications of Lemma~\ref{lemma:o_P-O_P-calculus}, we have that
\[
  \frac{\sqrt{n}}{\widetilde{\sigma}}(\widetilde{\tau} - \tau_P) = \frac{\frac{1}{\sqrt{n}} \sum_{i=1}^n (D_i S_i - \tau_P)}{\left(  \frac{1}{n} \sum_{i=1}^n D_i S_i^2 - \left(\frac{1}{n} \sum_{i=1}^n D_i S_i \right)^2  \right)^{1/2}} + o_{\mathcal{P}}(1)
\]
and the result thus follows by
Lemma~\ref{lemma:self-normalised_clt}~\eqref{lemma:self-normalised_clt-1}
by the bounds on the moments and variance of $(D, S)$.
\end{proof}

\subsection{Estimating differentiable regression
  functions}\label{sec:tree-derivative}

The following is a description of the smoothing method we apply in the
nonparametric estimator of the average directional perturbation
effect. The procedure uses random forest weights together with a local
polynomial estimate to smooth a given regression function to produce a
differentiable estimator and to compute its derivatives. Local
polynomials are a prominent and natural method used to estimate
derivatives \citep[e.g.,][]{masry1997, de2013derivative}. We are
interested in estimating the partial derivative of a function
$g:\mathbb{R}\times\mathcal{W}\rightarrow\mathbb{R}$. As in
\citet{klyne2023average} we start from a given arbitrary regression
estimate $\widehat{g}$ of $g$ and apply the following local polynomial
approach. Estimate, for all $i\in[n]$, polynomial coefficients
$\beta\in\mathbb{R}^3$ such that the polynomial
$p:\mathbb{R}\rightarrow\mathbb{R}$ defined for all
$\ell\in\mathbb{R}$ by
\begin{equation*}
  p(\ell)=\beta^1 + \beta^2(\ell-L_i) + \beta^3(\ell-L_i)^2
\end{equation*}
is a good local approximation of $\widehat{g}$ around $L_i$. Formally, for a
given weight matrix $K\in\mathbb{R}^{n\times n}$, which encodes a
closeness measure between sample points, we minimize for all $i\in[n]$
the weighted mean squared loss
\begin{equation}
    \label{eq:local_polynomial_fit}
    \hat{\beta}(i):=\operatorname{argmin}_{\beta\in\mathbb{R}^{3}}\sum_{k=1}^{n}\left(\widehat{g}(L_k,W_k)-\sum_{j=1}^{3}\beta^j(L_{k}-L_i)^{j-1}\right)^2 K_{i, \ell}.
\end{equation}
Then, using the estimated coefficients $\hat{\beta}(i)$, we get a
smoothed estimate $\widehat{g}_s$ which at each sample point
$i\in [n]$ is given by
\begin{equation}
  \label{eq:locpol_deriv_est}
  \widehat{g}_s(L_i,W_i):= \hat{\beta}^1(i)
  \quad\text{and}\quad
  \partial_{\ell}\widehat{g}_s(L_i, W_i):= \hat{\beta}^2(i).
\end{equation}
As weights $K$ we could, for example, use kernel weights, i.e.,
$K_{i,\ell}=k((X_i-X_{\ell})/\sigma)$ for a kernel function $k$ and a
bandwidth $\sigma>0$. As we are considering a potentially
high-dimensional setting, we instead propose to use random forest
weights \citep{meinshausen2006quantile, athey2019}. The full procedure
is detailed in Algorithm~\ref{alg:forest-locpol}.

\begin{algorithm}[H]
  \caption{Local polynomial based smoothing with random forest
    weights}
  \begin{algorithmic}[1]
    \Statex \textbf{Input:} Data $(L_i, W_i)_{i\in[n]}$, fitted values $(\widehat{g}(L_i, W_i))_{i\in[n]}$
    
    \State Regress $(\widehat{g}(L_i, W_i))_{i\in[n]}$ on $(L_i, W_i)_{i\in[n]}$ using a random forest yielding $\widehat{\mu}$.
    
    \State Extract weight matrix $K$ from $\widehat{\mu}$.
    
    \For{$i = 1,\ldots,n$}
      \State Set $\hat{\beta}(i)$ to equal the coefficient of a degree $2$ local polynomial fit in \eqref{eq:local_polynomial_fit}.
      
      \State Set $\widehat{g}_s(L_i, W_i):=\hat{\beta}^1(i)$.
      \State Set $\partial_{\ell}\widehat{g}_s(L_i, W_i):=
      \hat{\beta}^2(i)$.
      
    \EndFor
    \Statex \textbf{return} smoothed predictions
      $(\widehat{g}_s(L_i, W_i))_{i\in[n]}$ and
  derivatives $(\partial_{\ell}\widehat{g}_s(L_i, W_i))_{i\in[n]}$
  \end{algorithmic}
  \label{alg:forest-locpol}
  \end{algorithm}

\subsection{Location-scale modelling of the score
  function}\label{sec:loc-scale}

The following is a description of the estimator of the conditional
score based on a location-scale model proposed by
\citet{klyne2023average}. We apply this in our experiments to estimate
conditional score functions and include it here for completeness. We
are interested in the estimation of the conditional score of $L$ given
$W$, where $L \in \mathbb{R}$ and
$W \in \mathcal{W}$, that is, the derivative of the logarithm of the
conditional density of $L$ given $W$. Conditional density estimation
is difficult even in low dimensions.  However, if we are willing to
assume that there exist functions
$m:\mathcal{W}\rightarrow\mathbb{R}$,
$v:\mathcal{W}\rightarrow[0, \infty)$ and a random variable
$\zeta\in\mathbb{R}$ that is independent of $W$ such that
\[
  L = m(W) + v(W)^{1/2}\zeta,
\]
we can simplify
estimation of the score function. The procedure is outlined in
Algorithm~\ref{alg:location-scale_score} below. In all of our
experiments, we use the univariate score estimation method given by
\citet{ng1994smoothing} that is based on ideas from
\citet{cox1985penalty}.

\begin{algorithm}[H]
  \caption{Location-scale score modelling estimation}
  \begin{algorithmic}[1]
  \Statex \textbf{Input:} Data $(L_i, W_i)_{i \in [n]}$, nuisance regression methods $\widehat{\eta} = (\widehat{m}$, $\widehat{v})$, univariate score estimator $\widehat{\rho}$.
  \State Split indices $\{1, \dots, n\}$ into $2$ disjoint sets $I_1, I_2$ of (roughly) equal size.
  \State Regress $L$ on $W$ using $I_1$ yielding regression estimate $\widehat{m}$.
  \State Regress $(L-\widehat{m}(w))^2$ on $W$ using $I_1$ yielding variance estimate $\widehat{v}$.
  \State For $i \in I_2$, set $\widehat{\xi}_i := \frac{L_i - \widehat{m}(W_i)}{\widehat{v}(W_i)^{1/2}}$.
  \State Estimate a univariate score from $(\widehat{\xi}_i)_{i \in I_2}$ yielding $\widehat{\rho}$.
  \Statex \textbf{return} $\widehat{\rho}$.
  \end{algorithmic}
  \label{alg:location-scale_score}
  \end{algorithm}

\section{Details and additional results from numerical
  experiments} \label{sec:extra_num}

Section~\ref{sec:intro_sim} provides the details behind the experiments in
Table~\ref{tab:toy_illustration} in Section~\ref{sec:intro}.
Section~\ref{sec:est_exp} contains an investigation of the performance of the
semiparametric estimators discussed in Section~\ref{sec:estimation} when the
conditioning variable $W$ is compositional. Section~\ref{sec:diversity_groups}
provides a detailed description of how we generate the groups used for
aggregation in the experiment described in Section~\ref{sec:diversity_income}.
Section~\ref{sec:additional_adult} includes figures visualizing the
semi-synthetic dataset used in Section~\ref{sec:diversity_income}.
Section~\ref{sec:american_schools_preprocessing} provides a detailed description
of the non-compositional background variables that are controlled for in
Section~\ref{sec:american_schools}. Section~\ref{sec:microbiome_consistency}
describes an additional experiment on the microbiome data analyzed in
Section~\ref{sec:microbiome}. Section~\ref{sec:additional_microbiome} contains
additional plots illustrating results from the microbiome data analysis in
Section~\ref{sec:microbiome}.

\subsection{Simulations from
  Section~\ref{sec:toy_examples}} \label{sec:intro_sim}

In this section we describe the data-generating processes and methods used in
the numerical illustrations of Example~\ref{ex:absence_microbes} and
Example~\ref{ex:increase_diversity}. The results are provided in
Table~\ref{tab:toy_illustration}.

\subsubsection{Example~\ref{ex:absence_microbes}: Effects of absence or presence
of microbes} 
Let $L \sim \mathrm{Bern}(1/2)$, $U^1, U^2 \sim 
\mathrm{Unif}([0, 1])$ and $B \sim \mathrm{Bern}(1/2)$ be independent random 
variables and define
\[
  Z := (Z^1, W^2(1-Z^1), (1-W^2)(1-Z^1)) \in \Delta^{2},
\]
where $W^2:=U^1(1-L) + U^1 L B$ and $Z^1 := (1-L)
U^2$. Furthermore, define $Y$ such that $Y$ is Bernoulli
distributed with 
\[
  \mathbb{P}(Y=1 \given Z) =  \sim \mathrm{Bern}\biggl(\frac{3}{4}-\frac{1}{8}\ind_{\{Z^1=0\}} + \frac{1}{2}\ind_{\{Z^2=0\}}\biggr).
\]
In the experiment, we generate $n=1000$ i.i.d.\ copies
$(Y_1,Z_1),\ldots,(Y_n,Z_n)$ of $(Y, Z)$.

$\cke^1$ is computed using the $\lambda$-estimator based on a partially linear model for $Y$ (Algorithm~\ref{alg:dml}) with $10$-fold cross-fitting and random forests for all regressions.

\subsubsection{Example~\ref{ex:increase_diversity}: Effects of
  increased diversity}

Let $\Pi$ denote the orthogonal projection onto the orthogonal
complement of $\mathrm{span}((1,1,1))$. Moreover, let
$U \sim \mathcal{N}(0, \Pi\, \Pi^\top)$ and 
$\xi, \varepsilon \sim \mathcal{N}(0, 1)$ be independent random
variables and define 
\[
  Z := -DW + z_{\cen} \in \Delta^2\quad\text{and}\quad Y := L + 4 W^\top e_1 + \varepsilon,
\]
where $W := \frac{U}{\|U\|_1}$,
$D := \mathrm{expit}(W^\top e_1 + \xi)$ and
$L:= -\frac{2}{9} D \sum_{j=1}^3 \sum_{k=1}^3 |W^j - W^k|$. Then, by
construction $L$ is minus the Gini coefficient of $Z$. Hence, from the
partially linear model of $Y$ it follows that the true effect of
increasing the negative Gini coefficient is equal to $1$. In the
experiment, we generate $n=1000$ i.i.d.\ copies
$(Y_1,Z_1),\ldots,(Y_n,Z_n)$ of $(Y, Z)$.

$\cdi_{\gini}$ is computed using the $\tau$-estimator based on a
partially linear model for $Y$ (Algorithm~\ref{alg:dml}) with
$10$-fold cross-fitting and random
forests for all regressions.

\subsection{Semiparametric estimators}\label{sec:est_exp}
In this section we investigate the performance of the semiparametric
estimators outlined in Section~\ref{sec:estimation} with $W$ as a
simplex-valued random variable. More specifically, we compare the
different approaches of estimating
  \begin{equation*}
        \lambda_{\psi}
    = \frac{\E\big[\E\left[Y \given L=1, W \right]]
      - \E\left[Y \right] }{\mathbb{P}(L = 0)}
    \quad\text{and}\quad
    \tau_\psi =  \E[\partial_\ell \E[Y \given L=\ell, W] \big|_{L=\ell}]
  \end{equation*}
  from i.i.d.\ data $(Y_1, L_1,W_1),\ldots,(Y_n,L_n,W_n)$. For both
  types of average perturbation effects we consider the following
  seven types of estimators.
\begin{itemize}
  \itemsep0em 
\item \texttt{NPM}: The one-step corrected nonparametric estimators
  including cross-fitting, i.e., \eqref{eq:nonparametric_discPE} and
  \eqref{eq:nonparametric_contPE}. See
  Algorithms~\ref{alg:nonparametric_tau}
  and~\ref{alg:nonparametric_lambda}, with $K=2$, for the detailed
  binary and directional effect estimators, respectively.
\item \texttt{NPM\_no\_x}: The same estimators as
  \texttt{NPM} but without cross-fitting.
\item \texttt{NPM\_oracle}: The same estimators as \texttt{NPM}
  including cross-fitting but using the true nuisance functions for
  the one-step-correction (propensity score $\pi$ in the binary case
  and conditional score function $\rho$ in the directional case). This
  is an oracle as the true nuisance functions are unknown in practice.
\item \texttt{PLM}: The partially least model based estimators in
  \eqref{eq:PLM_effect} including cross-fitting. See
  Algorithm~\ref{alg:dml} with $K=2$ for the detailed estimators.
\item \texttt{PLM\_no\_x}: The same estimators as
  \texttt{PLM} but without cross-fitting.
\item \texttt{plugin}: The naive plug-in versions of the \texttt{NPM}
  estimates. In the binary effect case, the estimator in
  \eqref{eq:nonparametric_discPE} without the term involving
  $\widehat{\pi}$ and in the directional effect case, the estimator in
  \eqref{eq:nonparametric_contPE} without the term involving
  $\widehat{\rho}$.
\item \texttt{plugin\_no\_x}: The same estimators as \texttt{plugin}
  but without cross-fitting.
\end{itemize}
The variance of the \texttt{NPM} and \texttt{PLM} estimates is computed based on
the provided algorithms for the \texttt{NPM} and \texttt{PLM} methods while the
variance for the plug-in is estimated using an estimator akin to the one given
in \eqref{eq:naive_variance}. Each estimator above is itself based on multiple
of the regression estimates $\widehat{f}$, $\widehat{\pi}$, $\widehat{\rho}$,
$\widehat{g}$ and $\widehat{m}$. For the regression estimates, we use a $5$-fold
cross-validation based estimator which for each fit selects among the following
three models: a simple model that predicts the mean of the outcome (or the
proportion of each class for binary classification), a random forest with $250$
trees with trees grown to a maximum depth of $2$ and a random forest with $250$
trees and no depth restriction. For the score estimator we use the
location-scale based estimator described in Section~\ref{sec:loc-scale}, that
uses the same $5$-fold cross-validated regression estimator for the location and
scale nuisance regressions.

We consider four data-generating models each with varying sample size $n \in
  \{250, 1000, 4000\}$ and predictor dimension $d \in \{3, 15, 75\}$. In all
  settings, we let $W$ be uniformly distributed on $\Delta^{d-2}$, denote by
  $\mathrm{med}_{W_1}$ the population median of $W_1$ and $\sigma_{W_1}^2$ the
  variance of $W_1$ and define $B(W_1) := \ind_{\{W_1 > \mathrm{med}_{W_1}\}}$.
  We then consider the following four generative models
\begin{itemize}
\item Binary $L$ with partially linear $Y$:
  \begin{equation*}
    L = U_0 (1-B(W_1)) + U_1 B(W_1)\quad\text{and}\quad
    Y =L + \sigma_{W_1}^{-1}W_1 +\varepsilon
  \end{equation*}
\item Binary $L$ with nonparametric $Y$:
  \begin{equation*}
    L = U_0 (1-B(W_1)) + U_1 B(W_1)\quad\text{and}\quad
    Y = \frac{7}{5}L B(W_1) + \sigma_{W_1}^{-1}W_1 +\varepsilon
  \end{equation*}
\item Continuous $L$ with partially linear $Y$:
  \begin{equation*}
    L =  B(W_1) + \xi\quad\text{and}\quad
    Y = L +  B(W_1) + \varepsilon    
  \end{equation*}
\item Continuous $L$ with nonparametric $Y$:
  \begin{equation*}
    L =  B(W_1) + (1+ B(W_1))\xi\quad\text{and}\quad
    Y = 2B(W_1)L + B(W_1) + \varepsilon
  \end{equation*}
\end{itemize}
The noise variables $U_0 \sim \mathrm{Bern}(0.8)$, $U_1 \sim
\mathrm{Bern}(0.5)$, $\varepsilon \sim \mathcal{N}(0, 1)$ and
$\xi\sim\mathcal{N}(0,1)$ are all sampled independently (and independent of
$W$).  The true value of $\lambda$ in the binary $L$ settings and $\tau$ in the
continuous $L$ settings is $1$. For each setting we simulate data and compute an
estimate with a corresponding asymptotic $95\%$ confidence interval using each
of the methods described above. We repeat the experiment $100$ times and compute
coverage rates of the $95\%$ confidence intervals for $d=3$ in
Figure~\ref{fig:semiparametric_sim_3}, for $d=15$ in
Figure~\ref{fig:semiparametric_sim_15} and for $d=75$ in
Figure~\ref{fig:semiparametric_sim_75}.
\begin{figure}[ht!]
  \centering
  \includegraphics*[width=\textwidth]{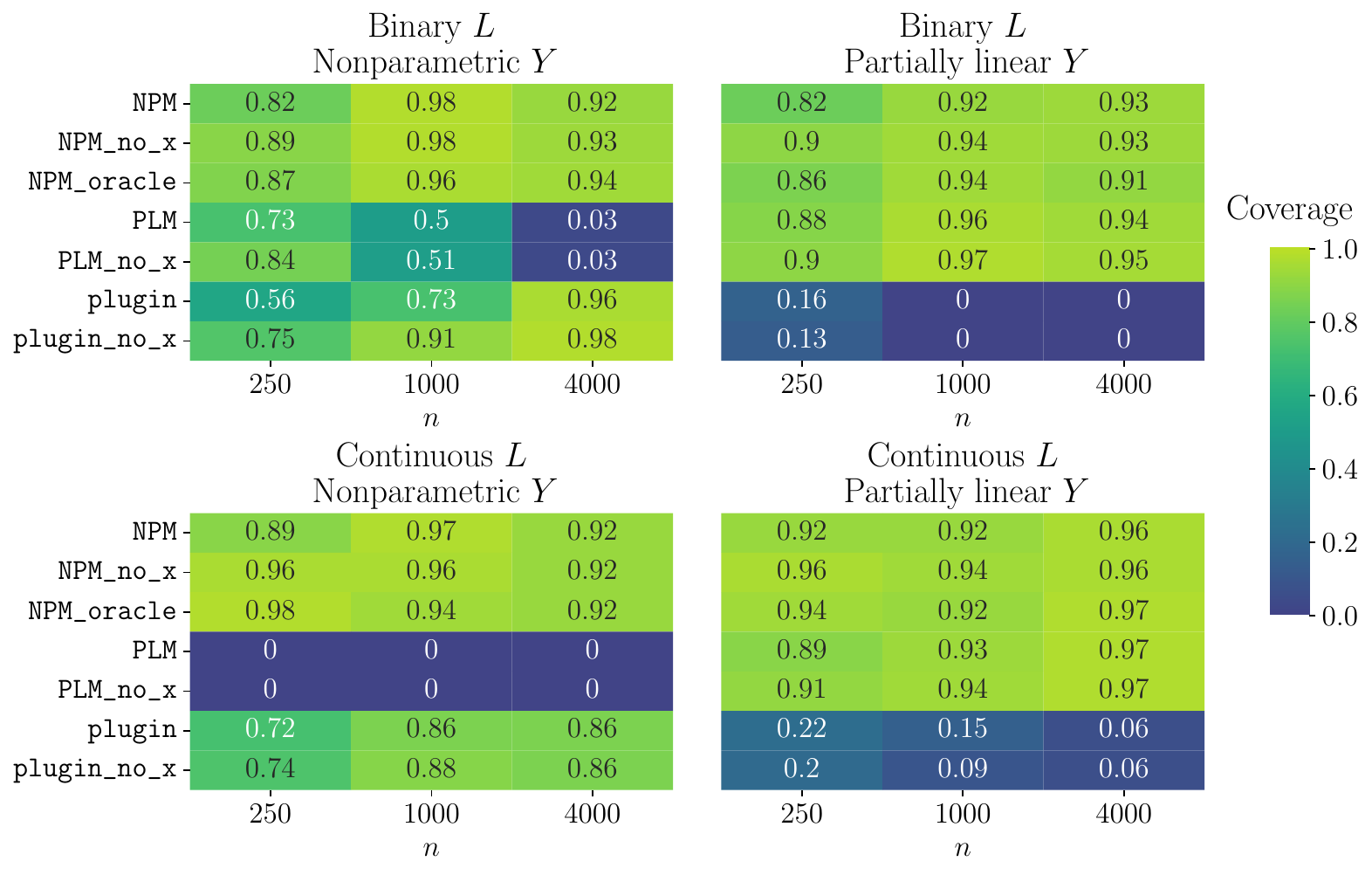}
  \caption{Estimated coverage rates of the $95\%$ confidence intervals for each
  different estimation procedure over $100$ repetitions for each of the settings
  described in Section~\ref{sec:est_exp} with $d=3$. \texttt{PLM} works well if
  $Y$ follows a partially linear model, but can fail if this assumption is not
  satisfied as in the nonparametric $Y$ setting. \texttt{NPM} works in both
  settings but can be less robust for small sample sizes. Further, it can be
  seen that the plug-in estimator and the estimators without cross-fitting do
  not lead to nominal coverage in some of these settings.}
  \label{fig:semiparametric_sim_3}
\end{figure}

\begin{figure}[ht!]
  \centering
  \includegraphics*[width=\textwidth]{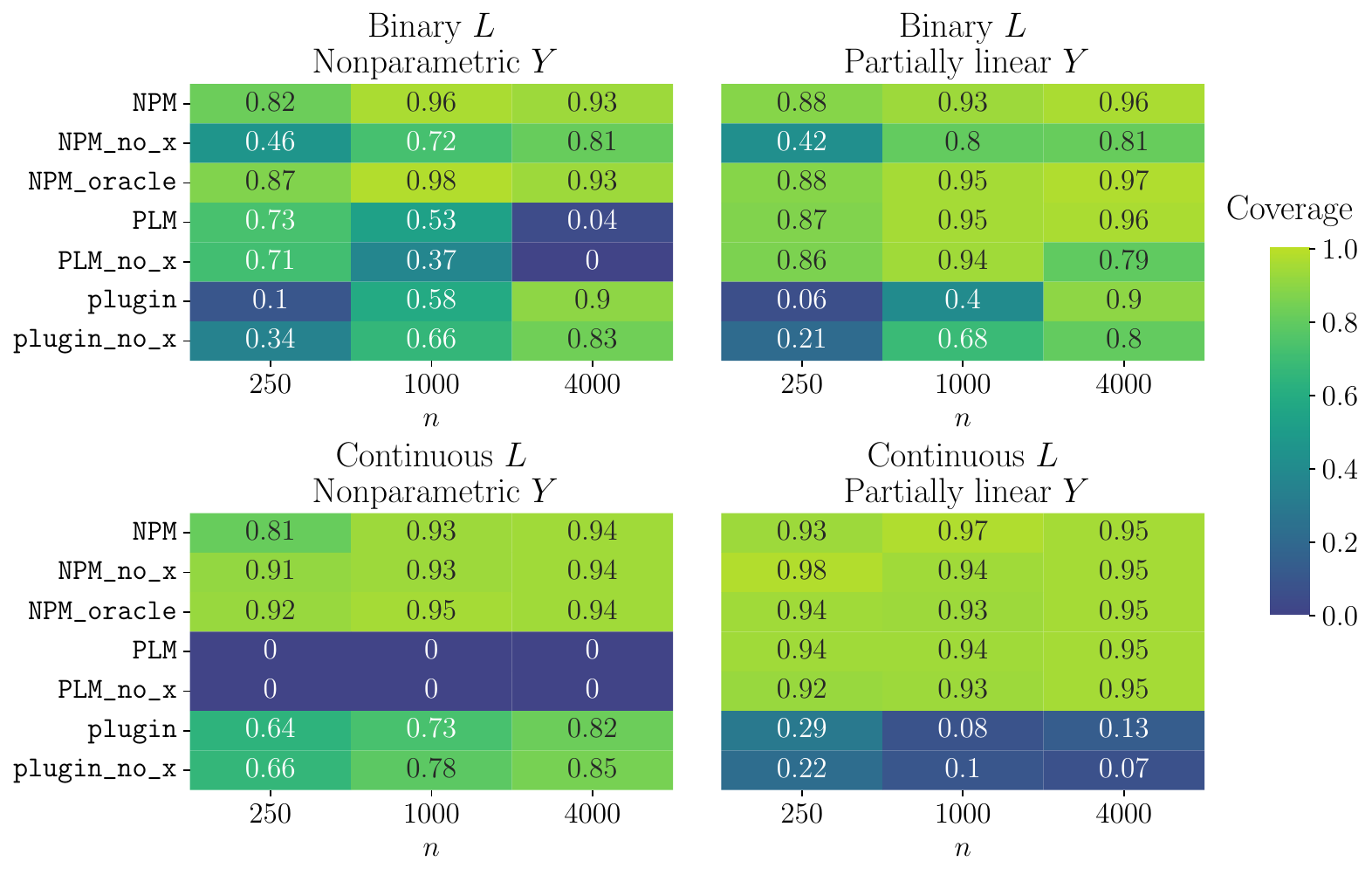}
  \caption{Estimated coverage rates of the $95\%$ confidence intervals over
  $100$ repetitions for each of the settings described in
  Section~\ref{sec:est_exp} with $d=15$. The patterns observed in
  Figure~\ref{fig:semiparametric_sim_3} are repeated here.}
  \label{fig:semiparametric_sim_15}
\end{figure}

\begin{figure}[ht!]
  \centering
  \includegraphics*[width=\textwidth]{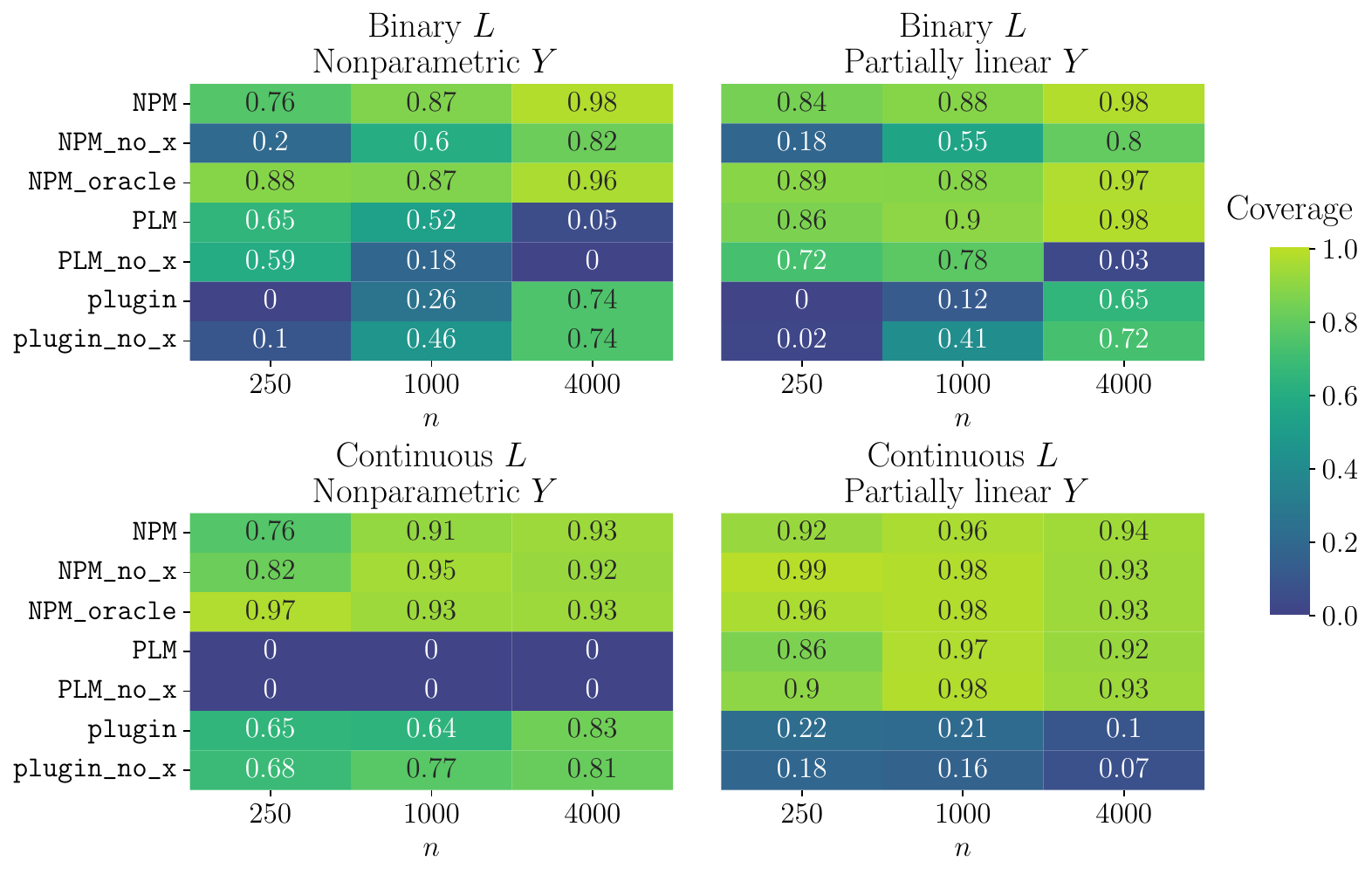}
  \caption{Estimated coverage rates of the $95\%$ confidence intervals over
  $100$ repetitions for each of the settings described in
  Section~\ref{sec:est_exp} with $d=75$. The patterns observed in
  Figure~\ref{fig:semiparametric_sim_3} are repeated here.}
  \label{fig:semiparametric_sim_75}
\end{figure}

In all settings we see that the partially linear model works well when it is
well-specified but can fail when the partially linear assumption is violated. We
see that the plug-in estimator generally performs worse than the one-step
estimator but this disparity is reduced for large $n$. In many cases
cross-fitting is not a necessity, however, in some settings cross-fitting does
positively impact the performance of the confidence intervals, e.g. for $n=1000$
and binary $L$. In the cases where cross-fitting is not necessary, it does not
seem to negatively impact the coverage of the intervals.

\subsection{Description of the construction of groups in Section~\ref{sec:diversity_income}}
\label{sec:diversity_groups}

In this section we describe how we construct the two community-level
grouped datasets analyzed in Section~\ref{sec:diversity_income}.  On the
individual-level data the observed proportions of \texttt{race}
(categorical, 'White', 'Black' and 'Other'), \texttt{education}
(binary, $1$ if strictly greater than high school and $0$ otherwise)
and \texttt{compensation} (binary, $1$ if compensation above 50,000
USD per year and $0$ otherwise) are
\begin{align*}
  (\widehat{\pi}_R(\mathrm{White}),
  \widehat{\pi}_R(\mathrm{Black}),
  \widehat{\pi}_R(\mathrm{Other}))
  &=(0.855, 0.096, 0.049),\\
  (\widehat{\pi}_E(0),
  \widehat{\pi}_E(1)) &= (0.454, 0.546),\\
  (\widehat{\pi}_C(0), \widehat{\pi}_C(1))&= (0.761, 0.239),
\end{align*}
respectively. Now, for the two types of groupings we define target
probabilities $p_i=(p_i^0, p_i^1, p_i^2)\in\Delta^2$ for each
individual $i$ (see below). Based on these probabilities we randomly
draw for each individual $i$ one of the three categories using the
probabilities $p_i$. Finally, we randomly group individuals with the
same categories into communities of approximately $50$ (as many
communities of $50$ and the remaining with $49$) and aggregate the
predictors and response as described in the main text.  The
probabilities $(p_i^0, p_i^1, p_i^2)\in\Delta^2$ are selected, in each
of the two settings, as follows.

\emph{Grouping by race and education:} The probability $p_i^1$ of
observation $i$ belonging to category $1$ is set to
\[
  p_i^1 := p_{R}^1(\texttt{race}_i) \cdot
  p_{E}^1(\texttt{education}_i),
\]
where $\texttt{race}_i$ and $\texttt{education}_i$ denote the race and
education level of the $i$th individual, $p_R^1$ and $p_E^1$ are
probability vectors satisfying
\begin{equation}
  \label{eq:pR1_first}
  p_R^1(\mathrm{White}) \propto
  \frac{0.8}{\widehat{\pi}_R(\mathrm{White})}, \quad
  p_R^1(\mathrm{Black}) \propto
  \frac{0.02}{\widehat{\pi}_R(\mathrm{Black})}, \quad
  p_R^1(\mathrm{Other}) \propto
  \frac{0.18}{\widehat{\pi}_R(\mathrm{Other})},
\end{equation}
and
\[
  p_E^1(0) \propto \frac{0.95}{\widehat{\pi}_E(0)}, \quad p_E^1(1)
  \propto \frac{0.05}{\widehat{\pi}_E(1)}.
\]
That is, category $1$ is about as diverse ($(0.8, 0.02, 0.18)$ is
about as diverse as $\widehat{\pi}_R$) and with lower education
($0.05<\widehat{\pi}_E(1)$). The probability $p_i^2$ of observation
$i$ belonging to category $2$ is set to
\[
  p_i^2 := p_{R}^2(\texttt{race}_i) \cdot
  p_{E}^2(\texttt{education}_i),
\]
where similarly $p_R^2$ and $p_E^2$ are probability vectors but now
satisfying
\begin{equation}
  \label{eq:pR2_first}
  p_R^2(\mathrm{White}) \propto
  \frac{0.65}{\widehat{\pi}_R(\mathrm{White})}, \quad
  p_R^2(\mathrm{Black}) \propto
  \frac{0.12}{\widehat{\pi}_R(\mathrm{Black})}, \quad
  p_R^2(\mathrm{Other}) \propto
  \frac{0.23}{\widehat{\pi}_R(\mathrm{Other})}
\end{equation}
and
\[
  p_E^2(0) \propto \frac{0.3}{\widehat{\pi}_E(0)}, \quad p_E^2(1)
  \propto \frac{0.7}{\widehat{\pi}_E(1)}.
\]
That is, category $2$ is more diverse ($(0.65, 0.12, 0.23)$ is more
diverse than $\widehat{\pi}_R$) and with higher education
($0.7>\widehat{\pi}_E(1)$).  Lastly, the probability $p_i^0$ for
category $0$ is set to $p_i^0:=1-p_i^1 - p_i^2$.

\emph{Grouping by race and compensation:} The probability $p_i^1$ of
observation $i$ belonging to category $1$ is set to
\[
  p_i^1 := p_{R}^1(\texttt{race}_i) \cdot
  p_{C}^1(\texttt{compensation}_i),
\]
where $p_R^1$ is as in \eqref{eq:pR1_first} and $p_C^1$ is a
probability vector satisfying
\[
  p_C^1(0) \propto \frac{0.9}{\widehat{\pi}_C(0)}, \quad p_C^1(1)
  \propto \frac{0.1}{\widehat{\pi}_C(1)}.
\]
That is, category $1$ is about as diverse and with lower compensation
($0.05<\widehat{\pi}_C(1)$).  The probability $p_i^2$ for category $2$
is set to
\[
  p_i^2 := p_{R}^2(\texttt{race}_i) \cdot p_{C}^2(\texttt{compensation}_i),
\]
where $p_R^2$ is as in \eqref{eq:pR2_first} and $p_C^2$ is a
probability vector satisfying
\[
  p_C^2(0) \propto \frac{0.7}{\widehat{\pi}_C(0)}, \quad p_C^2(1)
  \propto \frac{0.3}{\widehat{\pi}_C(1)}.
\]
That is, category $2$ is more diverse and with higher compensation
($0.3>\widehat{\pi}_C(1)$).  Lastly, we
set the probability $p_i^0$ for category $0$ is
$p_i^0:=1-p_i^1 - p_i^2$.  

\subsection{Illustration of semi-synthetic data from Section~\ref{sec:diversity_income}}
\label{sec:additional_adult} 
Figure~\ref{fig:adult} and Figure~\ref{fig:adult_comp} illustrate the
distribution of the semi-synthetic data in Section~\ref{sec:diversity_income}. We
see that the overall relationship between diversity and compensation is negative
but conditioned on $W = \frac{Z-z_{\cen}}{\|Z-z_{\cen}\|_1}$ the effect becomes
positive.

\begin{figure}[ht!]
  \centering
  \includegraphics*[width=\textwidth]{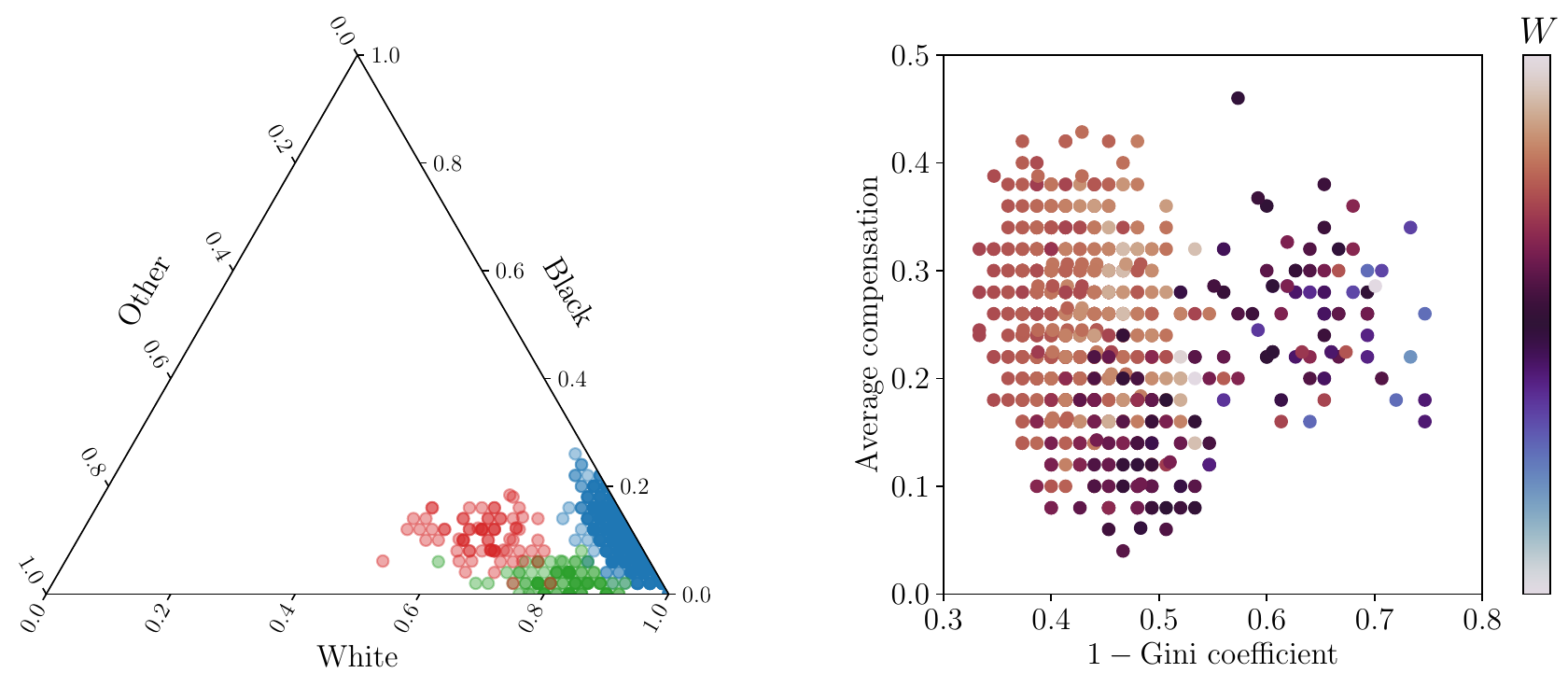}
  \caption{Summary of the semi-synthetic `Adult' dataset when grouping on
    education. Left: Racial compositions in a ternary plot. Each point
    corresponds to one group and is colored according to which category it
    belongs to ($0$ in blue, $1$ in green and $2$ in red). Right: Plot of
    average compensation per group ($Y$) against one minus the  Gini
    coefficient colored by the value of $W = (z_{\cen}-Z) / \|z_{\cen}-Z\|_1$
    ($W$ takes values in a $1$-dimensional subset of $\mathbb{R}^3$ and can be
    represented by a point on unit circle). Depending on the color the
    dependence between $Y$ and the Gini coefficient changes.}
  \label{fig:adult} 
\end{figure}

\begin{figure}[ht!]
  \centering
  \includegraphics*[width=\textwidth]{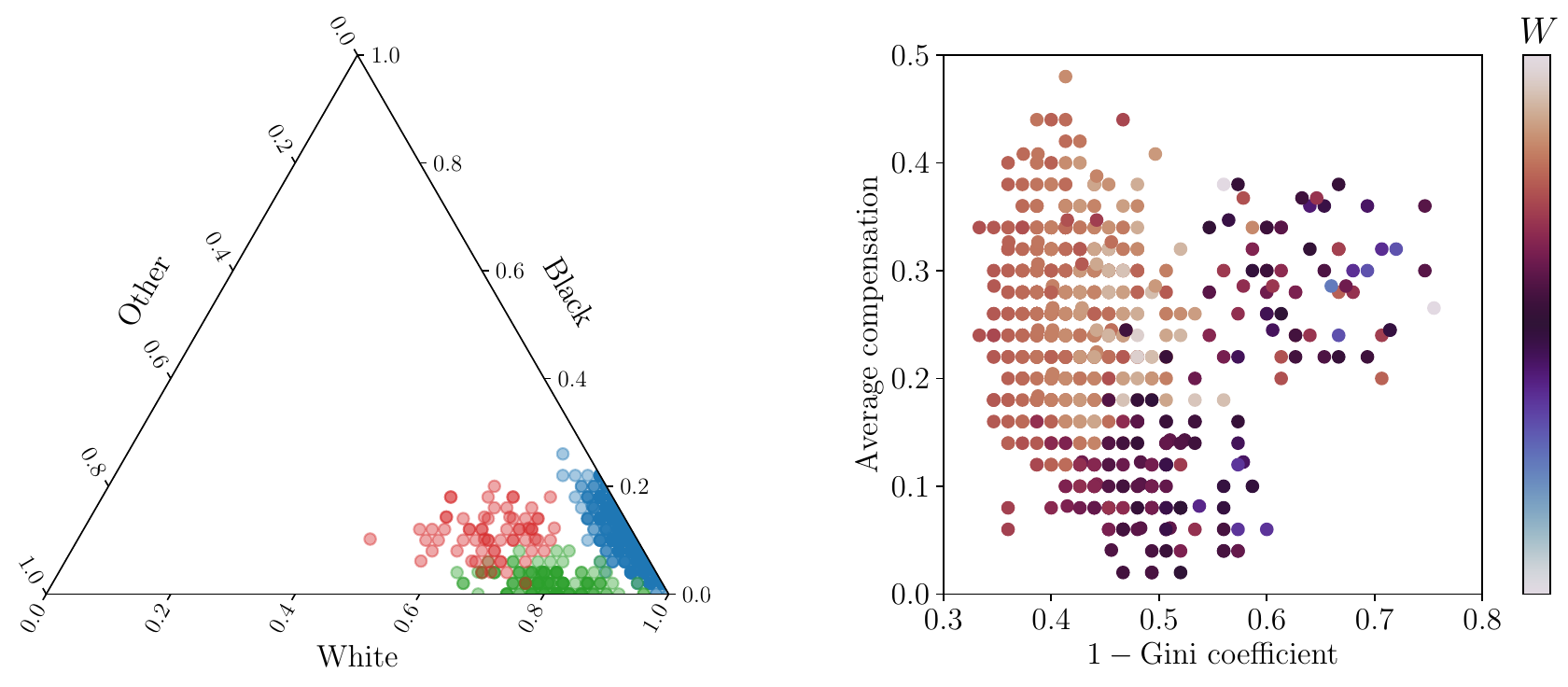}
  \caption{Summary of the semi-synthetic `Adult' dataset when grouping on
    compensation. Left: Racial compositions in a ternary plot. Each point
    corresponds to one group and is colored according to which category it
    belongs to ($0$ in blue, $1$ in green and $2$ in red). Right: Plot of
    average compensation per group ($Y$) against one minus the Gini
    coefficient colored by the value of $W = (z_{\cen}-Z) / \|z_{\cen}-Z\|_1$
    ($W$ takes values in a $1$-dimensional subset of $\mathbb{R}^3$ and can be
    represented by a point on unit circle). Depending on the color the
    dependence between $Y$ and the Gini coefficient changes. }
  \label{fig:adult_comp} 
\end{figure}

\subsection{Semi-synthetic individual-level experiment using groups from
Section~\ref{sec:diversity_income}}
\label{sec:diversity_income_individual}

In this section we consider a variant of the experiment performed in
Section~\ref{sec:diversity_income} where instead of aggregating the full dataset
into groups, we preserve the individual observations but generate new responses
based on the groups. More specifically, we generate the groups as before and
estimate the probability that $Y = 1$ within each group and use this probability
to generate new $Y$s for each observation. For each individual, we thus have a
response $Y$, a race $R$, a `contextual' variable $Z$ that gives the racial
distribution of each group and additional covariates $X$. By construction of the
groups and sampling of the $Y$, we know that there is a positive effect of
diversity on the income. We now apply the \texttt{naive\_diversity} and
$\cdi_{\gini}$ methods as in Section~\ref{sec:diversity_income} except we always
control for the race of inviduals and omit the unstable estimators that
condition on $Z$. The results can be seen in Table~\ref{tab:adult_individual}.

\begin{table}[H]
  \centering
  \begin{tabular}{l|cc}
    \toprule
      \multirow{2}*{Method}  & \multicolumn{2}{c}{Grouping on compensation} \\ 
                             & Estimate & $95\%$ CI \\
    \midrule
    $\texttt{naive\_diversity} \given R$ & $\textcolor{LKred}{\mathbf{-0.064}}$ & $(-0.124, -0.003)$\\
    $\texttt{naive\_diversity} \given R, X$ & $\textcolor{LKred}{\mathbf{-0.064}}$ & $(-0.124, -0.004)$\\
    $\cdi_{\mathrm{Gini}} \given R$ & $\textcolor{LKgreen}{\mathbf{0.459}}$ &$(0.247, 0.671)$\\
    $\cdi_{\mathrm{Gini}} \given R, X$ & $\textcolor{LKgreen}{\mathbf{0.488}}$ & $(0.282, 0.694)$\\
    \bottomrule
  \end{tabular}
  \caption{Table of estimates in the individual-level `Adult' data experiment.
  Bold red numbers indicate a signiﬁcant negative effect of diversity at a 5\%
  level, while bold green indicates a signiﬁcant positive effect. The naive
  diversity method indicates that there is a signiﬁcant negative effect of
  diversity on compensation, even when controlling for individual race ($R$) and
  additional covariates ($X$), despite the fact that the data is constructed
  with a positive effect of diversity. The CDI correctly identifies that the
  effect is positive.}\label{tab:adult_individual}
\end{table}

We see that, as in Section~\ref{sec:diversity_income}, the
\texttt{naive\_diversity} method incorrectly estimates a negative
effect of diversity on income while $\cdi_{\gini}$ correctly
identifies that the effect is positive.

\subsection{Description of additional control variables used in Section~\ref{sec:american_schools}}
\label{sec:american_schools_preprocessing}
The additional background variables in the New York schools data used in
Section~\ref{sec:american_schools} are described below. Unless otherwise
mentioned, no preprocessing was used.

\begin{description}
  \item[School income estimate] The school income estimates were grouped into
  four categories; one for missing values and three more based on quantiles of
  the observed values.
\item[School grade range (grade referring to years of school)] The range of
  grades for the schools were grouped into three categories; one for
  `high grades' where the lowest grade was greater or equal to 6th,
  one for `low grades' where the highest grade was less than or equal
  to 4th and the remaining schools were put in a `mixed' category.
  \item[Student attendance rate] This number was computed as the total number of
  days attended by all students / total number of days on register for all
  students. 
  \item[\% chronically absent students] The percentage of students missing 10\%
  of school days or at least 18 days per year in a 180-day school year. 
  \item[Economic Need Index] Computed as the percentage of students
  in temporary housing plus 0.5 times the percentage of students that are
  eligible for Health Reimbursement Arrangements plus 0.5 times the percentage
  of students eligible for free lunch.
  \item[Community school] An indicator of whether the school is a community
  school. 
  \item[\% English language learners] Percentage of students who are learning
  English for the first time.
  \item[Rigorous Instruction \%] School rating for `how well the curriculum and
  instruction engage students, build critical-thinking skills, and are aligned
  to the Common Core'.
  \item[Collaborative Teachers \%] School rating for `how well teachers
  participate in opportunities to develop, grow, and contribute to the
  continuous improvement of the school community'.
  \item[Supportive Environment \% ] School rating for `how well the school
  establishes a culture where students feel safe, challenged to grow, and
  supported to meet high expectations'.
  \item[Effective School Leadership \%] School rating for `how well school
  leadership inspires the school community with a clear instructional vision and
  effectively distributes leadership to realize this vision'.
  \item[Strong Family-Community Ties \%] School rating for `how well the school
  forms effective partnerships with families to improve the school'.
  \item[Trust \%] School rating for `whether the relationships between administrators, educators, students, and families are based on trust and respect'.
\end{description}

\subsection{Additional experiment on microbiome data in Section~\ref{sec:microbiome}}
\label{sec:microbiome_consistency}
In this section we describe an additional experiment on the microbiome data from
Section~\ref{sec:microbiome}. Our goal is to determinee how sensitive the
proposed perturbation effect estimates are with respect to the employed nuisance
function estimators. To this end, we consider three different regression
methods: random forest (\texttt{rf}), neural network (\texttt{mlp}) and
kernelized support vector regression using an RBF kernel (\texttt{svr}). Each of
the regressions is combined with a grid-search cross-validation to automatically
tune the hyperparameters of each method. For each of the three regression
methods and all the species we then compute $\cfi_{\mult}$, $\cfi_{\unit}$ and
$\cke$ using the estimator based on a partially linear model and the $\cke$
using the nonparametric estimator. As baselines, we additionally compute a
nonparametric $R^2$ \citep{williamson2021nonparametric} and permutation
importance \citep{breiman2001random}. Figure~\ref{fig:microbiome_correlation}
shows the Spearman correlation between the estimates from the different
regression methods.
\begin{figure}
  \centering
  \includegraphics*[width=\textwidth]{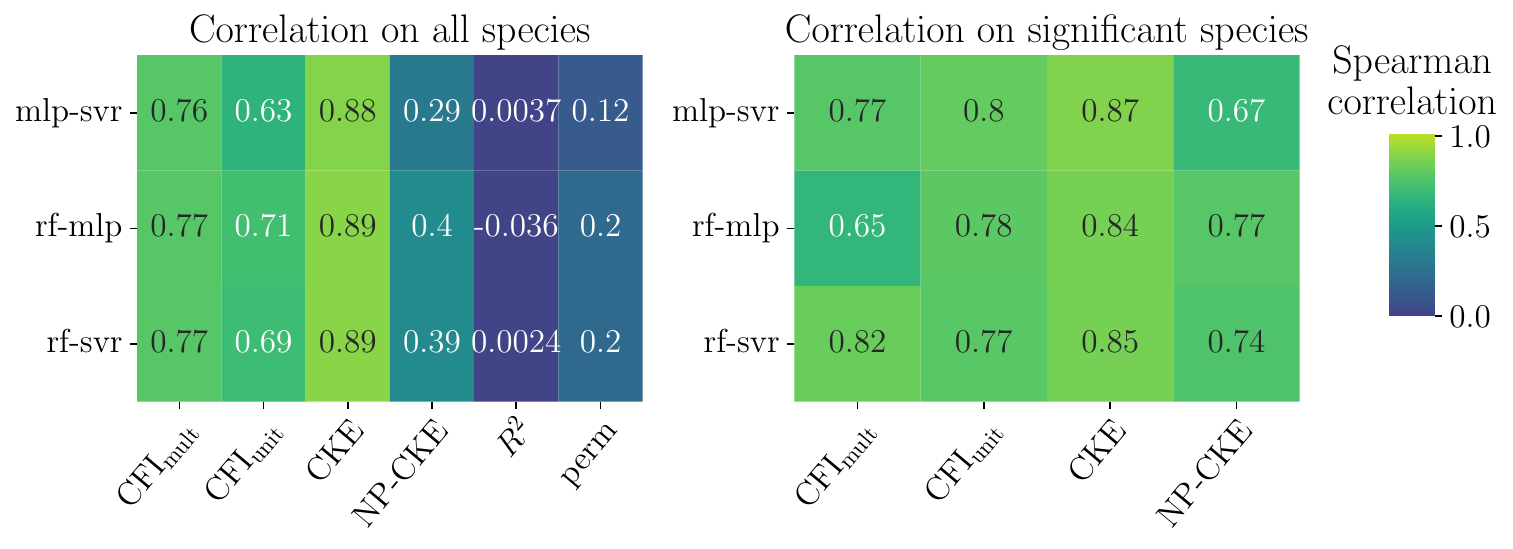}
  \caption{Spearman correlations of variable importance measures
    computed on each feature using different regressions. On the
    left, these are computed on all $561$ features and on the right
    they are computed on the subset of features where at least one of
    the two methods deems the effect significant at a $5\%$ level. The
    $R^2$ and permutation methods do not come with significance tests
    and are therefore omitted from the right plot.}
  \label{fig:microbiome_correlation} 
\end{figure}
While the average perturbation effect estimates which are specifically tailored
to the compositional setting are consistent across different regression methods,
the permutation and $R^2$ measures are -- unsurprisingly -- not.  The population
version of the nonparametric $R^2$ is always $0$, so the estimates are
essentially mean zero random noise that depends on the estimator. For the
permutation measure the inconsistency comes from the fact that the permuted data
inherently leave the support of the simplex, making the resulting influence
measure highly dependent on how the regression methods extrapolate.
Unfortunately, due to a lack of alternatives, it is still common to use these
types feature importance or influence measures in microbiome science
\citep[e.g.,][]{marcos2021applications}. Furthermore, the results show that the
nonparametric estimator of the CKE is less well-behaved than the other
estimators based on the partially linear model. One reason for this could be
that dividing by the propensity score estimator results in a more unstable
estimate, particularly when the propensity score is poorly estimated.

Next, we empirically demonstrate that simply ignoring the simplex constraint is
not an option. To this end we consider a standard approach for estimating
effects of coordinates for unconstrained data. More
specifically, we compare the proposed $\theta_j=\cfi_{\unit}^j$ (with estimator
based on the partially linear model) with a similar estimator but based on the
partially linear model $\E[Y \given Z]=\theta_j Z^j + h(Z^{-j})$ in which the
parameter $\theta_j$ is unidentified due to the simplex constraint. For both
methods we use a regression estimator that uses cross-validation to select the
best estimator out of a linear regression, random forest and support vector
regression. As expected the results from the estimator that ignores the simplex
structure are meaningless; for example, $\var(Z^j)\widehat{\theta}_j^2/\var(Y)$,
a proxy for the variance explained by the linear effect, is larger than
$10^{17}$ for all species. In contrast, for $\cfi_{\unit}$ these values lie
between $10^{-7}$ and $1.6$.

\subsection{Additional plots from gut microbiome analysis in Section~\ref{sec:microbiome}}
\label{sec:additional_microbiome}
In this section we provide additional figures showing our results
from the analysis of the gut microbiome data in
Section~\ref{sec:microbiome}. To investigate how the log-contrast approach
compares to the perturbation-based approach, we plot the most significant hits
using $\cfi_{\mult}$ (with random forests as regression method) and a
log-contrast regression model in the left plot of
Figure~\ref{fig:microbiome_confidence_new}. These are comparable quantities since
$\cfi_{\mult}^j$ is equal to the $j$th coefficient in a log-contrast model when
it is correctly specified and no zeros are present. We omit plotting the
$\ell^1$-penalized log-contrast estimates as these are broadly similar to the
standard log-contrast estimates in this example. We also plot a marginal effect
of $L = \log(Z^j/(1-Z^j))$ which corresponds to predicting $Y$ using OLS on $L$
and correcting for the presence of zeros as we mention in
Remark~\ref{rmk:zeros}. The results with corresponding $95\%$-confidence
intervals are given in the right plot of Figure~\ref{fig:microbiome_confidence};
top $10$ most significant species sorted according to significance of
$\cfi_{\mult}$ (left) and according to log-contrast coefficients (right).

\begin{figure}[!ht]
  \centering
  \includegraphics*[width=\textwidth]{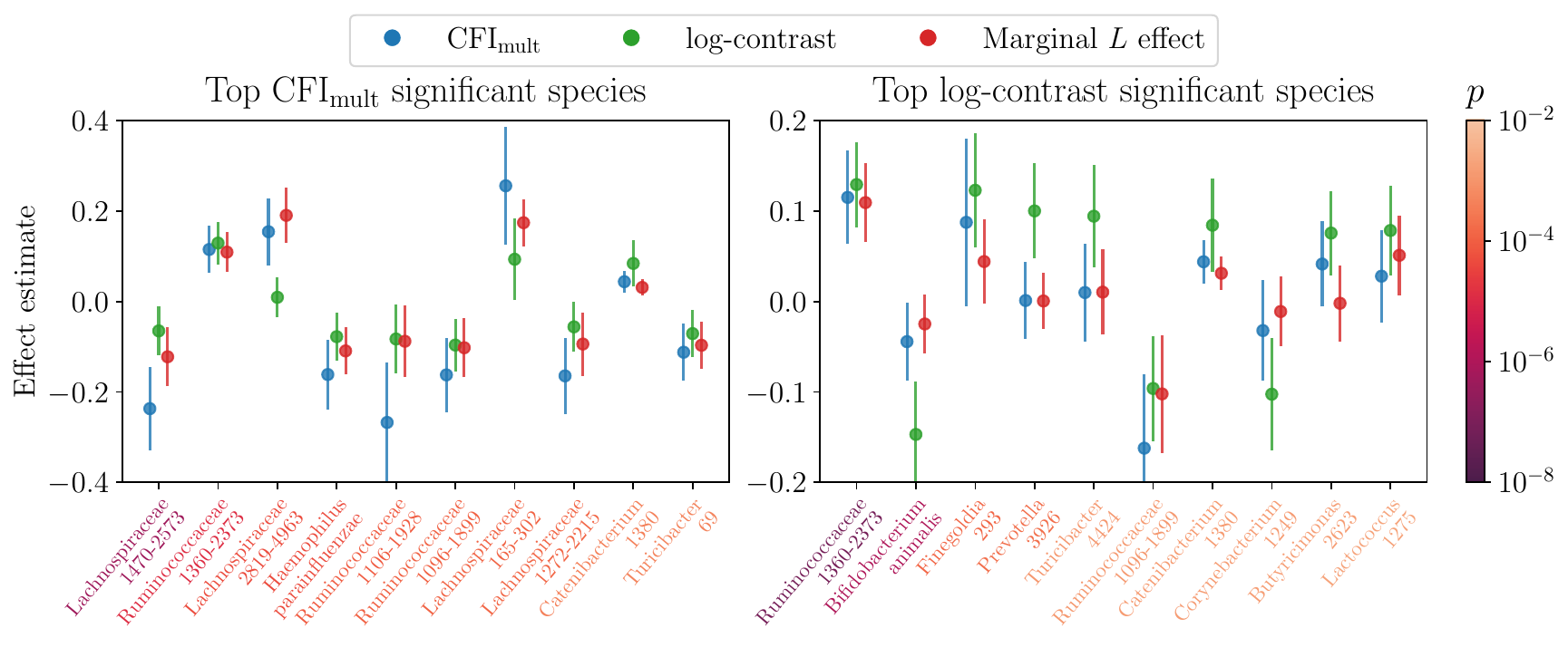}
  \caption{Effect estimates for the $10$ most significant (smallest $p$-value)
    species from $\cfi_{\mult}$ (left) and a log-contrast model with pseudocount
    equal to minimum non-zero observation of $Z$ over $2$ (    right). The
    labels are colored by the size of the $p$-value. The first $5$ species
    remain below $0.05$ after Bonferroni-correction for $\cfi_{\mult}$ while
    this is only true for $2$ for the log-contrast model.}
  \label{fig:microbiome_confidence_new} 
\end{figure}

While the effects overlap for some of the features, there are also
significant differences. In Figure~\ref{fig:microbiome_cke}, we
plot the most significant hits for $\cfi{\unit}$ and $\cke$
both estimated using random forests together with $\cfi_{\mult}$
estimates. We plot the estimates scaled by the estimated standard
deviation of $L$ to put the effects on the same scale (we do not
correct the confidence intervals for the estimation of this standard
deviation). Furthermore, we flip the sign of $\cke$ to keep the
interpretation similar to $\cfi$, that is, that we are looking at the
effect of increasing $Z^j$.

\begin{figure}[!ht]
  \centering
  \includegraphics*[width=\textwidth]{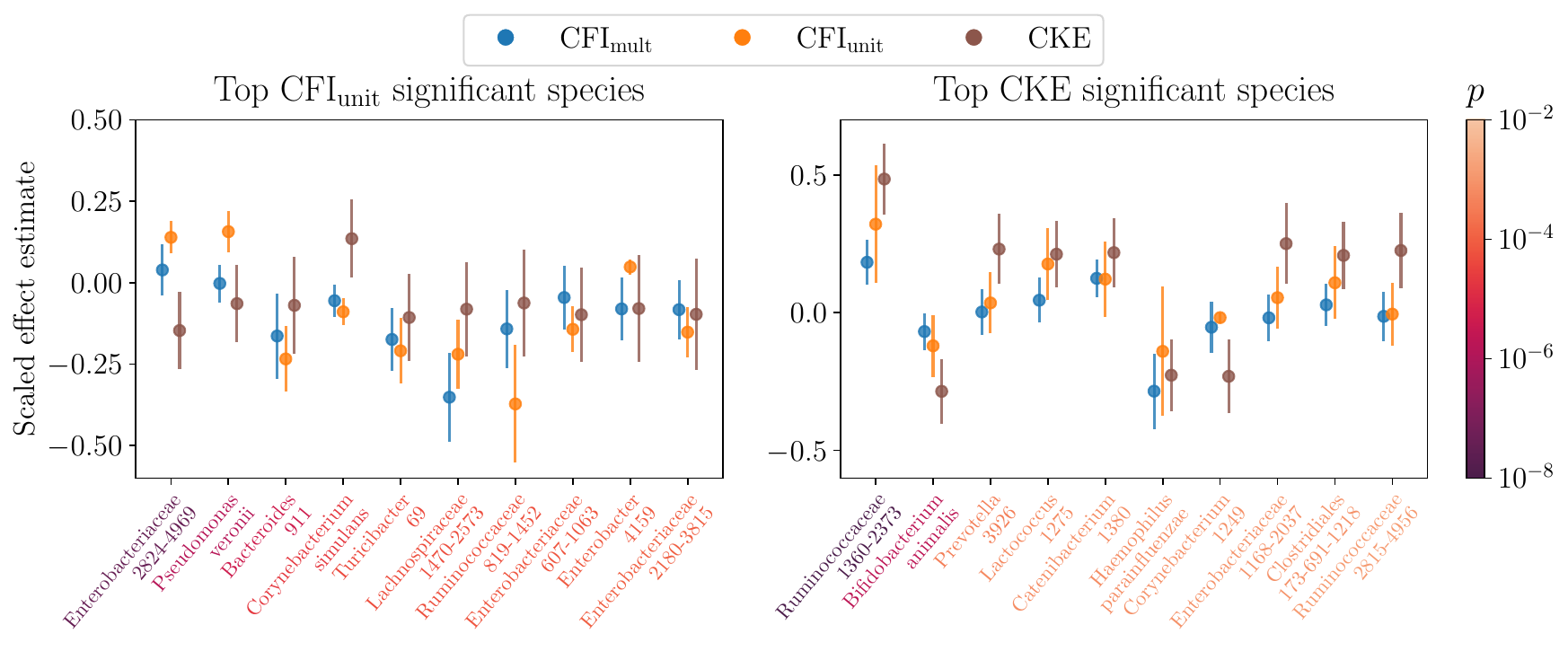}
  \caption{Effect estimates scaled by the estimated standard deviation
    of $L$ for the $10$ most significant (smallest $p$-value) species
    from $\cfi_{\unit}$ (left) and $\cke$ (right). The labels are
    colored by the size of the $p$-value. For the $\cfi_{\unit}$
    results, the first $9$ species remain below $0.05$ after
    Bonferroni-correction while this is only true for $2$ for the
    $\cke$. There is little overlap between the
    selected species across the two methods.}
  \label{fig:microbiome_cke} 
\end{figure}

\end{document}